  \providecommand\BibTeX{{%
    \normalfont B\kern-0.5em{\scshape i\kern-0.25em b}\kern-0.8em\TeX}}}
\newcommand{\E}{I\!\!E}
\newcommand{\gdense}{$\gamma$-clique \xspace}
\newcommand{\Aone}{\textsc{Detect}($\gamma ,\delta$)~}
\newcommand{\twocoreC}{2\hbox{-core}(C)}
\newsavebox{\fmbox}
\newcommand{\eps}{\varepsilon}
\newtheorem*{theorem-non}{Theorem}
\begin{document}
\fancyhead{}
%%
%% The "title" command has an optional parameter,
%% allowing the author to define a "short title" to be used in page headers.
\title{Large Very Dense Subgraphs in a Stream of Edges}

%%
%% The "author" command and its associated commands are used to define
%% the authors and their affiliations.
%% Of note is the shared affiliation of the first two authors, and the
%% "authornote" and "authornotemark" commands
%% used to denote shared contribution to the research.
\author{Claire Mathieu}
%\authornote{Both authors contributed equally to this research.}
\email{claire.mathieu@irif.fr}
\affiliation{
  \institution{CNRS and IRIF}
%  \streetaddress{}
%  \state{}
 % \postcode{}
 % \city{Paris}
}
\author{Michel de Rougemont}
%\authornotemark[1]
\email{mdr@irif.fr}
\affiliation{
 \institution{University Paris II and IRIF}
%   \streetaddress{}
%  \state{}
%  \postcode{}
%  \city{Paris}
}

%%
%% By default, the full list of authors will be used in the page
%% headers. Often, this list is too long, and will overlap
%% other information printed in the page headers. This command allows
%% the author to define a more concise list
%% of authors' names for this purpose.
\renewcommand{\shortauthors}{}

%%
%% The abstract is a short summary of the work to be presented in the
%% article.
\begin{abstract}
  We study the detection and the reconstruction of a large very dense subgraph in a social graph with $n$ nodes and $m$ edges given as a stream of edges, when the graph follows a power law degree distribution, in the regime when $m=O(n. \log n)$. A subgraph $S$ is very dense  if it has $\Omega(|S|^2)$ edges. We uniformly sample the edges with a Reservoir of size  $k=O(\sqrt{n}.\log n)$. Our detection algorithm checks whether the Reservoir has a giant component. We show that if the graph contains a  very dense subgraph of size 
$\Omega(\sqrt{n})$, then the detection algorithm is  almost surely correct. On the other hand, a random graph that follows a power law degree distribution almost surely has no large very  dense subgraph, and the detection algorithm is  almost surely correct.  We define a new model of random graphs which follow  a power law degree distribution and have large very dense subgraphs. We then show that on this class of random graphs we can reconstruct a good approximation of the very dense subgraph with high probability. We generalize these results to dynamic graphs defined by sliding windows in a stream of edges.
\end{abstract}

%%
%% The code below is generated by the tool at http://dl.acm.org/ccs.cfm.
%% Please copy and paste the code instead of the example below.
%%

%\ccsdesc[100]{Networks~Network reliability}
\ccsdesc[500]{Theory of computation~Streaming models}
\ccsdesc[500]{Theory of computation~Sketching and sampling}
%\ccsdesc[500]{Theory of computation~Theory and algorithms for application domains}
%%
%% Keywords. The author(s) should pick words that accurately describe
%% the work being presented. Separate the keywords with commas.
\keywords{Dense subgraphs, clustering, streaming, probabilistic analysis, random graphs, approximation}

\settopmatter{printfolios=true}
%%
%% This command processes the author and affiliation and title
%% information and builds the first part of the formatted document.
\maketitle

This work was partially funded by the grant ANR-19-CE48-0016 from the French National Research Agency (ANR).\\\\
A preliminary version was presented at FODS 2020 (Foundations of Data Science) Conference.
\newpage
\tableofcontents
\newpage
\section{Introduction}

We study the efficient detection of  large dense subgraphs in social graphs, given as a stream of edges, by looking at only a small fraction of the stream. Our viewpoint comes from two constraints. First, given the massive size of social networks, our algorithms cannot store the graph in memory: the space used must be sublinear. Second, the dynamic feature corresponds to the evolution of the social network and our algorithms have to take a single pass over the stream. Given those two constraints, what kind of structure can be detected algorithmically?
 
{\bf Large dense subgraphs.} Social networks such as Twitter evolve dynamically, and dense subgraphs appear and disappear over time as interest in particular events grows and wanes. How can we detect large dense subgraphs efficiently?  The classical density is the ratio $\rho=|E[S]|/|S|$, where $E[S]$ is the set of edges internal to $S$.
In the case of a stream of edges, the approximation of dense subgraphs is well studied in \cite{B15,S15,E15,Mc15} and an $\Omega(n )$ space  lower bound is known \cite{Bah12}.   Social graphs define a specific regime for which we propose a streaming algorithm which uses $O(\sqrt{n}. \log n)$ space. Our density objective is however different.

\begin{definition}\label{gclique}
The {\em $(\gamma,\delta) $-large very dense subgraph} problem, where  $\gamma \leq 1$, takes as input a graph $G=(V,E)$ and decides whether there exists an induced subgraph $S\subseteq V$ such that $|S|>\delta\sqrt{n}$ and $|E[S]|>\gamma |S| (|S|-1)/2$.
\end{definition}

A {\em very dense subgraph}, also called a $\gamma$-clique, is a subset $S$ such that the condition $|E[S]|>\gamma |S| (|S|-1)/2$ holds. If you consider two random nodes of $S$, they are connected by an edge with probability $\gamma$.  
Observe that $\gamma$-cliques have the following nice structure, which does not hold for the usual measure $|E(S)|/|S|$: if $S$ is a \gdense then for any $2\leq i\leq |S|$, there exists a set of size $i$ that is a $\gamma$-clique. In fact, a random subset of $S$ has this property on average.
The  {\em $(\gamma,\delta)$-large very dense subgraph} problem is NP-hard and hard to approximate~\cite{Hastad:1996}, as it contains the maximum clique problem as the special case when $\gamma=1$. This leads us to use a new notion of approximation, adapted to a specific distribution of the inputs.

% \cite{K09}, and we  

{\bf Degree distribution and the Configuration model.}
A scale-free network is a network whose degree distribution asymptotically follows a power law: the fraction of nodes with degree $d$ is proportional to $d^{-O(1)}$ for $d$ tending to infinity. Many real-world networks are thought to be scale-free. In this paper, we focus on social network models with a fixed degree sequence following a power law $d^{-2}$, which is such that the total number of edges is $m=O(n. \log n)$. 
The configuration model takes a feasible degree sequence and generates a multigraph with that degree sequence. Let 
 $\mu$ denote the distribution of simple graphs
obtained with the configuration model applied to the power law degree distribution $d^{-2}$. (We defer the discussion of how to get a simple graph from the multigraph to section~\ref{uniform}).
The configuration model is a standard model used already in sociology in 1938 in a directed version~\cite{morenojennings}, and also for modeling the World Wide Web~\cite{newmanstrogatzwatts}, public opinion formation~\cite{wattsdodds}, etc. 
There are other generative models  such as the Preferential Attachment model~\cite{b99}, the Copy model~\cite{Kleinberg1999}, and many others, see~\cite{kumar2000,AlbertBarabasi2000,ACL00,N10} for example.  

{\bf One-sided-stochastic approximation.} We relax the definition of a randomized algorithm $A$, with a one-sided approximation, where for Yes instances we consider the worst case, but for No instances we  only consider random inputs for $\mu$.  Indeed, a standard way to relax a decision problem would be to design an $\epsilon$-tester that decides whether the input is a Yes instance or is $\epsilon$-far from Yes instances, i.e. within edit distance at least $\epsilon m$ from any Yes instance; but for our problem, no graph is $\epsilon$-far from Yes instances, so the notion of $\epsilon$-tester is not the right one here.  

A {\em one-sided stochastic 
randomized} Reservoir algorithm $A$ for a language $L$ satisfies the following two conditions:
\begin{itemize}
\item For all $x\in L$, $Prob_{\Omega} [ A(x)~ accepts] \geq 1-\eps$

\item If $x\not\in L$ is drawn from $\mu$, $Prob_{\mu\times\Omega} [ A(x)~ rejects] \geq 1-\eps$

\end{itemize}
where $\Omega$  is the set of possible Reservoirs (subsets of edges with prescribed cardinality), with a uniform distribution.
%We  take  the configuration model of random graphs which  follow a power law degree distribution as $\mu$  and 
We design a one-sided randomized streaming algorithm for the $(\gamma,\delta) $-large very dense subgraph problem and believe that this approximation for a distribution of inputs can also be useful in some other contexts.

{\bf Static results.}  We study how to decide the presence of large very dense subgraphs with a Reservoir  sampling \cite{V85} and how to reconstruct such a very dense subgraph from the samples\footnote{In this paper, whenever we speak of a ''dense subgraph'', we mean a {\em large} very dense subgraph, where the minimum size to qualify as large is specified in the theorem statements.}.
 We study the existence of giant components in the Reservoir using random graph techniques, adapted to graphs with this degree distribution. Indeed, the Reservoir is a random uniform sample of a power law graph, whereas the Erd\"os-Renyi model is a random uniform sample of the complete graph.  If the graph has a very dense subgraph $S$ of size $\Omega(\sqrt{n}  )$, then a Reservoir of size $\Omega(\sqrt{n}.\log n)$ has a giant component (Theorem~\ref{tConcentratedSize}). In order to detect the existence of a very dense subgraph, our first algorithm (Algorithm \Aone) simply checks whether there is a large enough connected component in the Reservoir.  The analysis relies on a Theorem by Molloy and Reed on asymptotic sequences, but there are additional difficulties here, due to the fact that our sequences are not deterministic but random.

We then analyze the situation when the graph does not have a large very dense subgraph: we take the configuration model of random graphs
which  follow a power law degree distribution (\cite{N10}, see also next section). We observe that in this case there is no giant component in the Reservoir (Lemma~\ref{lemma:noclique}) and that Algorithm \Aone is correct (Theorem~\ref{t1}).  In Corollary~\ref{corollary:lowerbound} we show an $\Omega(\sqrt{n})$ space lower bound. 

Given a graph that has a very dense subgraph, how can we not only detect its existence but also reconstruct it? We  propose a simple algorithm (Algorithm 2) that uses the $2$-core of the largest connected component of the Reservoir. We define a variant of the configuration model, giving a preference to edges inside a certain subset of the vertices, so that the graph contains a very dense subgraph,  and show that in this case Algorithm 2  reconstructs an approximation of the very dense subgraph (Theorem \ref{thm:estimation}.)

{\bf Social networks dynamics and dynamic results.} We consider "bursty" windows when the number of edges in each window varies. Uniform sampling in a window is the Reservoir sampling \cite{V85} and it has been generalized to overlapping windows in \cite{B02,B09}.
We  turn to the dynamic case with sliding windows.  We show in Corollary \ref{t0d} that we can detect the existence of a large very dense subgraph by generalizing Algorithm \Aone (Algorithm 3).  The configuration model can be generalized to
dynamic random graphs (Section~\ref{section:dynamic}), and then none of the successive Reservoirs  have a giant component,  Corollary \ref{t0duni}. In the 
concentrated case for some time interval $\Delta$, the random graphs during that interval  have a very dense subgraph $S$ and Algorithm 4 approximates $S$, Corollary \ref{t1d}.  

{\bf Plan of the paper.} In section 2, we review the large dense subgraphs, Reservoir sampling for dynamic graphs and random graphs. In section 3 we describe random graphs which follow a power law degree distribution with or without  large very dense subgraph. In section 4 we present the Algorithm \Aone and its analysis  on positive and random instances.
 In section 5, we study the space lower bounds. In section 6, we show how to reconstruct a good approximation of large very dense subgraph from the samples. In section 7, we generalize the approach to dynamic graphs defined by sliding windows.

%%%%%%%%%%%%%%%%%%%%%%%%
\section{Preliminaries}
Throughout the paper, we  ignore the rounding of our parameters to the nearest integer, when it only has an impact on second-order terms. For example, we look for a very dense subgraph with at least $\delta \sqrt{n}$ nodes.
We approximate the existence of a very dense subgraph in the regime where $m=O(n.log n)$, observed in social graphs. 

\subsection{Large dense subgraphs}

There are several definitions of a cluster in a graph  \cite{Ag10} and our definition \ref{gclique} assumes a large $\gamma$-clique, as setting the value of the parameter $\gamma$ to 1 corresponds to a clique for $S$. 
Studies of the Web graph have previously used large bipartite cliques as a defining marker of Web communities~\cite{Kleinberg1999}. This differs from a common definition of dense subgraph according to which 
 the average degree within the subgraph, $\rho(S)=E(S)/|S|$, must be greater than some fixed threshold $\lambda$.
A classical NP-hard~\cite{K09} optimization problem is to approximate   $\rho^*=Max_S \{ \rho(S)\}$ and to find possible witnesses $S^*$ for  $\rho^*$, in particular when the graph is given as a stream of edges \cite{Mc15,B15,E15,Mc15}.  If the entire graph is known, there are several classical  techniques to find large   \gdense subgraphs. 
In this context,  it is hard   to approximately detect subgraphs $S$ with large value of $|E(S)|/|S|$:
 an $\Omega(n)$ space lower bound is known \cite{Bah12} based on the   multiparty Disjointness problem 
\cite{Bar04,Cha03}. As we are interested in large  dense subgraphs,  i.e. $|S| \geq \Omega(\sqrt{n})$, we observe in Corollary~\ref{lowerbound} that in this case  the space lower bound is reduced to $\Omega(\sqrt{n})$. 

For a large very dense subgraph, we show in Corollary~\ref{corollary:lowerbound} that the same space lower bound  $\Omega(\sqrt{n})$ applies. For a comparison, our algorithm uses space $O(\sqrt{n}.\log n)$.

\subsection{Sampling social graphs and dynamic graphs}

Social graphs have a specific structure: a specific degree distribution  (a power law), and some very dense  subgraphs.
In our framework, we do not store the entire graph as we sample the edges and will only approximate these very dense subgraphs.
 For example, imagine that the entire graph is a complete graph $S$. Each edge  is selected in the Reservoir with probability $\frac{k}{m}$, so the Reservoir follows the Erd\"{o}s-Renyi model $G(n,p)$ (see next section) where $n=|S|$ and $p=\frac{k}{m}$. It is well-known that there is a phase transition  at $p=1/n$, for the emergence of a giant component. We first  show that for a   \gdense subgraph, a giant component emerges at $p=1/\gamma.n$. We then study sufficient conditions to observe a giant component in the Reservoir.

Dynamic graphs and  models of densification  in social graphs have been studied in \cite{L05}. Dynamic algorithms approaches are presented in \cite{G10}. We consider sliding windows defined 
by two parameters: the time length $\tau$, and  the time  step $\lambda <\tau$ such that 
$\lambda$ divides $ \tau$.
In a stream of edges  $e_1, e_2,....e_i...$, each edge has a timestamp. Let $t_1=\tau$ and  $t_i=\tau+\lambda.(i-1)$ for $i>1 $. Let $G_i$ denote the graph defined by the edges whose timestamps are  in the time interval $[t_i -\tau, t_i]$. We write $G(t)$ for this sequence. The graphs $G_{i+1}$ and $G_{i}$ share many edges: old edges of $G_i$ are removed and new edges are added to  $G_{i+1}$. 

A Reservoir sampling \cite{V85} reads a stream of edges $e_1, e_2,...e_i,...e_n$ and selects
$k$  edges uniformly ($k \leq n$), i.e. each edge is chosen with probability $\frac{k}{m}$.  It first initializes the Reservoir to $e_1, e_2,....e_k$. Then for each $i$ such that $k <i \leq n$, it selects $e_i$ with probability $k/i$. If $e_i$ is selected, it replaces a random element of the Reservoir, selected with probability $1/k$.
In a dynamic window $w_i$, we assume we can maintain $k$ uniform samples in each  Reservoir $R_i$, using  techniques  presented in \cite{B02, B09}.

\subsection{Random graphs}

\subsubsection{Erd\"os-Renyi graphs}\label{rg}
The classical Erd\"{o}s-Renyi model $G(n,p)$ \cite{E60}, generates  random graphs with $n$ nodes and edges are taken independently with probability $p$ where $0<p<1$.  The degree distribution is close to a Gaussian centered on $n.p$.  
 A {\em giant component} is a connected component with $\Theta(n)$ vertices.  A classical study is to find sufficient conditions so that the random graph has a {\em giant component}. In  the Erd\"{o}s-Renyi model $G(n,p)$, it requires that $p>1/n$. If  $p=c/n$ with $c>1$, we can be more precise on the size of the giant component $C$ and the size of the $\twocoreC$. 
\begin{definition}
The $\twocoreC$ is obtained from $C$ by removing nodes of degree 1, iteratively.
\end{definition}

\begin{theorem}  [{\bf 6 }  from \cite{PW05}] \label{er:bound}
Let $c>1$ be fixed. Consider the Erd\"{o}s-Renyi model $G(n,p)$ with $p=c/n$. Let $C$ denote the largest connected component and $\twocoreC$ its 2-core. 
There exists $b,t$ such that $t.e^{-t}=c.e^{-c}$ and $b=1-t/c$ such that:
$$lim_{n \rightarrow \infty} \E|C|/n = b$$
$$lim_{n \rightarrow \infty} \E|\twocoreC|/n = b\cdot(1-t)$$
In addition, variables  $|C|$ and 
$ |\twocoreC|$ are both Gaussian in the limit. 
\end{theorem}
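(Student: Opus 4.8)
The plan is to prove the two limiting‑mean statements by a branching‑process / local‑weak‑convergence argument and the two Gaussianity statements by a fluctuation analysis of the relevant exploration processes; only marginal central limit theorems are actually needed, but the natural proof delivers the joint law of $(|C|,|\twocoreC|)$. First, for $\E|C|/n\to b$: run a breadth‑first exploration of $G(n,c/n)$ from a uniformly random vertex and track the rescaled number of active vertices. It converges uniformly on compact sets to a deterministic curve whose relevant zero is $b$, where $b$ is characterised by $e^{-cb}=1-b$ --- equivalently $b=1-\xi$ is the survival probability of the Poisson$(c)$ Galton--Watson process, with $\xi=e^{-c(1-\xi)}$. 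Setting $t:=c\,\xi=c(1-b)$ one checks directly that $t\,e^{-t}=c\,e^{-c}$, that $0<t<1<c$, and that $b=1-t/c$, matching the parametrisation in the statement. Combined with the standard facts that for $c>1$ the giant component is unique and the second‑largest component has $O(\log n)$ vertices, this gives that a uniformly random vertex lies in $C$ with probability tending to $b$, hence $\E|C|/n\to b$.

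Second, for $\E|\twocoreC|/n\to b(1-t)$: a vertex $v$ survives the iterated deletion of vertices of degree $\le 1$ precisely when at least two of the edges incident to $v$ lead, in $G$, to a cycle --- equivalently, when at least two of the explorations started at the neighbours of $v$ and forbidden to revisit $v$ are ``infinite'', i.e.\ reach the giant. Since $G(n,c/n)$ converges locally to the Poisson$(c)$ Galton--Watson tree, the giant is unique, and the far endpoint of an edge has Poisson$(c)$ further neighbours (the size‑biased Poisson minus one is again Poisson$(c)$), each such direction is infinite with asymptotic probability $b$, and the directions at $v$ are asymptotically independent. Hence $\Pr[v\in\twocoreC]\to\Pr[\mathrm{Poisson}(cb)\ge 2]=1-e^{-cb}-cb\,e^{-cb}$, which, using $cb=c-t$ and $e^{-cb}=t/c$ (both immediate from $t\,e^{-t}=c\,e^{-c}$), equals $(1-t)(1-t/c)=b(1-t)$. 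The $2$‑cores of the non‑giant unicyclic components contribute only $O(\log n)$ vertices in expectation, so $\E|\twocoreC|/n\to b(1-t)$.

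Third, the Gaussian limits. For $|C|$, the rescaled exploration walk obeys a functional central limit theorem around its fluid limit (a martingale CLT / diffusion approximation), and $|C|$ is, up to lower‑order terms, a hitting‑time functional of this walk; since the fluid limit crosses zero transversally at $b$, the delta method yields that $(|C|-bn)/\sqrt n$ converges to a centred Gaussian. For $|\twocoreC|$, one option is to analyse the peeling process that repeatedly removes degree‑$1$ vertices as a density‑dependent Markov chain --- tracking, say, the number of surviving vertices together with the number of current degree‑$1$ vertices --- and to invoke Wormald's differential‑equation method with its Gaussian refinement; the route of \cite{PW05} is instead to establish precise asymptotic formulas for the number of connected graphs on a given vertex set with a prescribed number of edges and a prescribed $2$‑core size, and to read off the joint local limit law of $(|C|,|\twocoreC|)$ by a saddle‑point / local‑central‑limit computation, whence the marginal Gaussianity of $|\twocoreC|$.

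The main obstacle is the Gaussian statement for $|\twocoreC|$: the two means and the CLT for $|C|$ follow from fairly standard tools, but the $2$‑core peeling process has a genuinely higher‑dimensional state (one must control the residual degree sequence, not just a single count), so upgrading a law of large numbers to a $\sqrt n$‑Gaussian scaling requires either a provable second‑order correction in the differential‑equation method or the intricate saddle‑point analysis of the enumeration formula. One must also carry both quantities on a single probability space so that the error terms in the means and in the $\sqrt n$ scalings stay mutually consistent --- which is exactly what licenses the joint, and hence marginal, asymptotic normality asserted in the statement.
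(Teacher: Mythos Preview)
The paper does not prove this theorem at all: it is quoted verbatim as a result from \cite{PW05} (Pittel--Wormald), with the remark that a similar result appears as Theorem~5.1 in \cite{DLP14}, and is then used as a black box in Section~\ref{Sestimate}. There is therefore no ``paper's own proof'' to compare against.

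Your sketch is a reasonable outline of one way to derive these classical facts, and your algebra checking that $\Pr[\mathrm{Poisson}(cb)\ge 2]=b(1-t)$ via $cb=c-t$ and $e^{-cb}=t/c$ is correct. You are also candid that the hard part is the Gaussian fluctuation for $|\twocoreC|$, and you correctly note that Pittel and Wormald obtain it by an enumerative saddle-point analysis rather than by a process-level CLT. But none of this is relevant to the present paper: for its purposes the statement is an imported tool, and no proof is expected or supplied here.
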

We will use this result in section \ref{Sestimate}. A similar result 
(Theorem 5.1) is given in \cite{DLP14}.
When $c$ is a large constant (tending slowly to infinity), $t$ is close to 0, $\Theta( c e^{-c})$, and so $b=1-\Theta(e^{-c})$ tends to 1, therefore both $C$ and $\twocoreC$ have size tending to $n$ with high probability.
We use the following result which generalizes the Erd\"{o}s-Renyi condition to observe a giant component in a graph $G$.  Let $d_u$ be the degree of a node $u$.

When we write that a property holds \emph{almost surely}, we mean that the probability is at least $1-o(1)$ as $n$ goes to infinity. \footnote{In some other contexts, it is sometimes called {\em asymptotically almost surely}. }

\begin{theorem}  [ \cite{C09}] \label{gamma}
  Let ${\bar d}=    \frac{\sum_u d_u^2}{\sum_u d_u}$. Let $\epsilon>0$ be fixed. Let $\widehat{G}$ denote the random subgraph of $G$ obtained by choosing each edge with probability $p$. If  $p>(1+\eps)/{\bar d}$ then  $\widehat{G}$  almost surely has a giant component.
\end{theorem}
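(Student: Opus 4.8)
I would follow the classical Molloy--Reed / Chung--Lu route: expose $\widehat G$ by a breadth-first search from a random vertex, couple the exploration with a Galton--Watson branching process, show that the branching process is supercritical, and then upgrade ``a positive fraction of vertices in moderately large components'' to ``one giant component'' by a two-round (sprinkling) argument.

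\emph{Step 1 (exploration and coupling).} Start a BFS in $\widehat G$ from the uniformly random end-point $u$ of a uniformly random edge of $G$; then $u$ has the size-biased degree law, so $\mathbb{E}[d_u]=\sum_v d_v^2/\sum_v d_v=\bar d$. As long as the set of already-explored vertices is small, the number of freshly discovered neighbours of a vertex $w$ reached along a BFS edge is stochastically close to $\mathrm{Bin}(d_w-1,\,p)$, and $d_w$ itself is size-biased; hence, until it has found some $K=n^{2/3}$ vertices, the exploration stochastically dominates a Galton--Watson process with offspring mean $\mu=p(\bar d-1)$. The hypothesis $p>(1+\eps)/\bar d$ gives $\mu>(1+\eps)(1-1/\bar d)$, which is bounded away from $1$ as soon as $\bar d$ is large (in particular in the power-law regime of this paper, where $\bar d\to\infty$); so the branching process is supercritical and survives with some probability $q=q(\eps)>0$.

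\emph{Step 2 (linearly many vertices in large components, then merging).} A supercritical Galton--Watson process conditioned to survive grows at least geometrically, so with probability $q-o(1)$ the BFS reaches $K$ vertices; this gives, in expectation, $\Omega(n)$ vertices lying in components of size $\ge K$. Concentration — a second-moment estimate on pairs of vertices lying in a common component, or the bounded-differences inequality applied to the edge-exposure martingale — upgrades this to an almost sure statement. A sprinkling argument then finishes: write $1-p=(1-p_1)(1-p_2)$ with $p_1$ still above $(1+\eps/2)/\bar d$ and $p_2=\Theta(\eps)$, run Step~1 with $p_1$ to produce the (at most $O(n^{1/3})$) components of size $\ge K$, and observe that in the independent second round each pair of them fails to be joined by a $p_2$-edge of $G$ with only exponentially small probability, so almost surely they all coalesce into a single component of size $\Theta(n)$, i.e. a giant component.

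\emph{Main obstacle.} The delicate part is making ``as long as the explored set is small'' precise: one must bound the error in the branching-process coupling caused by repeatedly hitting high-degree vertices and by edge depletion, which requires control of the heavy tail of the degree sequence — exactly the second moment $\sum_u d_u^2$ entering $\bar d$, together with a bound such as $d_{\max}=O(\sqrt n)$. A secondary subtlety is the merging step, which needs the host graph $G$ to contain enough edges between any two linear-sized vertex sets; for the configuration-model graphs studied in this paper this holds almost surely, but in full generality the statement is underpinned by an additional expansion/spectral assumption on $G$.
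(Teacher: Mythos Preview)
The paper does not give its own proof of this statement: Theorem~\ref{gamma} is quoted from \cite{C09} and is used as a black box (its only application in the paper is the two-line derivation of Lemma~\ref{delta}). So there is nothing in the paper to compare your proposal against.

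That said, your sketch is essentially the standard route by which such results are established in the Chung--Lu framework: size-biased exploration, coupling with a supercritical Galton--Watson process, and sprinkling to merge the large components. Your ``main obstacle'' paragraph is on point and in fact slightly sharper than the paper's statement deserves: as you note, for an \emph{arbitrary} host graph $G$ the sprinkling step is not automatic, and the cited result in \cite{C09} indeed carries an additional admissibility/expansion hypothesis on $G$ (roughly, a spectral-gap or volume-growth condition) that the paper has suppressed in its paraphrase. For the graphs actually used later in the paper (configuration-model graphs with a Zipf degree sequence, or a $\gamma$-clique $S$) this hypothesis is satisfied, so the omission is harmless for the applications, but you are right to flag it.
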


A direct consequence which we will use in section \ref{aa} is the following lemma:

\begin{lemma}\label{delta}
Let $\epsilon>0$ be fixed. If $S$ is a   \gdense and $p> \frac{1+\epsilon}{\gamma.|S|}$ then the random subgraph $\widehat{S}$ obtained in $G(|S| ,p)$,  almost surely has a giant component.
\end{lemma}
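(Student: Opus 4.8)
The plan is to derive Lemma~\ref{delta} as an immediate application of Theorem~\ref{gamma} to the graph $G=S$ itself, so that the only real work is to lower-bound the parameter $\bar d=\frac{\sum_u d_u^2}{\sum_u d_u}$ of $S$, where the sums run over the vertices $u$ of $S$ and $d_u$ is the degree of $u$ inside $S$. Here I read $\widehat S$ as the random subgraph of $S$ obtained by keeping each edge independently with probability $p$, which is exactly the object $\widehat G$ of Theorem~\ref{gamma} with $G=S$.

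First I would observe that $\bar d$ is always at least the average degree: by Cauchy--Schwarz, $\bigl(\sum_u d_u\bigr)^2\le |S|\sum_u d_u^2$, hence
\[
\bar d \;=\; \frac{\sum_u d_u^2}{\sum_u d_u}\;\ge\;\frac{\sum_u d_u}{|S|}\;=\;\frac{2\,|E[S]|}{|S|}.
\]
Since $S$ is a \gdense, $|E[S]|>\gamma|S|(|S|-1)/2$, and therefore $\bar d>\gamma(|S|-1)$. Now fix $\epsilon'=\epsilon/2$. Because $S$ qualifies as a large very dense subgraph, $|S|\ge\delta\sqrt n\to\infty$, so for all sufficiently large $n$ we have $|S|\ge 2(1+\epsilon)/\epsilon$, which is equivalent to $(1+\epsilon')|S|\le(1+\epsilon)(|S|-1)$, i.e. to $\frac{1+\epsilon'}{\gamma(|S|-1)}\le\frac{1+\epsilon}{\gamma|S|}$. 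Chaining this with the hypothesis $p>\frac{1+\epsilon}{\gamma|S|}$ and with $\bar d>\gamma(|S|-1)$ gives
\[
p\;>\;\frac{1+\epsilon}{\gamma|S|}\;\ge\;\frac{1+\epsilon'}{\gamma(|S|-1)}\;>\;\frac{1+\epsilon'}{\bar d},
\]
so Theorem~\ref{gamma}, applied to $S$ with constant $\epsilon'$, shows that $\widehat S$ almost surely has a giant component, which is the claim.

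The only mildly delicate point, and the only thing beyond bookkeeping, is the off-by-one slack between the $\gamma|S|$ in the hypothesis and the $\gamma(|S|-1)$ coming out of the average-degree bound; it is absorbed by passing from $\epsilon$ to $\epsilon'=\epsilon/2$, which is harmless precisely because $|S|$ grows with $n$. I would also note that no control on the internal degree sequence of $S$ is needed: Cauchy--Schwarz bounds $\bar d$ below by the average degree regardless of how skewed the degrees inside $S$ are, so a \gdense dominated by a few high-degree vertices is handled identically. I expect this to be essentially the whole proof; all the probabilistic substance already lives in Theorem~\ref{gamma}.
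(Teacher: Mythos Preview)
Your proof is correct and follows essentially the same route as the paper: both lower-bound $\bar d$ by the average degree (you via Cauchy--Schwarz, the paper via the equivalent observation that $\sum_u d_u^2$ is minimized when degrees are uniform) and then invoke Theorem~\ref{gamma}. The only difference is that the paper writes $\bar d\ge\gamma|S|$ directly, silently absorbing the $|S|$ versus $|S|-1$ discrepancy, whereas you track it explicitly and discharge it with $\epsilon'=\epsilon/2$; your treatment is the more careful of the two.
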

\begin{proof}
Notice that ${\bar d}=    \frac{\sum_u d_u^2}{\sum_u d_u} \geq \frac{\sum_u \gamma^2. |S|^2}{\sum_u d_u} \geq \frac{ \gamma^2. |S|^3}{\gamma. |S|^2}= \gamma.|S| $. In the first inequality
 $ \sum_u d_u^2$ is minimized when the degrees are uniform.  The second inequality uses the 
definition of a   \gdense: $ \sum_u d_u \geq \gamma. |S|^2$.
Hence $$ \frac{ 1}{{\bar d}}\leq  \frac{ 1}{\gamma.|S| }$$

We conclude that  if $p> \frac{1+\eps}{\gamma.|S|} \geq \frac{ 1+\eps}{{\bar d}}$ and by the 
Theorem \ref{gamma}, $\widehat{S}$  almost surely has a giant component.

\end{proof}

\subsubsection{Power-law graphs}\label{rgpl}

Most of the social graphs have a degree distribution close to a power law (such as a Zipfian distribution distribution where $Prob[d=j]=c/j^2$).  The preferential attachment or the configuration model
\cite{N10} provide models where the degree distribution follows such a power law. In the configuration model, the degree distribution can be an arbitrary distribution ${\mathcal D}$: given $n$ nodes, we  fix  the number   $d_1, d_2, ...d_{max} $ of nodes of degree $1,2,...$ where $d_{max}$  is the maximum degree.

\begin{definition}\label{lemma:Zipf}
Given $n$, let $c>0$ be such that $\sum_{i\geq 1}\lfloor nc/i^2\rfloor =n$ ($c$ is approximately $6/\pi^2$). In the {\em Zipf degree sequence}, there are $d_i(n)=\lfloor nc/i^2\rfloor$ nodes of degree $i$ for each $i> 1$. The number of nodes of degree 1 is either $\lfloor nc\rfloor$ or $\lfloor nc\rfloor-1$, chosen so that the sum of degrees is even; in the former case, all nodes have degree at least 1;  in the latter case, there is one node of degree 0. 
\end{definition}
\begin{lemma}\label{lemma:feasible}
For all $n$, the Zipf degree sequence is feasible: there exists a graph with that degree sequence.
\end{lemma}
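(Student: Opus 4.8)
The plan is to check the Erd\"os-Gallai conditions for graphicality. List the Zipf degree sequence of Definition~\ref{lemma:Zipf} in non-increasing order as $\hat d_1\ge\hat d_2\ge\cdots\ge\hat d_n$, and note first that $\sum_i\hat d_i$ is even by construction — this is exactly why the number of degree-$1$ vertices is allowed to be $\lfloor nc\rfloor$ or $\lfloor nc\rfloor-1$. By the Erd\"os-Gallai theorem it then suffices to prove, for every $1\le k\le n$,
\[
\sum_{i=1}^{k}\hat d_i\ \le\ k(k-1)+\sum_{i=k+1}^{n}\min(\hat d_i,k).
\]
Two elementary observations will drive everything. (i) The maximum degree is $d_{\max}=\lfloor\sqrt{nc}\rfloor$, because a vertex of degree $i$ occurs exactly when $\lfloor nc/i^2\rfloor\ge1$, i.e.\ when $i\le\sqrt{nc}$; in particular $d_{\max}^2\le nc$, and since $c=c(n)\to6/\pi^2<1$ we have $nc\le\tfrac34 n$ for all large $n$. (ii) At most one vertex has degree $0$, so among the $n-k$ indices $i>k$ at least $n-k-1$ contribute $\min(\hat d_i,k)\ge1$ to the right-hand side; also $\sum_i\hat d_i=2m=\Theta(n\log n)$.

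I would then split on $k$. If $1\le k\le d_{\max}$, the left-hand side is at most $k\,d_{\max}\le d_{\max}^2\le nc\le\tfrac34 n$, while by (ii) the right-hand side is at least $n-k-1\ge n-d_{\max}-1>\tfrac34 n\ge nc$ for $n$ large; so the inequality holds. If $k\ge\sqrt{2m}+1$, then $k(k-1)\ge 2m=\sum_i\hat d_i$, which alone dominates the left-hand side, so the inequality holds again. The finitely many small values of $n$ have bounded maximum degree and are realizable by inspection.

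The one remaining range, $d_{\max}<k<\sqrt{2m}+1$, is where the small amount of real work sits. For $k\ge d_{\max}$ one has $\min(\hat d_i,k)=\hat d_i$ for all $i>k$, so the Erd\"os-Gallai inequality collapses to $\sum_{i\le k}\hat d_i\le\tfrac12 k(k-1)+m$; it is therefore enough to bound the partial sum $\sum_{i\le k}\hat d_i$. Writing $n_j=|\{i:\hat d_i\ge j\}|\le nc\sum_{i\ge j}i^{-2}=O(n/j)$ and $\sum_{i\le k}\hat d_i=\sum_{j\ge1}\min(k,n_j)$, a one-line split at $j\approx n/k$ gives $\sum_{i\le k}\hat d_i=O(n\log\log n)$ on this range, far below $m=\Theta(n\log n)\le\tfrac12 k(k-1)+m$. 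Equivalently, one may simply quote the standard refinement of Erd\"os-Gallai stating that $k$ need only run up to (roughly) the side of the Durfee square of the sequence — here at most $d_{\max}$ — and dispense with this case altogether. That refinement (or the equivalent tail estimate above) is the one slightly non-obvious ingredient; the easy and the large-$k$ cases are immediate.
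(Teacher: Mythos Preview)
Your approach coincides with the paper's: both verify graphicality via the Erd\"os--Gallai criterion, using that the degree sum is even by construction. The paper's own proof merely asserts that the Erd\"os--Gallai inequalities are ``easy to see'' and gives no further detail, so your case split on $k$ (small, intermediate, large) is in fact a strictly fuller argument than what appears there, and it is correct.
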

\begin{proof}
The definition of the degree sequence guarantees that the sum of degrees is even, and it is easy to see that the sequence then satisfies the Erd\"os-Gallai condition~\cite{EG60}.
\end{proof}

\begin{lemma}\label{lemma:basic}
The Zipf degree sequence satisfies the following elementary properties.
\begin{itemize}
\item
The maximum degree is $dmax=\sqrt{cn}$.
\item
The total number $m$ of edges satisfies
$$|m-\frac{cn\ln(cn)}{4}|\leq \frac{cn}{2}$$
and the average vertex degree satisfies
$$|\E(d_u)-\frac{c\ln(cn)}{2}|\leq {c}$$
\item
%Let  $d=dmax/\sqrt{a}$. The number of nodes of degree greater than or equal to $d$ is
%$$\sqrt{cn}[\sum_1^{a-1}\frac{1}{\sqrt{i}}-\frac{a-1}{\sqrt{a}}]+O(1).$$
Let  $d=\delta\sqrt{n}=dmax/\sqrt{a}$ with $a=c/\delta^2$. The number of nodes of degree greater than or equal to $\delta\sqrt{n}$ is
$$\sqrt{cn}[\sum_1^{a-1}\frac{1}{\sqrt{i}}-\frac{a-1}{\sqrt{a}}]+O(1).$$
%\item When $a$ is large, the above is equivalent to $\sqrt{cna}$.
%\item  When $a=3$, $d$ is greater than the number of nodes of degree greater than or equal to $d$.
\end{itemize}
\end{lemma}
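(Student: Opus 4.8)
The plan is to establish the three bulleted estimates separately, each being an elementary computation on the Zipf degree sequence $d_i(n)=\lfloor nc/i^2\rfloor$. For the maximum degree, observe that a vertex of degree $i$ exists iff $d_i(n)\geq1$, i.e.\ iff $\lfloor nc/i^2\rfloor\geq1$, i.e.\ iff $i\leq\sqrt{cn}$; ignoring the rounding to the nearest integer as announced in Section~2, this gives $dmax=\sqrt{cn}$.

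For the edge count I would write $\sum_u d_u=d_1(n)+\sum_{i=2}^{dmax} i\,\lfloor nc/i^2\rfloor$ and replace each floor by $nc/i^2-\theta_i$ with $\theta_i:=\{nc/i^2\}\in[0,1)$, so that $\sum_u d_u=d_1(n)+nc\sum_{i=2}^{dmax}\frac1i-\sum_{i=2}^{dmax} i\,\theta_i$. Inserting $d_1(n)=nc+O(1)$, the harmonic estimate $\sum_{i=2}^{dmax}\frac1i=\ln dmax+\gamma_E-1+o(1)=\frac12\ln(cn)+\gamma_E-1+o(1)$ (where $\gamma_E\approx0.5772$ is the Euler--Mascheroni constant, not to be confused with the density parameter $\gamma$), and the crude but sign-definite bound $0\leq\sum_{i=2}^{dmax} i\,\theta_i<\sum_{i=1}^{dmax} i=\frac{cn}2(1+o(1))$, I obtain $\sum_u d_u=\frac{nc}2\ln(cn)+nc\,(\gamma_E-\beta)+o(nc)$ for some $\beta\in[0,\frac12)$. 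Since $m=\frac12\sum_u d_u$, $\E(d_u)=\frac1n\sum_u d_u$, and $\gamma_E\approx0.58>\frac12>\beta$ so that $\gamma_E-\beta\in(0,1)$, for $n$ large the two claimed inequalities $|m-\frac{cn\ln(cn)}4|\leq\frac{cn}2$ and $|\E(d_u)-\frac{c\ln(cn)}2|\leq c$ follow, after the constant-level bookkeeping discussed below.

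For the count of high-degree vertices, set $d=\delta\sqrt n=dmax/\sqrt a$ with $a=c/\delta^2$ (treating $a$ as an integer, else replacing $a$ by $\lfloor a\rfloor$ throughout and absorbing the difference into the error term). The number of vertices of degree at least $d$ is $N_{\geq d}=\sum_{i=d}^{dmax}\lfloor nc/i^2\rfloor$; the exceptional value $d_1(n)$ plays no role since $d\to\infty$. The key move is to count by levels: since $\lfloor nc/i^2\rfloor\geq k\iff i\leq dmax/\sqrt k$, we get $N_{\geq d}=\sum_{k\geq1}\bigl|\{i:\ d\leq i\leq dmax/\sqrt k\}\bigr|$, where the $k$-th set is empty unless $k\leq a$. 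Each such set has $dmax/\sqrt k-d+O(1)$ integers, hence $N_{\geq d}=dmax\sum_{k=1}^{a}\frac1{\sqrt k}-a\,d+O(a)$; substituting $d=dmax/\sqrt a$ converts $a\,d$ into $dmax\cdot\frac{a}{\sqrt a}$, and then the identity $\sum_{k=1}^{a}\frac1{\sqrt k}-\frac{a}{\sqrt a}=\sum_{k=1}^{a-1}\frac1{\sqrt k}-\frac{a-1}{\sqrt a}$ together with $a=O(1)$ and $dmax=\sqrt{cn}$ yields exactly $\sqrt{cn}\bigl[\sum_1^{a-1}\frac1{\sqrt i}-\frac{a-1}{\sqrt a}\bigr]+O(1)$.

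The only step requiring real care is the edge count: the crudest handling of the truncation error, $\sum_{i\leq dmax} i\approx cn/2$, already accounts for $cn/4$ of $m$, i.e.\ half of the permitted slack, so one cannot be cavalier. The point to exploit is that flooring has a fixed-sign effect (it only decreases $\sum_u d_u$), while the Euler term $nc\gamma_E$ pushes the other way and $\gamma_E>1/2$, so the two partially cancel and the estimate stays within $\pm cn/2$ of $\frac{cn\ln(cn)}{4}$; the average-degree bound is then immediate from $\E(d_u)=2m/n$. The level-counting bijection in the third bullet is the only other non-mechanical ingredient; everything else is routine bookkeeping.
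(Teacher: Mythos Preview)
The paper does not supply a proof of this lemma: it is stated as ``elementary properties'' and left unproved, so there is nothing to compare your argument against. Your derivations are correct. The first bullet is immediate; for the second, your observation that the Euler--Mascheroni constant $\gamma_E\approx 0.577$ exceeds the worst-case truncation loss $\beta<\tfrac12$ is exactly what keeps the error within the stated $\pm\tfrac{cn}{2}$ window (the inequality is understood for $n$ large, consistent with the paper's conventions in Section~2). For the third bullet, the level-counting rewriting $\sum_{i\geq d}\lfloor cn/i^2\rfloor=\sum_{k\geq 1}|\{i:d\leq i\leq dmax/\sqrt{k}\}|$ is a clean way to get the claimed closed form, and your algebraic identity $\sum_{k=1}^{a}k^{-1/2}-a/\sqrt{a}=\sum_{k=1}^{a-1}k^{-1/2}-(a-1)/\sqrt{a}$ is easily checked. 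The only caveat is that the second bullet, as you note, holds for $n$ sufficiently large rather than literally for all $n$; this matches the paper's usage.
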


% We have $c.n$ nodes of degree $1$, $c.n/4$ nodes of degree $2$, $c.n/9$ nodes of degree $3$ and so on.

\section{Graphs with and without large very dense subgraphs}
In this section, we consider random graphs with a power-law degree distribution, first without large very dense subgraphs and then with such a subgraph.
\subsection{Uniform configuration model} \label{uniform} 

\begin{definition}\label{definition:configuration} 
The  \emph{configuration model}  generates a multigraph  from a given degree sequence. 
The model  constructs a  uniform random perfect matching $\pi$ between stubs as a symmetric permutation without fixed points: if $\pi(i)=j$, meaning that stub $i$ maps to stub $j$, then 
$j \neq i$ and  $\pi(j)=i$. 
To construct $\pi$, we greedily match stubs at random. 
We obtain a multigraph.\footnote{ Note that the distribution of the multigraph thus obtained is not uniform in general.  We can obtain simple graphs, i.e. without self-loops  or multi-edges by rejection sampling, and they satisfy the same properties. Since, for each simple graphs with that degree distribution, the number of executions leading to that simple graph is the same, the restriction of the distribution to simple graphs yielded by rejection sampling is uniform.  \cite{MR98}. 
%\cite{B08}.  (Rejection sampling takes exponential time, and MCMC methods either do not converge to a uniform distribution or have unknown conergence rate.)  The problem is related to approximate counting contingency tables \cite{CD02}.
 }
%\marginpar{\small Claire: in footnote we write Rejection sampling takes exponential time, however Molloy-Reed state that the time if exp(-lambda(D)), but we are not sure what lambdaD) is, but we think that in our model it is $O(1)$, so rejection sampling would work for us...check?}
\end{definition}

For a given degree sequence, let $\mathcal D$ denote the random variable equal to the degree of a uniform random node, and $G$ denote  the multigraph obtained from the degree sequence according to the configuration model.

 \begin{definition}\label{definition:wellbehaved}
Let $d_i(n)$ denotes the number of vertices of degree $i$ in an $n$-vertex graph. We define a well-behaved asymptotic degree distribution $(d_i(n))_{i,n}$:
\begin{enumerate}
\item It is feasible: for each $n$, there exists a graph with degree distribution $(d_i(n))_i$.
\item It is smooth: for each $i$, $\ell_i=\lim_{n\rightarrow\infty} d_i(n)/n$ exists.
\item The convergence of $ i(i-2) d_i(n)/n$  to its limit $\ell_i$ is uniform: $\forall \epsilon  ~\exists N ~\forall n>N ~\forall i~ |i(i-2)d_i(n)/n - \ell_i|<\epsilon$.
\item $L({\mathcal D})=\lim_{n\rightarrow\infty} \sum_i  i(i-2) d_i(n)/n$ exists and the convergence is uniform: 

$\forall\epsilon~\exists i_0 ~\exists N ~\forall n>N~
|\sum_{i\leq i_0}  i(i-2) d_i(n)/n-L({\mathcal D})|<\epsilon$.
\end{enumerate}
\end{definition}

\begin{theorem} (1(b) from Molloy-Reed \cite{MR98}) \label{mr:bound})
Consider an asymptotic degree sequence such that:
\begin{enumerate}
 \item
 the asymptotic degree sequence is well-behaved, 
\item
 $Q({\mathcal D})=%\E[{\mathcal D}^2]-2\E[{\mathcal D}]=\\missing a division by n and a passage to the limit
 \sum_i  (i^2-2i) \ell_i$ is less than a constant less than 0, 
 \item
 the maximum vertex degree is less than $n^{1/9}$, and
 \item the average vertex degree is $O(1)$.
 \end{enumerate}
Then the following holds almost surely  in the configuration model: 
the largest connected component of $G$ has size at most $Bn^{1/4}$ for some constant $B$ depending on $Q({\mathcal D})$; no connected component of $G$ has more than one cycle; and  there are at most $2Bn^{1/4}$  cycles. 
%\marginpar{\small Claire: can we get rid of the conclusion that bounds the total number of cycles? It seems that we do not use it.}
\end{theorem}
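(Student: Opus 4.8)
Since the statement is quoted verbatim from Molloy--Reed, my plan is to reproduce the classical subcritical-configuration-model argument: explore one vertex's component by a breadth-first traversal of the random perfect matching $\pi$, and couple the number of unexplored ``active'' stubs with a random walk of constant negative drift. Concretely, fix a vertex $v$; reveal the pairing $\pi$ stub by stub, keeping a set of active stubs; at each step take one active stub, expose its partner, and if the partner sits on a previously unseen vertex $u$, add the other $d_u-1$ stubs of $u$ to the active set (having removed the stub just explored). Because $\pi$ is uniform, conditioned on the history the exposed partner is uniform among the stubs not yet touched, so as long as only $o(m)$ stubs have been used the vertex reached has essentially the size-biased degree, and the number of active stubs changes by $D'-2$, where $\mathbb E[D']=\bigl(\sum_i i^2 d_i(n)/n\bigr)\big/\bigl(\sum_i i\, d_i(n)/n\bigr)$.

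The heart of the proof is turning hypotheses (1)--(4) into a uniform drift estimate. Using the uniform convergence of $\sum_{i\le i_0} i(i-2)d_i(n)/n$ to $L(\mathcal D)$ together with the $O(1)$ bound on the average degree, one shows that for all large $n$ the conditional drift $\mathbb E[D']-2$ is at most $-\eta$ for a constant $\eta=\eta(Q(\mathcal D))>0$, uniformly over the first $\epsilon m$ steps of the exploration; this is where $Q(\mathcal D)<0$ enters. Hypothesis (3), $d_{\max}<n^{1/9}$, caps every increment of the exploration walk by $n^{1/9}$. A martingale deviation inequality (Azuma--Hoeffding, or the Molloy--Reed supermartingale/differential-equation bound) then shows that a walk with drift $\le-\eta$ and increments bounded by $n^{1/9}$, started from height $d_v\le n^{1/9}$, hits $0$ within $Bn^{1/4}$ steps with probability $1-o(1/n)$ for a suitable $B=B(Q)$. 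Hence $|\text{comp}(v)|\le Bn^{1/4}$ with that probability, and a union bound over the $n$ choices of $v$ gives the first conclusion almost surely.

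For the cycle statements I would run a first-moment argument on ``back-edges''. A cycle appears in the exploration of $v$ exactly when an exposed partner lands on a vertex that is already active or already explored, which at a given step has conditional probability at most $(\text{number of active stubs})/(2m - o(m))$. Conditioning on the (almost sure) event that every component has at most $Bn^{1/4}$ vertices --- so that the whole exploration of $v$ lasts $O(n^{1/4})$ steps and touches $o(m)$ stubs --- a summation over the steps of the exploration followed by a union bound over $v$ bounds the expected total number of back-edges by $O(n^{1/4})$ and, applied to ordered pairs of back-edges landing in a common component, bounds the expected number of components carrying two independent cycles by $o(1)$; Markov's inequality then yields, almost surely, at most $2Bn^{1/4}$ cycles and no component with more than one cycle.

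The main obstacle is the uniform drift estimate of the second paragraph. One must check that (i) the partner stub is size-biased over the \emph{surviving} stubs rather than the original degree sequence, which forces one to control the depletion of high-degree vertices over the first $\epsilon m$ steps, and (ii) the medium-degree tail --- vertices of degree between $i_0$ and $n^{1/9}$ --- contributes only a negligible amount of positive drift, so that the constant negative drift coming from $Q(\mathcal D)<0$ survives; these are precisely the places where the uniform-convergence clauses of Definition~\ref{definition:wellbehaved} and the $n^{1/9}$ cap on $d_{\max}$ are essential. The martingale tail bound, the union bounds, and the back-edge counting are then routine.
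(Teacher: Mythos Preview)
The paper does not prove this theorem at all: it is quoted verbatim as Theorem~1(b) of Molloy--Reed~\cite{MR98} and used as a black box throughout (in the proof of Theorem~\ref{t1}, in Lemma~\ref{d1}, etc.). So there is no ``paper's own proof'' to compare against.

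Your sketch is a faithful high-level outline of the original Molloy--Reed subcritical argument: exploration by revealing the matching stub-by-stub, coupling the active-stub count with a walk whose drift is $\mathbb E[D']-2$ (with $D'$ the size-biased degree), using $Q(\mathcal D)<0$ and the well-behavedness hypotheses to make that drift uniformly negative, and using the $n^{1/9}$ cap on degrees to bound increments so that a martingale tail bound plus a union bound over start vertices gives the $Bn^{1/4}$ component-size bound. The first-moment back-edge count for the cycle statements is also the right idea. One quantitative point to be careful with if you ever write this out in full: a naive Azuma bound with increments of size $n^{1/9}$ does not by itself force extinction by time $Bn^{1/4}$ with failure probability $o(1/n)$; Molloy and Reed instead track the exploration more carefully (their differential-equation/branching comparison) to get the stated exponent, and the $n^{1/9}$ condition is tuned to make that work. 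But since the present paper only cites the result, reproducing the full argument is not required here.
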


As noted in \cite{MR98}, for a well-behaved sequence we have 
$$L({\mathcal D})=\lim_{n\rightarrow\infty} \sum_i  i(i-2) d_i(n)/n= \sum_i  i(i-2) \lim_{n\rightarrow\infty}d_i(n)/n= Q({\mathcal D})$$

We will apply Theorem~\ref{mr:bound} to the Reservoir $R$ to bound the size of its largest connected component. 

First we prove a simple property of the uniform configuration model.

\begin{lemma}\label{lemma:noclique}
 In the uniform configuration model, $G$ almost surely does not have  a \gdense  of size $\Omega(\sqrt{n})$.
\end{lemma}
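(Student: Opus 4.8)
The plan is to show that a $\gamma$-clique of size $s=\Omega(\sqrt{n})$ would force the Zipf degree sequence to contain too many high-degree vertices, contradicting Lemma~\ref{lemma:basic}. First I would observe that if $S$ is a $\gamma$-clique with $|S|=s$, then $|E[S]|>\gamma s(s-1)/2$, so the average degree \emph{within} $S$ is at least $\gamma(s-1)$; hence at least half the vertices of $S$ have degree (in $G$, a fortiori) at least $\gamma(s-1)/2$. Thus the Zipf sequence must have at least $s/2$ vertices of degree $\geq \gamma(s-1)/2 = \Omega(\sqrt{n})$.

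Next I would invoke the third item of Lemma~\ref{lemma:basic}: with $d=\delta'\sqrt{n}$ and $a=c/\delta'^2$, the number of vertices of degree $\geq d$ is $\sqrt{cn}\,[\sum_{i=1}^{a-1} i^{-1/2} - (a-1)/\sqrt a] + O(1)$, which is $\Theta(\sqrt n)$ but with a constant that can be made arbitrarily small by taking $\delta'$ large (equivalently $a$ close to $1$). Setting $\delta'\sqrt n = \gamma(s-1)/2$ and using $s\geq \delta\sqrt n$, the available supply of vertices of degree $\geq \gamma(s-1)/2$ is at most $C(\gamma,\delta)\sqrt n$ for an explicit constant, while we need $s/2 \geq (\delta/2)\sqrt n$ of them inside $S$. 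So far this only gives a contradiction when $\gamma,\delta$ are in a favorable range; to handle all constants I would instead argue more carefully that the \emph{total} degree of $S$, namely $\sum_{u\in S} d_u \geq 2|E[S]| > \gamma s(s-1)$, must be supplied by the $s$ highest-degree vertices of the whole graph, and compare $\gamma s^2$ against $\sum$ of the top $s$ degrees in the Zipf sequence, which is $\Theta(\sqrt n \cdot s)$ — wait, that is the same order, so the purely deterministic counting is tight up to constants and does not by itself rule out a $\gamma$-clique for small $\gamma$.

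Therefore the real argument must use randomness: even though there are $\Theta(\sqrt n)$ vertices of high degree, in the configuration model their stubs are matched uniformly at random, so the number of edges landing \emph{inside} any fixed candidate set of $\Theta(\sqrt n)$ high-degree vertices is, in expectation, only $\Theta\!\big((\sum_{u\in S}d_u)^2/m\big) = \Theta(n / (n\log n)) = \Theta(1/\log n)$ per... more precisely, with $\sum_{u\in S} d_u = O(\sqrt n \cdot \sqrt n) = O(n)$ it is $O(n^2/m) = O(n/\log n) = o(s^2)$. So the expected internal edge count is a vanishing fraction of the $\gamma s^2$ edges required. I would then take a union bound over all $\binom{n}{s}$-ish candidate sets $S$, using a concentration inequality (Chernoff/Azuma for the configuration model, or a switching/martingale argument on the random matching) to show that the probability any particular $S$ with $|S|=s$ has $\geq \gamma s(s-1)/2$ internal edges is at most $\exp(-\Omega(\gamma s^2))$, which dominates the $\binom{n}{s} \leq \exp(O(s\log n))$ term precisely because $s^2 = \Omega(n) \gg s\log n$ in the regime $s=\Omega(\sqrt n)$ — here I must be slightly careful since $s\log n$ vs $s^2/\log n$: with $s \geq \delta\sqrt n$ we have $s^2 \geq \delta^2 n \gg s \log n$, so the union bound closes.

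The main obstacle is the concentration step: edges in the configuration model are not independent, so I cannot directly apply Chernoff bounds to $|E[S]|$. The cleanest route is to expose the random matching stub-by-stub and apply the bounded-differences (Azuma–Hoeffding) inequality to the Doob martingale of $|E[S]|$, checking that revealing one matched pair changes $|E[S]|$ by $O(1)$ (it can change by $1$ for the edge created, plus it re-routes at most two other stubs), giving a variance proxy of $O(\sum_{u\in S} d_u) = O(n)$ and hence $\Pr[|E[S]| \geq \mathbb{E}|E[S]| + t] \leq \exp(-\Omega(t^2/n))$; taking $t = \Theta(\gamma s^2) = \Omega(n)$ yields the needed $\exp(-\Omega(n))$ tail, comfortably beating the union bound. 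An alternative is to use the known result that the configuration model conditioned on simplicity has the same first-order behavior (footnote after Definition~\ref{definition:configuration}) and transfer a subgraph-density concentration bound from the Chung–Lu model. Either way, once concentration is in hand, the union bound over all subsets of size $s$ for every $s \geq \delta\sqrt n$ (another union bound over $O(n)$ values of $s$, absorbed into the $o(1)$) finishes the proof.
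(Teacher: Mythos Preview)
Your argument is essentially correct but takes a heavier route than the paper. The paper's proof avoids both the concentration inequality and the union bound: it observes (as you did in your first paragraph) that a $\gamma$-clique $S$ of size $\Omega(\sqrt n)$ must contain a large subset $S_1$ whose vertices all have degree $\Omega(\sqrt n)$ in $G$; but then, rather than union-bounding over candidate sets, it fixes once and for all the \emph{deterministic} set $A$ of all vertices of degree at least $c''\sqrt n$. This set is determined by the Zipf sequence alone, has $|A|=O(\sqrt n)$, and contains $S_1$, so $A$ would itself have to carry $\Omega(\gamma n)$ internal edges. A single expectation calculation in the configuration model gives $\mathbb E|E[A]|=O(n/\log n)$, and Markov's inequality finishes: $\Pr[|E[A]|\geq \Omega(\gamma n)]=O(1/(\gamma\log n))=o(1)$. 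The passage from the random $S$ to the fixed $A$ is exactly what spares the union bound and lets Markov suffice in place of Azuma.

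Your union-bound-plus-Azuma route does close, but one step is stated too optimistically. The variance proxy for $|E[S]|$ under the random perfect matching, via the standard transposition-Lipschitz bound for functions of matchings, is $O(m)=O(n\log n)$, not $O(\sum_{u\in S}d_u)$; to get the latter you would need a martingale that reveals only the stubs of $S$, and the bounded-difference verification there is not the one-line check you sketch (re-routing one stub of $S$ can shift the conditional expectation of future internal edges). With the safe $O(n\log n)$ proxy the tail becomes $\exp(-\Omega(\gamma^2 s^4/(n\log n)))$, which for $s\geq\delta\sqrt n$ is $\exp(-\Omega(n/\log n))$ and still comfortably beats $\binom{n}{s}\leq\exp(O(\sqrt n\log n))$, so the proof survives. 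It is worth noting, though, that you had the paper's key reduction in hand in your opening paragraph and then walked away from it.
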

\begin{proof} 
Assume, for a contradiction, that there exists a set $S$ of size $c\sqrt{n}/a$ which is $\gamma$-dense. Then there exists a subset $S_1\subseteq S$ of size at least $c_1\sqrt{n}$ and that is at least $c'_1\gamma$-dense, and whose vertices have minimum degree $c''_1\sqrt{n}$. Let $A$ consist of all vertices of degree of $G$ at least $c''_1\sqrt{n}$. Note that $A$ is independent of $S$ (thus sparing the need to do some union bound) and that  $|A|\leq c_2\sqrt{n}$. Then $A$ itself must be $c_2\gamma$-dense. 

Let $E(A)$ the set of possible internal edges in $A$:  then  $|E(A)|=\theta(n)$, whereas we have $m=\theta(n.\log n)$ edges. The probability that a random edge, created by 
 the random stub-matching algorithm,  is in $E(A)$ is $\theta(n/n.\log n)=\theta(1/\log n)$.  The expected number of edges internal to $A$ is at most $\theta(\sqrt{n}/\log n)$ and the expected density $|E(A)|/|A|=\theta(1/\log n) $, i.e.  $o(1)$. By Markov's inequality, the probability that $A$ is $c_2\gamma$-dense is $O(1/\log n)$, hence we obtain a contradiction.
\end{proof}

%\subsection{Other approaches}
 \subsection{Models with a large very dense subgraph}\label{subsection:uniform}
The configuration model  generates a random graph which follows a power law degree distribution, as  explained in section \ref{rgpl}, with a 
 uniform matching between the stubs.  With a  power law degree distribution,   the random graph $G$ has a giant component almost surely. We show in Theorem \ref{t1} that the Reservoir almost surely does not have a giant component and that $G$ does not have a large very dense subgraph.

%\subsection{Model with a very dense subgraph}\label{subsection:concentrated}

We now define a {\bf concentrated model}, a stochastic power law model that defines a graph that contains a large very dense subgraph. 
In Theorem~\ref{thm:estimation}, we will prove that in that case, there is an algorithm, Algorithm 2, that not only decides but also reconstructs (approximately) the underlying very dense subgraph.  

Our model for a random graph with a very dense subgraph (concentrated model) is as follows. Let $\delta\leq \sqrt{c}/2$ and $0<\gamma \leq 1$ be fixed.  We now construct   a graph with a \gdense.

\begin{itemize}
\item We attach  $d$ stubs (half-edges)  to each node $v$  of degree $d$ of the distribution ${\mathcal D}$,
\item  Let $S$ be a set of $\delta\sqrt{n}$ %$\sqrt{\frac{cn}{a}}$ 
 nodes chosen arbitrarily among all nodes that have degree at least $\delta\sqrt{n}$. %$\sqrt{\frac{cn}{a}}$. 
\item  For each vertex $u$ of $S$, we mark $\delta\sqrt{n}-1$ stubs of $u$ at random. For each pair of marked stubs between different vertices, we put in the graph the edge between those stubs independently and with probability $\gamma$, creating a random graph over $S$ distributed as the Erd\"os-Renyi model $G(|S|,\gamma)$;
\item  The remaining  unmarked stubs of $V$ and the marked stubs not chosen in (1) are matched uniformly. 

\end{itemize}
% By Lemma~\ref{lemma:basic},  for $a\geq 4$ there are at least  $\sqrt{\frac{cn}{a}}$ nodes of degree at least $\sqrt{\frac{cn}{a}}$, so there are such subsets $S$.
By Lemma~\ref{lemma:basic},  for $\delta\leq \sqrt{c}/2$ there are at least  $\delta\sqrt{n}$ nodes of degree at least $\delta\sqrt{n}$, so there exist such subsets $S$.
Let $0<\gamma \leq 1$ be fixed.  
Notice that  the second procedure may add some edges in $S$.

\section{Existence of  a large very dense subgraph}

 %\subsection{Dense subgraph detection algorithm}

  Let $C$ be the largest  connected component of the Reservoir of size $k=\Theta(\alpha\sqrt{n}\log n)$.  
In order to decide the graph property $P$: {\em  is there is a  \gdense  of size greater then $\delta.\sqrt{n}$?} Consider this simple algorithm, where $\alpha > 1/\gamma\delta$ is an auxiliary parameter.\\

\begin{algorithm}\label{algo:A1}
{\bf Algorithm \Aone }

\SetAlgoVlined
\KwIn{a stream of edges of a graph $G$.}
 \KwOut{
  whether $G$ contains a large very dense subgraph }
 Let $\alpha=\Theta(1/(\gamma\delta))$. \\
$\bullet$ Use Reservoir sampling to maintain a reservoir $R$ of size $k=\Theta(c.\alpha \sqrt{n}\log n)/4)$.\\
$\bullet$Let $C$ denote the vertices of the largest connected component of $R$.

$\bullet$ Accept iff $|C|\geq \Theta(n^{1/8}\log^2 n)$ 
\end{algorithm}

The bound $n^{1/8}\log^2 n$ is a direct application of the Molloy-Reed Theorem \ref{mr:bound}.
The  sampling rate  $\frac{k}{m}=\frac{c.\alpha \sqrt{n}\log n/4}{cn.\log n/4}=\frac{\alpha}{\sqrt{n}}$.
We show the correctness of this algorithm in two steps. In section \ref{uni}, we show in Theorem \ref{t1} that Algorithm \Aone rejects almost surely in the uniform case. In the next section, we show in Theorem \ref{tConcentratedSize} that Algorithm \Aone accepts almost surely if there is a large   \gdense.

%When $S$ is  a   \gdense subgraph in $G$ of size  $\sqrt{\frac{cn}{a}}$ and
%$$\frac{ (1+\epsilon)\sqrt{a}}{\gamma .\sqrt{c}}<\alpha < (1-o(1))\frac{\ln\ln n}{\sqrt{c}}.$$
 %then
  %$$\frac{k}{m}=\frac{\alpha}{\sqrt{n}}> \frac{(1+\epsilon)}{\gamma. |S|}$$ 
%and so by Lemma~\ref{delta} the Reservoir contains a giant component. In the uniform case, when
%$ |S|=0$, we will prove that the Reservoir does not contain a giant component. 

\subsection{Analysis when there is a large very dense subgraph}\label{aa}
In this section we analyze the size of the largest connected component $C$ of the Reservoir used for the detection of a very dense subgraph (Algorithm \Aone).
The following theorem formalizes the fact that Algorithm \Aone is correct with high probability on any graph  that contains a large   \gdense and has $m=cn.\log n/4$ edges.
\begin{theorem}\label{tConcentratedSize}
Assume that $G$ contains a   \gdense on $S$ where  $|S| > \delta.\sqrt{n}$ and has $m=cn.\log n/4$ edges. If  
$\alpha >  \frac{ (1+\epsilon)}{\gamma .\delta}$, then Algorithm \Aone Accepts almost surely.
\end{theorem}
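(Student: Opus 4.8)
The plan is to show that with probability $1-o(1)$ the largest connected component $C$ of the reservoir already has $\Omega(\sqrt n)$ vertices, which for $n$ large enough exceeds the acceptance threshold $\Theta(n^{1/8}\log^2 n)$; hence Algorithm \Aone accepts. Since $C$ is the \emph{largest} connected component of $R$, it is enough to exhibit one large connected component somewhere in $R$, and I would look for it among the edges of $R$ that are internal to $S$.

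First, recall that after the entire stream of $m$ edges has been read, reservoir sampling returns a uniformly random $k$-subset $R$ of $E$; equivalently the effective sampling rate is $q:=k/m=\alpha/\sqrt n$, using $m=cn\log n/4$ and $k=\Theta(c\alpha\sqrt n\log n/4)$. Consider the random set $R\cap E[S]$. Conditioned on its size it is a uniformly random subset of $E[S]$ of that size, and since $|E[S]|\ge\gamma|S|(|S|-1)/2=\Theta(n)$ while its expected size $q\,|E[S]|=\Omega(\sqrt n)$ tends to infinity, the hypergeometric random variable $|R\cap E[S]|$ is concentrated: $|R\cap E[S]|\ge (1-\epsilon_0)\,q\,|E[S]|$ with probability $1-n^{-\omega(1)}$ for any fixed constant $\epsilon_0>0$ (Chernoff/Hoeffding bounds for sampling without replacement).

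Next I would compare the subgraph of $G[S]$ induced by $R\cap E[S]$ with independent edge sampling. By the standard coupling between the fixed-size and the Bernoulli edge-sampling models (sample a Bernoulli$(q')$ subset of $E[S]$ with $q'=(1-\epsilon_0)q$ and, on the likely event that it is no larger than $|R\cap E[S]|$, complete it to a uniform subset of that size), the random graph $R\cap E[S]$ stochastically contains --- for the monotone property ``has a connected component on $\Omega(|S|)$ vertices'' --- the random subgraph $\widehat S$ obtained from the $\gamma$-clique on $S$ by keeping each edge independently with probability $q'$. Now the hypothesis is exactly what is needed: since $|S|>\delta\sqrt n$ we have $1/|S|<1/(\delta\sqrt n)$, so choosing $\epsilon_0$ and an auxiliary $\epsilon'>0$ small enough that $(1-\epsilon_0)(1+\epsilon)>1+\epsilon'$, we get for $n$ large
$$q'=(1-\epsilon_0)\frac{\alpha}{\sqrt n}>\frac{1+\epsilon'}{\gamma\delta\sqrt n}>\frac{1+\epsilon'}{\gamma|S|},$$
where the middle inequality uses $\alpha>(1+\epsilon)/(\gamma\delta)$. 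By Lemma~\ref{delta} (applied with $\epsilon'$ in place of $\epsilon$), $\widehat S$ almost surely has a giant component, i.e. a connected component on $\Omega(|S|)=\Omega(\sqrt n)$ vertices; this component lives inside $R$, so $|C|=\Omega(\sqrt n)$ and the algorithm accepts. A union bound over the finitely many failure events, each of probability $o(1)$, preserves the ``almost surely'' conclusion.

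I expect the only real obstacle to be the mismatch between reservoir sampling, which draws a subset of $E$ of \emph{fixed} cardinality $k$, and the independent-edge-sampling model in which Lemma~\ref{delta} (via Theorem~\ref{gamma}) is phrased: the lemma cannot be quoted verbatim. The remedy is to absorb a small constant-factor loss in the sampling rate --- replacing $q=\alpha/\sqrt n$ by $q'=(1-\epsilon_0)q$ --- via concentration of the hypergeometric distribution together with the coupling above; the slack built into the assumption $\alpha>(1+\epsilon)/(\gamma\delta)$ is precisely what makes this loss harmless. Everything else is immediate from Lemma~\ref{delta} and the fact that $\sqrt n=\omega(n^{1/8}\log^2 n)$.
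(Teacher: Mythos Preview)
Your argument is correct and follows the same route as the paper: compute the effective sampling rate $k/m=\alpha/\sqrt n$, verify that it exceeds $(1+\epsilon')/(\gamma|S|)$ thanks to $|S|>\delta\sqrt n$ and $\alpha>(1+\epsilon)/(\gamma\delta)$, and invoke Lemma~\ref{delta} to obtain a giant component inside $S$.

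The only difference is that the paper applies Lemma~\ref{delta} directly with $p=k/m$, tacitly identifying reservoir sampling with independent Bernoulli sampling, whereas you explicitly justify this identification via concentration of the hypergeometric variable $|R\cap E[S]|$ and a monotone coupling to Bernoulli$(q')$ sampling with $q'=(1-\epsilon_0)q$. This extra step is not in the paper's proof but is a legitimate refinement: it closes the gap between the fixed-cardinality model of the Reservoir and the independent-edge model assumed by Theorem~\ref{gamma} (and hence Lemma~\ref{delta}), at the cost only of the slack already present in the hypothesis $\alpha>(1+\epsilon)/(\gamma\delta)$.
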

\begin{proof}
%Let $a=c/\delta^2$. 
If $S$ is a   \gdense and $G$ has  $m=cn.\log n/4$ edges:
$$\frac{k}{m}=\frac{\alpha}{\sqrt{n} }>  \frac{ (1+\epsilon)}{\gamma .|S|}$$
By Lemma  \ref{delta} there is  a giant component as
$$\frac{k}{m} > \frac{ (1+\epsilon)}{\gamma .|S|}$$
Hence Algorithm \Aone accepts almost surely.
\end{proof}

The above theorem shows that on positive instances, Algorithm \Aone is almost surely correct. 
What about negative instances? We observe that there exists an input graph $G$ that does not have a   \gdense of size strictly greater than $\epsilon\sqrt{n}$, yet which Algorithm \Aone (incorrectly) accepts. $G$ consists of a clique $K$ of size $\epsilon\sqrt{n}$ and of a path of size $n-|K|$. With high probability, the Reservoir contains a component of size at least $90\%$ of $K$, and will therefore accept, incorrectly.

However this input is somewhat pathological. In Theorem~\ref{t1}, we will prove that, assuming that $G$ is drawn from the following stochastic power  law model, the algorithm is correct on $G$ with high probability.

\subsection{Analysis in the uniform case}\label{uni}

\begin{theorem}\label{t1}
 In the uniform configuration model (subsection~\ref{rgpl}), Algorithm \Aone Rejects almost surely.
\end{theorem}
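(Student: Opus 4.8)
The plan is to apply the Molloy--Reed Theorem~\ref{mr:bound} to the Reservoir $R$ rather than to $G$ itself, treating $R$ as a random graph produced by a configuration-model-like process on its own degree sequence. The key observation is that $R$ is obtained from $G$ by keeping each edge independently with probability $k/m = \alpha/\sqrt{n}$, so in particular the sampling drastically thins the degrees: a vertex of degree $d$ in $G$ has expected degree $d\alpha/\sqrt{n}$ in $R$. First I would compute the degree sequence of $R$: since $G$ has the Zipf degree sequence (Definition~\ref{lemma:Zipf}) with $d_{\max}=\sqrt{cn}$, after sampling at rate $\alpha/\sqrt{n}$ the maximum expected degree in $R$ is $O(\alpha\sqrt{c})=O(1)$, and standard concentration (Chernoff on the independent edge choices) shows that almost surely no vertex of $R$ has degree more than, say, $\log n$, which is certainly below $|V(R)|^{1/9}$ for the effective vertex count; similarly the average degree of $R$ is $O(1)$. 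This handles conditions (3) and (4) of Theorem~\ref{mr:bound}.

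The main work is verifying condition (2): that $Q(\mathcal D_R) = \sum_i (i^2-2i)\ell_i^{R}$ is bounded below a negative constant, where $\ell_i^R$ is the limiting fraction of degree-$i$ vertices in $R$. Here I would argue that, because the sampling rate $\alpha/\sqrt{n}\to 0$, each surviving edge is asymptotically isolated: the probability that a vertex retains two or more edges is $O((d\alpha/\sqrt{n})^2)$ summed over vertices, which is $O(\alpha^2 \sum_u d_u^2 / n) = O(\alpha^2 \cdot \overline d)$ — and here I need to be careful, since $\overline d = \sum d_u^2/\sum d_u$ for the Zipf sequence is $\Theta(\sqrt{n})$, so this naive bound is not $o(1)$. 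Instead the right statement is that the number of vertices of degree $\ge 2$ in $R$ is a vanishing \emph{fraction} of the vertices carrying edges: almost all of the $\Theta(k)=\Theta(\sqrt n\log n)$ edges of $R$ touch vertices that have no other surviving edge, so $R$ is asymptotically a matching plus lower-order corrections, giving $\ell_1^R$ bounded away from $0$ while $\ell_i^R$ for $i\ge 2$ is a correspondingly smaller constant, forcing $\sum_i i(i-2)\ell_i^R$ (which gets a $-1$ contribution from degree-$1$ vertices and nonnegative contributions only from degree $\ge 3$) to be negative and bounded away from $0$. I also need well-behavedness (condition (1)): feasibility is automatic since $R$ is a genuine subgraph, smoothness and uniform convergence of $i(i-2)d_i^R(n)/n$ follow from the explicit binomial formula $d_i^R(n) = \sum_{d\ge i} d_d(n)\binom{d}{i}(\alpha/\sqrt n)^i(1-\alpha/\sqrt n)^{d-i} + o(n)$ together with the tail decay of the Zipf sequence.

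Finally, I would combine the pieces: conditioning on a typical degree sequence of $R$ (one satisfying the high-probability bounds above), Theorem~\ref{mr:bound} gives that the largest connected component of $R$ has size $O(n^{1/4})$ almost surely — but I must be mindful that $R$ has only $\Theta(\sqrt n \log n)$ nonisolated vertices, so "$n$" in the Molloy--Reed statement should be replaced by the effective size $\Theta(\sqrt n\log n)$, yielding a largest component of size $O((\sqrt n\log n)^{1/4}) = O(n^{1/8}\log^{1/4} n)$, which is below the acceptance threshold $\Theta(n^{1/8}\log^2 n)$ of Algorithm~\Aone. Hence the algorithm rejects almost surely. The main obstacle, and the part that needs the most care, is the bookkeeping around "effective $n$": the Molloy--Reed theorem is stated for a degree sequence on $n$ vertices with average degree $O(1)$, whereas our $R$ is most naturally viewed as living on $\Theta(\sqrt n\log n)$ active vertices; I need to either pad with isolated vertices and rescale, or invoke the theorem directly on the active subgraph, and check that conditions (2)--(4) are phrased in terms of that effective vertex set. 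A secondary subtlety is that the degree sequence of $R$ is itself random (not deterministic as Molloy--Reed assumes), so as in the analysis of Algorithm~\Aone on positive instances I would condition on the degree sequence lying in a high-probability "good" set and apply the theorem pointwise.
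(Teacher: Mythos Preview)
Your approach is essentially the paper's: apply Molloy--Reed (Theorem~\ref{mr:bound}) to the Reservoir $R$ viewed as a configuration-model graph on its own (random) degree sequence over $\Theta(\sqrt{n}\log n)$ active vertices, show that sequence is dominated by degree-$1$ vertices so that $Q<0$, and conclude the largest component has size $O((\sqrt{n}\log n)^{1/4})=O(n^{1/8}\mathrm{polylog}\,n)$, below the acceptance threshold. The paper makes your ``configuration-model-like'' step rigorous via an explicit Configuration-last coupling (Lemma~\ref{lemma:equivalent}: sample $2k$ stubs first, then match them uniformly), sharpens your ``asymptotically a matching'' claim to the exact limits $\ell_1=1$, $\ell_i=0$ for $i\geq 2$ (hence $Q=-1$) through Lemmas~\ref{x1a}--\ref{uniformlimit}, and handles the random-degree-sequence subtlety you correctly flag by a coupling that orders degree sequences according to how far they deviate from this limiting profile.
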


The Reservoir $R$ is constructed by a 3-step process, which we call the \emph{Configuration-first} process:
\begin{enumerate}
\item
Take the Zipf degree sequence (Definition~\ref{lemma:Zipf}).
\item
Use the configuration model (Definition~\ref{definition:configuration}) to generate a multigraph $G$ from the Zipf degree sequence
\item
Take a random uniform sample (without replacement) of $k$ edges from $G$ to define the Reservoir $R$.
\end{enumerate}
Instead, we will analyze the following process, which we call the \emph{Configuration-last} process:
 \begin{enumerate}
\item
Take the Zipf degree distribution (Definition~\ref{lemma:Zipf}) as the degree sequence of the overall graph $G$.
\item
Take a random uniform sample (without replacement) of $2k$ stubs, determining the degree sequence in $R$.
\item 
Use the Configuration model (Definition~\ref{definition:configuration}) to generate a random multigraph $R$ from that degree sequence.
\end{enumerate}

\begin{lemma}\label{lemma:equivalent}
The above two processes give the same distribution for multigraph $R$.
\end{lemma}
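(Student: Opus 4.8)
The plan is to exhibit a coupling between the two processes that produces the same random multigraph $R$ (as a labelled multigraph on the vertex set $V$), so that in particular the two distributions on $R$ coincide. The common feature is that the underlying object in both cases is the configuration-model perfect matching on stubs; what differs is only the order in which we (a) draw the global matching and (b) restrict to the $k$ sampled edges / $2k$ sampled stubs. So the argument should amount to checking that ``pick a uniform perfect matching on all $M=\sum_u d_u$ stubs, then keep a uniform random $k$-subset of its edges'' and ``pick a uniform random $2k$-subset of the stubs, then pick a uniform perfect matching on those'' induce the same law on the resulting multigraph on $V$.

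First I would set up notation: let $M=\sum_u d_u$ be the total number of stubs (each stub carries the label of its vertex), and assume for cleanliness that $M$ is even and $k\le M/2$. In the \emph{Configuration-first} process, let $\Pi$ be a uniform perfect matching of the $M$ stubs, and let $T$ be a uniform random $k$-subset of the $M/2$ edges (matched pairs) of $\Pi$; then $R$ is the multigraph on $V$ whose edges are the $k$ pairs in $T$, projected to vertex labels. In the \emph{Configuration-last} process, let $W$ be a uniform random $2k$-subset of the $M$ stubs, and let $\Sigma$ be a uniform perfect matching of the stubs in $W$; then $R$ is the projection to $V$ of the $k$ pairs of $\Sigma$. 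Both are random multigraphs on $V$ with exactly $k$ edges.

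The key computation is to show that, for a fixed target multigraph $H$ on $V$ with $k$ edges — equivalently, a fixed perfect matching $\sigma_0$ on a fixed $2k$-subset $W_0$ of stubs, up to the symmetry of relabelling which stub of a vertex is used — the two processes assign $H$ the same probability. In the Configuration-last process this probability is
\[
\frac{1}{\binom{M}{2k}}\cdot\frac{1}{(2k-1)!!}\cdot N(H),
\]
where $(2k-1)!!$ is the number of perfect matchings on $2k$ stubs and $N(H)$ is the number of (sub-$2k$-subset, matching-on-it) pairs that project to $H$; the point is that $W_0$ is determined by $\sigma_0$ which is determined by $H$ only after multiplying by the combinatorial factor counting which stubs at each vertex get used, and that factor is exactly the same $N(H)$ that appears on the other side. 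In the Configuration-first process the probability of $H$ is
\[
\frac{\#\{(\Pi,T): (\Pi,T)\text{ projects to }H\}}{(M-1)!!\cdot\binom{M/2}{k}},
\]
and since a matched edge in $T$ extends to a matching $\Pi$ of the remaining $M-2k$ stubs in exactly $(M-2k-1)!!$ ways, the numerator is $N(H)\cdot(M-2k-1)!!$. Now it is a routine identity that
\[
\frac{(M-2k-1)!!}{(M-1)!!\,\binom{M/2}{k}}=\frac{1}{\binom{M}{2k}(2k-1)!!},
\]
which one checks by writing $(M-1)!!=M!/(2^{M/2}(M/2)!)$ and similarly for the other double factorials. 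Hence the two probabilities agree for every $H$, which is the claim.

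I expect the main obstacle to be bookkeeping rather than conceptual: being careful that ``the same distribution for $R$'' means the distribution on the projected multigraph on the vertex set (not on the stub-matching), and therefore correctly identifying the multiplicity $N(H)$ counting stub-choices at each vertex so that it cancels identically on both sides; one must also handle the parity/feasibility edge cases ($M$ even, $2k\le M$) and note that the projection does not depend on which internal labelling of a vertex's stubs is used, so the counting factors really are identical in the two processes. An alternative, perhaps cleaner, write-up would be to give an explicit bijective coupling: generate $\Pi$, generate $T$, and observe that conditioned on the set of stubs touched by $T$ — which is a uniform $2k$-subset of stubs — the induced matching on those stubs is uniform, which is exactly the Configuration-last description; this sidesteps the double-factorial algebra and is the route I would actually take.
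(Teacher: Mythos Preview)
Your proposal is correct. Both the explicit double-factorial count and the coupling argument you sketch at the end are valid; the identity you state is easily verified, and the coupling (``the set of $2k$ stubs touched by a uniform $k$-subset of the edges of a uniform perfect matching is itself a uniform $2k$-subset, and conditioned on that subset the induced matching is uniform'') is exactly the right way to see it without algebra.

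By contrast, the paper's own proof is a single sentence: ``The configuration model is simply a greedy matching of stubs.'' This is not really an argument so much as a gesture toward one --- presumably the intended point is that, because the greedy stub-matching can be performed in any order, one may first expose which $2k$ stubs land in the Reservoir and then match them, and this is the Configuration-last process. Your coupling route makes this precise, and your counting route is an independent, fully explicit verification. Either of your arguments is substantially more than what the paper offers; the coupling version is the one that matches the paper's spirit, while the counting version buys you a self-contained check that does not rely on any ``exchangeability of order'' intuition.
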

\begin{proof}
The configuration model is simply a greedy matching of stubs.
\end{proof}

\begin{proof} (of Theorem~\ref{t1})
By Lemma~\ref{lemma:equivalent}, we will analyze the second process for generating $R$.
The plan is to apply Theorem~\ref{mr:bound} to the configuration model generated from the degree sequence of $R$.
One difficulty is that that degree sequence is not deterministic but random. 
Theorem~\ref{mr:bound} assumes a deterministic degree sequence for each $n$. In our setting, we have a distribution of degree sequences for each $n$. 

We analyze the properties of the random degree sequence in $R$, so as to prove that with probability $1-o(1)$ it satisfies the assumptions of Theorem~\ref{mr:bound}.  We define one degree sequence for each $n$. We will prove that this asymptotic sequence satisfies the assumptions of Theorem~\ref{mr:bound}.

The proof relies on a series of technical Lemmas, that are deferred to the next subsections. 
The degree distribution of the Reservoir is well-behaved: the first property (feasibility) holds by definition. The second property (smoothness) holds by Lemma~\ref{x1a} for $i=1$ and by Lemma~\ref{xi} for $i\geq 2$. The third property (uniform convergence) also holds by Lemma~\ref{uniformconvergence}. The fourth property (uniform limit) holds by Lemma~\ref{uniformlimit}. Thus the degree sequence is well-behaved. As to the other assumptions of Theorem~\ref{mr:bound}, the second one ($Q$ negative) follows from the previous Lemmas, the third one (maximum degree) from Lemma~\ref{maxvertexdegreeR}, and the fourth one (average degree) from Lemma~\ref{averagedegreeR}. 

By Theorem~\ref{mr:bound}, we then have that, with probability $1-o(1)$, the largest connected component of $R$ has size 
$O(n_R^{1/4}\log{n_R})$. Since $n_R=O(\sqrt{n}\log n)$, this is $O(n^{1/8}\ln^{5/4}{n})$, and then Algorithm \Aone rejects.
\end{proof}

%%%%%%%%%%%%%%%%%%%%%%%%%%%%%%%%%%%%%
\subsubsection{Degree distribution in the Reservoir}

Let $\mathcal{D}_R$ denote the distribution of degrees in $R$.
Let $N_R$ denote the number of nodes spanned by the edges of $R$, with expectation $n_R=\E[N_R],$ 
and let $X_i$ be the random variable equal to the number of nodes of degree $i$ in $R$, with expectation  
$x_i=\E[ X_i   ]$,
where the expectation is over the Configuration-last process.
We will use the following basic fact: \begin{lemma}\label{fact:basic}
If $n\geq 0$ and $0\leq x\leq 1$ then $(1-x)^n\geq 1-nx$.
\end{lemma}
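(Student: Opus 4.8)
\textbf{Proof plan for Lemma~\ref{fact:basic}.} The statement is the elementary Bernoulli-type inequality $(1-x)^n \geq 1 - nx$ for integer $n \geq 0$ and $0 \leq x \leq 1$. The plan is to prove it by induction on $n$. The base case $n=0$ is immediate, since $(1-x)^0 = 1 = 1 - 0\cdot x$. For the inductive step, I would assume $(1-x)^n \geq 1 - nx$ and multiply both sides by the nonnegative quantity $(1-x)$, which preserves the inequality; this yields $(1-x)^{n+1} \geq (1-nx)(1-x) = 1 - (n+1)x + nx^2$. Since $nx^2 \geq 0$, the right-hand side is at least $1 - (n+1)x$, completing the step.

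Alternatively, one could observe that the function $f(t) = (1-t)^n$ is convex on $[0,1]$ for $n \geq 1$, so it lies above its tangent line at $t=0$, namely $1 - nt$; but the induction argument is cleaner and avoids invoking calculus. The hypothesis $0 \leq x \leq 1$ is needed precisely to ensure $1-x \geq 0$, so that multiplying through by $(1-x)$ does not reverse the inequality; for $x > 1$ and $n$ even the claim would fail. I expect no real obstacle here: the only point requiring a moment's care is making sure the multiplication step uses nonnegativity of $1-x$ rather than sign-agnostic manipulation, and that the dropped term $nx^2$ is genuinely nonnegative, both of which are guaranteed by the stated range of $x$ and $n$.
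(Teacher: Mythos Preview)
Your induction argument is correct and complete. The paper does not actually prove this lemma; it merely states it as a ``basic fact,'' so your proposal supplies a proof where the paper gives none, and the standard induction you outline is exactly the right approach.
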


\begin{lemma}\label{x1a}
In the Reservoir $R$:
\begin{itemize}
\item
$X_1\leq N_R\leq 2k$.
\item 
$\E (X_1)=x_1\geq 2k(1- O(\frac{\ln\ln n}{\ln n}) ).$
\item
Let $\eta>0$. Then, with probability at least $1-\eta$ we have:  $ X_1\geq 2k(1- O(\frac{\ln\ln n}{\eta.\ln n}))$.
\item
The limit in probability of the ratio $X_1/N_R$ exists and is equal to $\ell_1=1$: that is, $plim_{n \rightarrow \infty} {X_1}/{N_R} =1$.
\end{itemize}
\end{lemma}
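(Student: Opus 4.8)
The plan is to establish the four bullets of Lemma~\ref{x1a} in order, since each feeds into the next. Throughout, I work with the Configuration-last process, so that the Reservoir's degree sequence is obtained by sampling $2k$ stubs uniformly without replacement from the $2m$ stubs of the Zipf degree sequence, and I recall that $m = \Theta(cn\ln(cn)/4)$ and $k = \Theta(c\alpha\sqrt{n}\log n/4)$, so that $2k/2m = \Theta(\alpha/\sqrt{n})$ up to logarithmic factors — in any case $k/m = o(1)$ and $k = \Theta(\sqrt{n}\log n)$.

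\textbf{First bullet.} The bound $N_R \le 2k$ is immediate: $R$ has exactly $k$ edges, hence at most $2k$ endpoints. The bound $X_1 \le N_R$ is trivial since $X_1$ counts a subset of the spanned nodes. \textbf{Second bullet.} For the lower bound on $\E(X_1)$, I fix a vertex $v$ of the overall graph $G$ and compute the probability that exactly one of its $d_v$ stubs is among the $2k$ sampled stubs; then $x_1 = \sum_v \Pr[\text{exactly one stub of } v \text{ sampled}]$. The dominant contribution comes from the $\Theta(n)$ degree-$1$ vertices and from the many low-degree vertices: a degree-$1$ vertex is ``good'' (contributes a degree-$1$ node to $R$) exactly when its single stub is sampled, which happens with probability $2k/2m = k/m$. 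More useful is to bound from below the expected number of \emph{sampled stubs that land on a vertex none of whose other stubs are sampled}: each of the $2k$ sampled stubs belongs to a vertex $v$ of some degree $d_v$, and conditioned on that stub being sampled, the probability that none of the other $d_v - 1$ stubs of $v$ is also sampled is at least $\left(1 - \frac{2k}{2m-1}\right)^{d_v-1} \ge 1 - (d_v-1)\cdot\frac{2k}{2m-1}$ by Lemma~\ref{fact:basic}. Since $d_v \le d_{\max} = \sqrt{cn}$ and $2k/2m = O(\log n/\sqrt n)$, this probability is $1 - O(\sqrt{cn}\cdot\log n/\sqrt n) = 1 - O(\log n)$ — which is too weak for the large-degree vertices. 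The fix is the standard one: the total number of stubs attached to vertices of degree exceeding $\sqrt n/\log^2 n$ (say) is only a negligible fraction of $2m$ because the Zipf sequence has $\Theta(cn\ln(cn))$ stubs in total but only $O(\sqrt{n}\cdot\text{polylog})$ of them sit on such high-degree vertices (using the third bullet of Lemma~\ref{lemma:basic}); so all but an $O(\log n/\sqrt n \cdot \sqrt n /\log^2 n) = O(1/\log n)$ fraction of sampled stubs sit on vertices of degree $O(\sqrt n/\log^2 n)$, and for those the ``isolation'' probability is $1 - O(d_v \cdot \log n/\sqrt n) = 1 - O(1/\log n)$. More carefully, weighting a degree-$i$ vertex by $i$, the average degree among stubs is $\Theta(\sum_i i^2 d_i / \sum_i i d_i) = \Theta(\sqrt n \cdot \text{something})$ which is again dominated by the tail, so it is cleanest to truncate at degree $\sqrt n/\ln^a n$ for a suitable constant $a$ and absorb the tail separately. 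After truncation, $\E(X_1) \ge 2k\big(1 - O(\frac{\ln\ln n}{\ln n})\big)$, where the $\ln\ln n$ appears because the relevant weighted-average degree is $\Theta(c\ln(cn)/2) = \Theta(\log n)$ for the bulk and the next correction term after truncation at a polylog threshold contributes a $\ln\ln n / \ln n$ relative error. I expect pinning down exactly this $\ln\ln n/\ln n$ rate — rather than just $o(1)$ — to be the main technical obstacle, requiring a careful split of $\sum_v$ according to $d_v$ and a summation of $\sum_i i\, d_i(n)/i^2$-type tails against the sampling probability.

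\textbf{Third bullet.} Given $\E(X_1) = x_1 \ge 2k - O(k \ln\ln n/\ln n)$ and $X_1 \le 2k$ deterministically, the non-negative random variable $Y := 2k - X_1$ has $\E(Y) = O(k\ln\ln n/\ln n)$. By Markov's inequality, $\Pr[Y \ge \E(Y)/\eta] \le \eta$, so with probability at least $1-\eta$ we have $X_1 = 2k - Y \ge 2k - O(k\ln\ln n/(\eta \ln n)) = 2k\big(1 - O(\frac{\ln\ln n}{\eta \ln n})\big)$, which is exactly the claim. \textbf{Fourth bullet.} For the limit in probability of $X_1/N_R$: from the first bullet $X_1 \le N_R \le 2k$, so $X_1/N_R \le 1$ always. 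From the third bullet, for any fixed $\eta$, with probability $\ge 1-\eta$ we have $X_1 \ge 2k(1 - o_\eta(1)) \ge N_R(1 - o_\eta(1))$ since $N_R \le 2k$; hence $X_1/N_R \ge 1 - o(1)$ with probability $\ge 1 - \eta$. Letting $n\to\infty$ then $\eta\to 0$ gives $\operatorname{plim}_{n\to\infty} X_1/N_R = 1$, which identifies $\ell_1 = 1$ — consistent with the fact that in the Zipf distribution the fraction of degree-$1$ nodes is $c \approx 6/\pi^2$ but, because high-degree vertices absorb many stubs while contributing few sampled vertices relative to their stub count, in the sparse Reservoir almost every spanned vertex is touched by exactly one sampled stub. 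The only subtlety here is that $N_R$ is itself random, but since it is sandwiched, $2k(1-o(1)) \le \E(X_1) \le \E(N_R) \le 2k$, the ratio concentrates without needing a separate concentration argument for $N_R$.
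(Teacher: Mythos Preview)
Your arguments for the first, third, and fourth bullets are essentially the paper's: the trivial sandwich $X_1\le N_R\le 2k$; Markov on $Y=2k-X_1$; and the sandwich again for the plim. (The paper is slightly more concrete in the last step, choosing $\eta=(\ln\ln n)^2/\ln n$ so that both $\eta\to 0$ and $O(\ln\ln n/(\eta\ln n))=O(1/\ln\ln n)\to 0$ simultaneously, but your version is fine.)

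For the second bullet you have the right idea---truncate by degree and treat low- and high-degree vertices separately---but the sketch contains arithmetic slips that keep you from actually reaching the $\ln\ln n/\ln n$ rate. First, the sampling ratio is $k/m=\alpha/\sqrt n$ exactly (since $k\sim c\alpha\sqrt n\ln n/4$ and $m\sim cn\ln n/4$), not $O(\log n/\sqrt n)$ as you write in two places. Second, the expression ``$O(\log n/\sqrt n\cdot\sqrt n/\log^2 n)=O(1/\log n)$'' is not the fraction of stubs on high-degree vertices: in the Zipf sequence the fraction of \emph{stubs} on vertices of degree exceeding $T$ is $\sum_{j>T}\lfloor cn/j^2\rfloor\, j\,/\,2m\approx \ln(\sqrt{cn}/T)\big/\ln(\sqrt{cn})$, and it is this logarithmic tail that produces the $\ln\ln n/\ln n$.

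The paper's execution is cleaner and worth emulating. It writes the exact formula
\[
x_1=\sum_{j=1}^{\sqrt{cn}}\Big\lfloor\tfrac{cn}{j^2}\Big\rfloor\, j\,\tfrac{\alpha}{\sqrt n}\Big(1-\tfrac{\alpha}{\sqrt n}\Big)^{j-1},
\qquad
2k=\sum_{j=1}^{\sqrt{cn}}\Big\lfloor\tfrac{cn}{j^2}\Big\rfloor\, j\,\tfrac{\alpha}{\sqrt n},
\]
i.e.\ $2k$ is the same sum without the isolation factor. Truncating at $j\le\epsilon\sqrt{cn}$ with $\epsilon=2\ln\ln n/(\alpha\sqrt c\,\ln n)$, Lemma~\ref{fact:basic} gives $(1-\alpha/\sqrt n)^{j-1}\ge 1-\epsilon\alpha\sqrt c=1-O(\ln\ln n/\ln n)$ for the retained terms, while the discarded tail of the $2k$-sum is at most $c\sqrt n\,\alpha\ln(1/\epsilon)=O(2k\cdot\ln\ln n/\ln n)$. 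Both errors are $O(\ln\ln n/\ln n)$, yielding $x_1\ge 2k(1-O(\ln\ln n/\ln n))$. The device of recognising $2k$ as the ``un-damped'' version of the $x_1$ sum is what makes the comparison one line instead of a separate tail estimate for each side.
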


\begin{proof}
The first statement is obvious: the number of vertices of degree 1 is at most the toal number $N_R$ of vertices spanned by the $k$ edges of $R$, which is at most $2k$.

For the second statement, we start from an exact expression for the expected number of nodes of degree 1  in $R$. A vertex that has degree $j$ in $G$ has degree 1 in $R$ if and only if the Reservoir picks exactly one of its $j$ edges, which has probability $j\cdot (\alpha/\sqrt{n})(1-\alpha/\sqrt{n})^{j-1}$. Since there are $\lfloor cn/j^2\rfloor$ vertices of degree $j$ in $G$, we can write:
$$x_1= \sum_{j=1}^{\sqrt{c.n}} ~\lfloor \frac{c.n}{j^2}\rfloor \cdot j \frac{\alpha}{\sqrt{n}} (1-\frac{\alpha}{\sqrt{n}})^{j-1}.$$
Let  $\epsilon=\frac{2\ln\ln n }{\alpha\sqrt{c}\ln(n)}$. We use Lemma~\ref{fact:basic} to write:
$$(1-\frac{\alpha}{\sqrt{n}})^{j-1}\geq 
\left\{
\begin{array}{ll}
1-\epsilon{\alpha}{\sqrt{c}} & \hbox{if } j\leq \epsilon\sqrt{cn}\\
0 & \hbox{otherwise.}
\end{array}
\right. $$
Thus 
$$x_1\geq (1-\epsilon{\alpha}{\sqrt{c}}) \sum_{j=1}^{\epsilon \sqrt{c.n}} ~\lfloor \frac{c.n}{j^2}\rfloor \cdot j \frac{\alpha}{\sqrt{n}} .$$
On the other hand, since $2k$ equals the expected sum of the degrees of vertices in $R$, and a vertex of degree $j$ in $G$ has expected degree $j\cdot \alpha / \sqrt{n}$ in $R$, we can therefore write
$$2k=\sum_{j=1}^{\sqrt{c.n}} ~\lfloor \frac{c.n}{j^2}\rfloor \cdot j \frac{\alpha}{\sqrt{n}} $$ 
and
$$x_1\geq (1-\epsilon{\alpha}{\sqrt{c}}) \left(  
2k - \sum_{j=\epsilon\sqrt{cn}+1 }^{\sqrt{c.n}} ~\lfloor \frac{c.n}{j^2}\rfloor \cdot j \frac{\alpha}{\sqrt{n}} 
\right).$$
Now we can bound the last term by
$$
\sum_{j=\epsilon\sqrt{cn}+1 }^{\sqrt{c.n}} ~\lfloor \frac{c.n}{j^2}\rfloor \cdot j \frac{\alpha}{\sqrt{n}}  \leq
{c\sqrt{n} \alpha } \int_{\epsilon\sqrt{cn} }^{\sqrt{c.n}}\frac{dt}{t} = {c\sqrt{n} \alpha }\ln (1/\epsilon).
$$
Recall that $k=m\alpha/\sqrt{n}$ and $m\sim cn\ln(n)/4$, so that $2k\sim c\sqrt{n}\alpha \ln(n)/2$. We obtain:
$$x_1\geq 2k (1-2\frac{\ln(1/\epsilon)}{\ln(n)})  (1-\epsilon{\alpha}{\sqrt{c}}).$$
Substituting the value of $\epsilon$, we obtain $x_1\geq 2k(1-O(\frac{\ln\ln n}{\ln n}))$.

For the third statement, we use Markov's inequality on $Y_1=2k-X_1$. We have: $Y_1\geq 0$, and $\E (Y_1)=2k-x_1 \leq 2kO(\frac{\ln\ln n}{\ln n}) $ by Lemma~\ref{x1a}, so $\Pr (Y_1>\E(Y_1)/\eta)<\eta$, hence the result.

For the fourth statement, observe that $X_1/N_R\leq 1$ always. Let  $\eta= (\ln \ln n)^2/\ln n$. Then, by the previous statement, with probability $1-\eta=1-o(1)$ we have
$$\frac{X_1}{N_R}\geq \frac{X_1}{2k} \geq 1-O(\ln\ln n /\ln n)\frac{1}{\eta}=1-O(1/\ln\ln n), \hbox{ i.e.}$$ 
$$\Pr(|\frac{X_1}{N_R}-1|\geq \epsilon)= O(\frac{\ln\ln n}{\epsilon \ln n}),$$ 
hence for all  $ \eps$ we have $\lim_{n \rightarrow \infty} Pr[ |\frac{X_1}{N_R}  -l_1|\geq \eps]=0 $, therefore $plim_{n \rightarrow \infty} \frac{X_1}{N_R} =1$.
\end{proof}

\begin{lemma}\label{xi}
In the Reservoir $R$:
\begin{itemize}
\item
for $i\geq 2$, the expected number of nodes of degree $i$ satisfies 
 $\E (X_i)=x_i \leq \frac{(\alpha\sqrt{c})^i } {i!.i-1} . \sqrt{c.n}.$
 \item
 The limit in probability of the ratio $X_i/N_R$ exists and is equal to $\ell_i=0$: that is, $plim_{n \rightarrow \infty} {X_i}/{N_R} =0$.
 \end{itemize}
\end{lemma}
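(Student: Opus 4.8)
The plan is to prove the two bullets separately: the first by a direct first-moment computation, the second by Markov's inequality combined with the concentration of $X_1$ already established in Lemma~\ref{x1a}.

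\emph{First bullet (bound on $x_i$).} First I would write down an expression for $x_i=\E(X_i)$ in the same style as the exact expression derived for $x_1$ in the proof of Lemma~\ref{x1a}. A vertex of degree $j$ in $G$ has degree $i$ in $R$ exactly when $i$ of its $j$ edges are picked for the Reservoir, an event of probability $\binom{j}{i}(\alpha/\sqrt n)^{i}(1-\alpha/\sqrt n)^{j-i}$; since there are $\lfloor cn/j^2\rfloor$ vertices of degree $j$ in $G$ and a vertex needs $j\ge i$ to reach degree $i$,
$$x_i=\sum_{j=i}^{\sqrt{cn}}\Big\lfloor\frac{cn}{j^2}\Big\rfloor\binom{j}{i}\Big(\frac{\alpha}{\sqrt n}\Big)^{i}\Big(1-\frac{\alpha}{\sqrt n}\Big)^{j-i}.$$
Then I would bound each factor crudely: $\lfloor cn/j^2\rfloor\le cn/j^2$, $\binom{j}{i}\le j^i/i!$, and $(1-\alpha/\sqrt n)^{j-i}\le 1$, which gives $x_i\le \frac{cn}{i!}(\alpha/\sqrt n)^i\sum_{j=i}^{\sqrt{cn}}j^{i-2}$. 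The remaining sum is estimated by comparison with $\int_i^{\sqrt{cn}}t^{i-2}\,dt$: for $i\ge 3$ one gets $\sum_{j=i}^{\sqrt{cn}}j^{i-2}\le i^{i-2}+\frac{(\sqrt{cn})^{i-1}-i^{i-1}}{i-1}\le\frac{(\sqrt{cn})^{i-1}}{i-1}$ (using $i^{i-2}\le i^{i-1}/(i-1)$), and for $i=2$ the sum is at most $\sqrt{cn}=(\sqrt{cn})^{i-1}/(i-1)$. Substituting $(\alpha/\sqrt n)^i(\sqrt{cn})^{i-1}=\alpha^i c^{(i-1)/2}n^{-1/2}$ and simplifying yields $x_i\le\frac{(\alpha\sqrt c)^i}{i!\,(i-1)}\sqrt{cn}$.

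\emph{Second bullet ($plim\,X_i/N_R=0$).} For this I would use only the first bullet and Lemma~\ref{x1a}. Since every degree-$1$ vertex of $R$ is spanned by an edge of $R$, we have $X_1\le N_R$, hence $X_i/N_R\le X_i/X_1$ whenever $X_1>0$. Fix $\epsilon>0$. By the first bullet $\E(X_i)=x_i=O(\sqrt n)$ for fixed $i$, while $k=\Theta(c\alpha\sqrt n\log n)$; Markov's inequality therefore gives $\Pr(X_i>\epsilon k)\le x_i/(\epsilon k)=O(1/(\epsilon\log n))=o(1)$. On the other side, the third bullet of Lemma~\ref{x1a} with $\eta=(\ln\ln n)^2/\ln n$ (as in its own proof) shows $X_1\ge 2k(1-O(1/\ln\ln n))\ge k$ with probability $1-o(1)$. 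On the intersection of these two events $X_i/N_R\le X_i/X_1\le \epsilon k/k=\epsilon$, so a union bound gives $\Pr(X_i/N_R\ge\epsilon)=o(1)$ for every $\epsilon>0$; that is, $plim_{n\rightarrow\infty}X_i/N_R=0=\ell_i$.

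\emph{Main obstacle.} There is no real obstacle here: the first bullet is a one-line first-moment computation followed by a sum-to-integral estimate, and the second is Markov's inequality plus the already-established concentration of $X_1$. The only point requiring a little care is extracting the exact constant $1/(i-1)$ out of $\sum_{j=i}^{\sqrt{cn}}j^{i-2}$ without lower-order slack, which forces one to treat $i=2$ (constant summand) separately from $i\ge3$. Conceptually, the reason $x_i=O(\sqrt n)$ — rather than $\Theta(\sqrt n\log n)$ as for $x_1$ — is that degree-$i$ vertices of $R$ with $i\ge2$ come predominantly from the $O(1)$ highest-degree vertices in each degree class of $G$, whereas degree-$1$ vertices of $R$ come from the $\Theta(n)$ low-degree vertices.
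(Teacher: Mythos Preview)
Your argument is correct and, for the first bullet, essentially identical to the paper's: the same exact expression for $x_i$, the same three crude bounds $\lfloor cn/j^2\rfloor\le cn/j^2$, $\binom{j}{i}\le j^i/i!$, $(1-k/m)^{j-i}\le 1$, and a sum--to--integral estimate for $\sum_j j^{i-2}$. (One small caveat: for an increasing integrand your inequality $\sum_{j=i}^{N}j^{\,i-2}\le i^{\,i-2}+\int_i^{N}t^{\,i-2}\,dt$ goes the wrong way---try $i=3$, $N=5$, where $12>11$. The paper simply writes $\sum_{j=i}^{\sqrt{cn}}j^{\,i-2}\le\int_0^{\sqrt{cn}}t^{\,i-2}\,dt$, which has the same $O(N^{\,i-2})$ looseness at the top endpoint. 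In both versions the slack is lower order against $N^{\,i-1}/(i-1)$, so the stated bound holds asymptotically, which is all that is ever used.)

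For the second bullet you take a longer route than the paper. You feed the first bullet into Markov's inequality to get $\Pr(X_i>\epsilon k)=O(1/(\epsilon\log n))$, invoke Lemma~\ref{x1a} for $X_1\ge k$ with high probability, and combine via $X_i/N_R\le X_i/X_1$. The paper instead uses the deterministic inequality $X_i\le N_R-X_1$ (degree classes are disjoint) together with $plim\,X_1/N_R=1$ from Lemma~\ref{x1a}, giving $X_i/N_R\le 1-X_1/N_R\to 0$ in one line, without using the first bullet at all. Your approach is perfectly valid; the paper's is just shorter and avoids the extra Markov step.
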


\begin{proof}
For the first statement, consider that a vertex that has degree $j$ in $G$ has degree $i$ in $R$ if and only if the Reservoir picks exactly $i$ of its $j$ edges; since there are $\lfloor c. n/j^2\rfloor$ vertices of degree $j$ in $G$, we can write:
$$x_i= \sum_{j= i}^{\sqrt{c.n}} ~\lfloor \frac{c.n}{j^2}\rfloor .{j \choose i}. (\frac{k}{m})^i. (1-\frac{k}{m})^{j-i} $$
Since $(1-\frac{k}{m})^{j-i}\leq 1$ and ${j\choose i}\leq j^i/i!$, 
$$x_i \leq  c.n.(\frac{k}{m})^i .\frac{1}{i!} \sum_{j= i}^{\sqrt{c.n}} ~\frac{1}{j^2}.{j^i}.$$
We use 
$\sum_{j= i}^{\sqrt{c.n}} {j^{i-2}}\leq \int_0^{\sqrt{cn}}t^{i-2}dt=\frac{1}{i-1}\sqrt{cn}^{i-1}$. 
$$x_i \leq  c.n.(\frac{\alpha}{\sqrt{n}})^i .   \frac{1}{i!(i-1)}\sqrt{cn}^{i-1} 
=    \frac{ ({\alpha}\sqrt{c})^i}{i!(i-1)}\sqrt{cn},$$
hence the first statement.
The second statement follows from $X_i\leq N_R-X_1$ and the fact that $plim_{n\rightarrow\infty}X_1/N_R=1$. 
\end{proof}

\begin{lemma}\label{maxvertexdegreeR}
Let $\eta>0$. Let $i^*$ denote the maximum vertex degree in the Reservoir $R$. Then:
\begin{itemize}
\item
With probability at least $1-\eta$, we have $i^*\leq \log (n/\eta) / \log\log (n/\eta)$.
\item
$\E(i^*)=O(\log n / \log\log n)$.
\end{itemize}
\end{lemma}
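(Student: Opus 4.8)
The plan is to bound the maximum degree $i^*$ in the Reservoir by a union bound over vertices, using the fact that a vertex of degree $j$ in $G$ retains at most $\mathrm{Bin}(j,\alpha/\sqrt n)$ edges in $R$, and that the Zipf sequence has $d_{\max}=\sqrt{cn}$, so every vertex has sampling probability controlled by $\alpha/\sqrt n$ and expected in-Reservoir degree at most $\alpha\sqrt c$. First I would fix a threshold $D=D(n,\eta)$ and write
$$\Pr[i^*\geq D]\leq \sum_v \Pr[\deg_R(v)\geq D]\leq \sum_{j=1}^{\sqrt{cn}} d_j(n)\,\Pr[\mathrm{Bin}(j,\alpha/\sqrt n)\geq D].$$
Since $j\le\sqrt{cn}$, each binomial is stochastically dominated by $\mathrm{Bin}(\sqrt{cn},\alpha/\sqrt n)$, which has mean $\alpha\sqrt c=\Theta(1)$; so I would apply the standard Chernoff/Poisson tail bound $\Pr[\mathrm{Bin}(N,p)\geq D]\leq (eNp/D)^{D}=(e\alpha\sqrt c/D)^{D}$ for $D\geq e\alpha\sqrt c$. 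Summing over all $n$ vertices gives $\Pr[i^*\geq D]\leq n\cdot(e\alpha\sqrt c/D)^{D}$.

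Next I would choose $D$ so that $n\cdot(e\alpha\sqrt c/D)^{D}\leq\eta$, i.e. $D\log(D/(e\alpha\sqrt c))\geq \log(n/\eta)$. The solution is $D=\Theta(\log(n/\eta)/\log\log(n/\eta))$: plugging in $D=C\log(n/\eta)/\log\log(n/\eta)$ makes $\log(D/(e\alpha\sqrt c))=(1+o(1))\log\log(n/\eta)$, so the left side is $(1+o(1))C\log(n/\eta)\geq\log(n/\eta)$ for $C$ a suitable constant (and $\alpha,c,\gamma,\delta$ being fixed constants, $e\alpha\sqrt c$ is absorbed). This yields the first bullet. For the second bullet, I would integrate the tail: $\E(i^*)=\sum_{D\geq 1}\Pr[i^*\geq D]$, split the sum at $D_0=\Theta(\log n/\log\log n)$ (taking $\eta$ a constant in the first bullet), bound the head by $D_0$ and the tail $\sum_{D\geq D_0} n(e\alpha\sqrt c/D)^D$ by a geometric-type series whose terms decay superpolynomially once $D>D_0$, giving a tail contribution of $o(1)$; hence $\E(i^*)=O(\log n/\log\log n)$.

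The only subtlety — not really an obstacle — is that in the Configuration-last process the Reservoir's degree sequence is obtained by sampling $2k$ stubs without replacement rather than each edge independently; I would note that sampling $2k$ stubs out of $2m$ uniformly without replacement makes the number of stubs of a fixed vertex of degree $j$ hypergeometric with parameters $(2m,j,2k)$, which is stochastically dominated by $\mathrm{Bin}(j,k/m)=\mathrm{Bin}(j,\alpha/\sqrt n)$ (Hoeffding's reduction from sampling without replacement to sampling with replacement preserves the upper tail), so the Chernoff bound above still applies verbatim. Everything else is a routine computation with the Zipf counts $d_j(n)=\lfloor cn/j^2\rfloor$ and the elementary facts in Lemma~\ref{lemma:basic}.
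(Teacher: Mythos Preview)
Your approach is correct and follows the same outline as the paper: union bound over vertices, a binomial upper-tail estimate, then tail summation for $\E(i^*)$. The one substantive difference is in how the union bound is cashed out. The paper keeps the Zipf counts $\lfloor cn/j'^2\rfloor$ in the sum, expands the tail $\sum_{j\ge i}\binom{j'}{j}(k/m)^j(1-k/m)^{j'-j}$, exchanges the order of summation, and uses $\sum_{j'\ge j} j'^{\,j-2}\le (cn)^{(j-1)/2}/(j-1)$ to extract a prefactor of order $\sqrt{cn}$. Your simplification---dominating every $\mathrm{Bin}(j',\alpha/\sqrt n)$ by $\mathrm{Bin}(\sqrt{cn},\alpha/\sqrt n)$ and then summing $\sum_{j'} d_{j'}(n)=n$---is cleaner but costs a factor of $\sqrt n$ in the prefactor. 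The price is only in the constant: your argument yields $i^*\le C\log(n/\eta)/\log\log(n/\eta)$ for some $C>1$, whereas the paper's tighter prefactor supports the constant~$1$ written in the lemma. For the downstream use (verifying $i^*<N_R^{1/9}$ in the proof of Theorem~\ref{t1}) this is immaterial.

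Your aside on the hypergeometric-versus-binomial issue is a point the paper glosses over. One terminological nit: the hypergeometric is not literally stochastically dominated by the binomial; what Hoeffding's reduction gives is that the same exponential (Chernoff) upper-tail bound applies to sampling without replacement, which is exactly what you need here.
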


\begin{proof}
Consider the first statement. For any $i$ we have $\Pr(i^*<i)=1-\Pr(\exists v \text{ of degree }\geq i)$. Let $i\geq 2\alpha\sqrt{c}$. With the union bound, we compute:
$$\Pr(\exists v \text{ of degree }\geq i)\leq \sum_{i\leq j'\leq\sqrt{cn}} \lfloor \frac{cn}{j^2}\rfloor \sum_{i\leq j\leq j'} {j'\choose j} (\frac{k}{m})^j(1-\frac{k}{m})^{j'-j}.$$
A short calculation ensues: ${j'\choose j}\leq j'^j/j!$, and $1-k/m\leq 1$. Exchanging the order of summation, the right hand side is bounded by
$$\sum_{j=i}^{\sqrt{cn}}  \frac{cn(k/m)^j}{j!}  \sum_{j'=j}^{\sqrt{cn}} j'^{j-2}
\leq  
\sum_{j=i}^{\sqrt{cn}} \frac{cn(k/m)^j}{j!}  \frac{1}{j-1} \sqrt{cn}^{j-1}.$$
Substituting $k/m=\alpha\sqrt{c}$, the right-hand-side is bounded by
$$\sqrt{cn}\sum_{j= i}^{\sqrt{cn}}\frac{1}{j-1}\frac{(\alpha\sqrt{c})^j}{j!}.$$
Since $j\geq i\geq 2\alpha\sqrt{c}$, we have $\alpha\sqrt{c}/j\leq 1/2$, and so we can use bound the sum by the first term times $(1+1/2+1/4+\cdots ) \leq 2$:
$$\sqrt{cn}\sum_{j= i}^{\sqrt{cn}}\frac{1}{j-1}\frac{(\alpha\sqrt{c})^j}{j!}\leq 2\sqrt{cn}\frac{1}{i-1}\frac{(\alpha\sqrt{c})^i}{i}.$$
Fix $\eta>0$. The right-hand side is less than $\eta$ for $i=\frac{\lg(n/\eta)}{\lg\lg (n/\eta)}$.

Consider the second statement. We write 
$$\E(i^*)=\sum_{i\geq 1}\Pr(i^*\geq i)\leq i_0+\sum_{i\geq i_0}\Pr(\exists v \text{ of degree }\geq i).$$
For any $i_0\geq 2\alpha\sqrt{c}$, again we can write
$$\E(i^*)\leq i_0+  \sum_{i\geq i_0}  \sqrt{cn}\sum_{j= i}^{\sqrt{cn}}\frac{1}{j-1}\frac{(\alpha\sqrt{c})^j}{j!}  
\leq i_0+ 2 \sum_{i\geq i_0}  \sqrt{cn}\frac{1}{i-1}\frac{(\alpha\sqrt{c})^i}{i!} $$
$$\leq i_0+ 4\sqrt{cn}\frac{1}{i_0-1}\frac{(\alpha\sqrt{c})^{i_0}}{i_0!} 
$$
Minimizing the right-hand-side over $i_0\geq 2\alpha\sqrt{c}$ gives $\E(i^*)=O(\lg n/\lg\lg n)$.
\end{proof}

\begin{lemma}\label{uniformconvergence} (Uniform convergence)
$$plim_{n\rightarrow\infty} \sum_{i\geq 2}\frac{ i(i-2)X_i}{N_R}=0.$$
%For every $\epsilon >0$ there exists $N$ such that for every $n>N$ we have:
%$$\Pr(\forall i\geq 2, ~~~|i(i-2)\frac{X_i}{N_R} |<\epsilon)>1-\eta .$$
\end{lemma}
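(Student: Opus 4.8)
The plan is to show that the \emph{numerator} $\sum_{i\geq 2} i(i-2)X_i$ has expectation only $O(\sqrt n)$, while the \emph{denominator} $N_R$ is $\Omega(\sqrt n\ln n)$ with probability $1-o(1)$; the extra logarithmic factor in the denominator is exactly what forces the ratio to $0$. Note first that $i(i-2)\geq 0$ for every $i\geq 2$ (and $=0$ at $i=2$), so the random variable $Z:=\sum_{i\geq 2} i(i-2)X_i$ is non-negative and the sum is really over $i\geq 3$.

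First I would bound $\E[Z]=\sum_{i\geq 2} i(i-2)\,x_i$ by plugging in the estimate $x_i\leq \frac{(\alpha\sqrt c)^i}{i!(i-1)}\sqrt{cn}$ from Lemma~\ref{xi}. Since $\frac{i(i-2)}{i!\,(i-1)}=\frac{(i-1)^2-1}{(i-1)\,i!}<\frac{i-1}{i!}\leq\frac{1}{(i-1)!}$ for $i\geq 2$, this gives
$$\E[Z]\ \leq\ \sqrt{cn}\sum_{i\geq 2}\frac{(\alpha\sqrt c)^i}{(i-1)!}\ =\ \alpha\sqrt c\,\sqrt{cn}\sum_{j\geq 1}\frac{(\alpha\sqrt c)^j}{j!}\ \leq\ \alpha\sqrt c\,e^{\alpha\sqrt c}\,\sqrt{cn}\ =\ O(\sqrt n),$$
where we used that $\alpha$ and $c$ are fixed constants, so the series converges. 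Then, by Markov's inequality applied to the non-negative variable $Z$: if $\eta_n\to 0$ slowly (e.g. $\eta_n=1/\ln\ln n$), then with probability at least $1-\eta_n$ we have $Z\leq O(\sqrt n/\eta_n)=O(\sqrt n\,\ln\ln n)$.

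Next I would lower bound $N_R$ using Lemma~\ref{x1a}: with probability at least $1-\eta_n$ we have $N_R\geq X_1\geq 2k\bigl(1-O(\ln\ln n/(\eta_n\ln n))\bigr)$, which for $\eta_n=1/\ln\ln n$ is $2k(1-o(1))=\Omega(k)=\Omega(\sqrt n\,\ln n)$ since $k=\Theta(\alpha\sqrt n\,\ln n)$. A union bound over these two events shows that with probability $1-o(1)$,
$$\sum_{i\geq 2}\frac{i(i-2)X_i}{N_R}\ =\ \frac{Z}{N_R}\ \leq\ \frac{O(\sqrt n\,\ln\ln n)}{\Omega(\sqrt n\,\ln n)}\ =\ O\!\left(\frac{\ln\ln n}{\ln n}\right)\ \xrightarrow[n\to\infty]{}\ 0.$$
Since the left-hand side is non-negative, this gives $\Pr[\,|\sum_{i\geq 2} i(i-2)X_i/N_R|\geq\eps\,]\to 0$ for every fixed $\eps>0$, i.e. $plim_{n\rightarrow\infty}\sum_{i\geq 2} i(i-2)X_i/N_R=0$.

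The main obstacle — and the only non-routine point — is the expectation bound $\E[Z]=O(\sqrt n)$. One must notice that the naive bound $Z\leq\sum_{i\geq 1}i^2X_i$ is hopeless: that quantity equals $\sum_{v\in R}(\deg_R v)^2$ and is already $\Theta(\sqrt n\,\ln n)$, of the same order as $N_R$. What saves us is that the dominant contribution comes from degree-$1$ vertices, which are excluded from $\sum_{i\geq 2}$, while the surviving terms decay like $(\alpha\sqrt c)^i/i!$; this super-exponential decay is in turn a consequence of the Reservoir sampling rate being $\Theta(1/\sqrt n)$, so that $k/m=\alpha\sqrt c/\sqrt n$ with $\alpha\sqrt c$ a constant (a larger sampling rate would make the series diverge). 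Everything else — Markov on $Z$, the lower bound on $N_R$ from Lemma~\ref{x1a}, and the union bound — is routine given the earlier lemmas.
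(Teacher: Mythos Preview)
Your argument is correct and matches the paper's proof essentially step for step: bound $\E\bigl[\sum_{i\ge 2}i(i-2)X_i\bigr]$ via Lemma~\ref{xi} and the convergent series $\sum_{i\ge 2}(\alpha\sqrt c)^i/(i-1)!$, apply Markov to the numerator, lower bound $N_R$ by $X_1$ via Lemma~\ref{x1a}, and combine with a union bound and a slowly vanishing $\eta$. The only cosmetic difference is that the paper additionally invokes Lemma~\ref{maxvertexdegreeR} to control the maximum degree in $R$, but that bound is not actually used in the final estimate (the series over all $i\ge 2$ already converges), so your omission of it is harmless.
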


%%%%%%%%%%%%%%%%%%%%%%%%%%
\begin{proof}
We must prove that for all $\epsilon >0$, we have $ \Pr(\sum_{i\geq 2} i(i-2){X_i}/{N_R}>\epsilon)=o(1)$.

Let $\eta>0$. 
%Lemma statement: Let $N=e^{e^{\Theta(\eta^2\log\log n)}}$. Then we have:
%$$\forall n>N, ~~~\Pr( \forall i\geq 2, ~~~|i(i-2)\frac{X_i}{N_R} |<\frac{O(1)}{\eta \log\log n})>1-\eta .$$
We first use $N_R\geq X_1$ to infer $i(i-2)X_i/N_R\leq (i(i-2)X_i/X_1$. 
\begin{itemize}
\item
By Lemma~\ref{x1a}, with probability at least $1-\eta$ we have $X_1\geq 2k (1-(1/\eta)\ln\ln n/\ln n)$. 
\item
By Lemma~\ref{maxvertexdegreeR}, with probability at least $1-\eta$, the maximum degree $i^*$ satisfies $i^*=O(\log n/\log\log n) $. 
\item
Let $2\leq i\leq i^*$. By Markov's inequality, with probability at least $1-\eta$ we have:\\
 $\sum_{i\geq 2} i(i-2)X_i\leq \sum_{i\geq 2}  i(i-2)x_i/\eta$. Using  Lemma~\ref{xi}, this implies:\\
  $\sum_{i\geq 2} i(i-2)X_i\leq \sum_{i\geq 2}  \frac{(\alpha\sqrt{c})^i } {(i-1)!} . \sqrt{c.n}\frac{1}{\eta}
\leq \alpha\sqrt{c}e^{\alpha\sqrt{c}}\sqrt{cn}/\eta$.
\end{itemize}
Combining, with probability $1-O(\eta)$, all the above statements hold, implying:
$$\sum_{i\geq 2} i(i-2)\frac{X_i}{N_R}\leq   \alpha\sqrt{c}e^{\alpha\sqrt{c}}\sqrt{cn} \frac{1}{\eta} \frac{1}{2k(1-(1/\eta)\ln\ln n/\ln n)} $$
Since $\alpha, c=\Theta(1)$ and $k=\theta(\sqrt{n}\log n)$, the above equation means that 
with probability at least $1-O(\eta)$:
$$\sum_{i\geq 2}i(i-2)\frac{X_i}{N_R}\leq \frac{O(1)}{\eta\log n-\log\log n}
$$
%
%By Lemma~\ref{xi}, $i(i-2)x_i \leq \frac{(\alpha\sqrt{c})^i } {(i-1)!} . \sqrt{c.n}\leq \alpha\sqrt{c}e^{\alpha\sqrt{c}}\sqrt{cn}$.
%We use Markov's inequality.
%With probability at least $1-\eta$, we have $i(i-2)X_i\leq (1/\eta) \alpha\sqrt{c}e^{\alpha\sqrt{c}}\sqrt{cn}$. 
%By Lemma~\ref{X1a}, with probability at least $1-\eta$ we have $N_R\geq 2k (1-(1/\eta)\ln\ln n/\ln n)$. Thus, using $\alpha, c=\Theta(1)$, with probability at least $1-2\eta$ we get that 
%$$i(i-2)\frac{X_i}{N_R}\leq \frac{O(1)}{\eta} \frac{\sqrt{n}}{k}=\frac{O(1)}{\eta \log n}$$
%since $k=\theta(\sqrt{n}\log n)$. For any $n>N$ by the union bound we have: the probability that the Lemma holds is    
%$$\frac{O(1)}{\eta \log n} \cdot i^* \leq \frac{O(1)}{\eta \log \log n} < \epsilon $$
%using Lemma~\ref{maxvertexdegreeR} and choosing $N=e^{e^{\Theta{(1}/\epsilon.\eta)}}$, which is independent of $i$.  This concludes the proof. 
Let $\eta=(1/\epsilon+\log\log n)/\log n=o(1)$. Then 
$$\Pr(\sum_{i\geq 2} i(i-2)X_i/N_R \leq O(\epsilon))=1-O(\eta)=1-o(1)$$
\end{proof}

\begin{lemma}\label{averagedegreeR}
In the Reservoir $R$, with probability at least $1-o(1)$, the average vertex degree is at most 2. Here we take expectations over the Configuration.last model, and then compute the average degree over all vertices in the Reservoir.
\end{lemma}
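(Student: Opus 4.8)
The plan is to reduce the statement to a lower bound on the number $N_R$ of distinct vertices spanned by the Reservoir. By construction of the configuration model, $R$ has $k$ edges and therefore exactly $2k$ stubs, so the sum of the degrees of the vertices of $R$ equals $2k$ and the average vertex degree in $R$ equals $2k/N_R$. Consequently the assertion ``average degree at most $2$'' is equivalent to ``$N_R\geq k$'', and it suffices to prove that $N_R\geq k$ with probability $1-o(1)$.

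For this I would combine the trivial inequality $N_R\geq X_1$ with the third bullet of Lemma~\ref{x1a}, which states that for any $\eta>0$, with probability at least $1-\eta$ one has $X_1\geq 2k\bigl(1-O(\ln\ln n/(\eta\ln n))\bigr)$. Choosing $\eta=1/\ln\ln n$ makes the failure probability $\eta=o(1)$ while simultaneously forcing the error term down to $\ln\ln n/(\eta\ln n)=(\ln\ln n)^2/\ln n=o(1)$. Hence, with probability $1-o(1)$, we obtain $N_R\geq X_1\geq 2k(1-o(1))\geq k$ for all sufficiently large $n$, which is exactly what is required. (Alternatively, using the second bullet of Lemma~\ref{x1a} one even gets $n_R=\E[N_R]\geq x_1\geq 2k(1-o(1))\geq k$ directly, covering the variant of the statement phrased in terms of the expected degree distribution.)

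There is no genuine obstacle here: the lemma is essentially an immediate corollary of the concentration of $X_1$ around $2k$ already established in Lemma~\ref{x1a}. The only points needing a little care are bookkeeping rather than substance — namely that the expectation in Lemma~\ref{x1a} is taken over the same Configuration-last process used in this lemma, so that the two statements compose cleanly, and that the identity ``average degree $=2k/N_R$'' holds exactly because the multigraph $R$ has $2k$ stubs irrespective of any self-loops or multi-edges.
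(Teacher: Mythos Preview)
Your proof is correct and follows essentially the same route as the paper: rewrite the average degree as $2k/N_R$, reduce to $N_R\geq k$, bound $N_R\geq X_1$, and invoke Lemma~\ref{x1a} to show $X_1\geq k$ with probability $1-o(1)$. The paper is terser and quotes the failure probability directly as $O(\ln\ln n/\ln n)$ (corresponding to $\eta=\Theta(\ln\ln n/\ln n)$ rather than your $\eta=1/\ln\ln n$), but this is only a sharper choice of the same parameter, not a different argument.
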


\begin{proof}
The average vertex degree in $R$ equals $2k/N_R$. The probability that it exceeds 2 equals the probability that $N_R<k$, which is at most the probability that $X_1<k$, which by Lemma~\ref{x1a} is $O(\log\log n/\log n)=o(1)$. 
\end{proof}

We now consider the last condition for a well-behaved degree sequence: the limit $L({\mathcal D_R})=plim_{n\rightarrow\infty} \sum_i  i(i-2) X_i/N_R$ exists, is equal to $-1$, and  the convergence is uniform:  $\forall\epsilon\exists i_0 \exists N \forall n>N$ ~~$
|\sum_{i\leq i_0}  i(i-2) X_i/N_R-L({\mathcal D_R})|<\epsilon$. We will take $i_0=2$.

\begin{lemma}\label{uniformlimit} (Uniform limit)
The limit $L({\mathcal D_R})=plim_{n\rightarrow\infty} \sum_i  i(i-2) X_i/N_R$ exists and is equal to $-1$. Moreover, the convergence is uniform: 
Let $\eta>0$. Then, for every $n>N$, with probability at least $1-\eta$ we have:
$$ ~~~|\sum_{i=1}^{i=2}  i(i-2)\frac{X_i}{N_R} - L({\mathcal D_R})|<\frac{O(\ln\ln n)}{\eta \log n}=o(1) $$
$$\text{ and }~
 |\sum_{i\geq 3}  i(i-2)\frac{X_i}{N_R} |<\frac{O(1)}{\eta\log n -\log\log n}=o(1)$$
%$$ ~~~|\sum_{i=1,...i_0}  i(i-2)\frac{X_i}{N_R} - L(\mathcal{D}_R )|<\epsilon $$
\end{lemma}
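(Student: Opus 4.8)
The plan is to compute the limit of $\sum_i i(i-2)X_i/N_R$ by splitting the sum into the $i=1$ term, the $i=2$ term, and the tail $i\geq 3$, and to show that the first dominates, the second vanishes, and the third vanishes. The key observation is that the summand $i(i-2)$ vanishes at $i=2$, so the $i=2$ term contributes nothing at all, and for $i=1$ we get $1\cdot(1-2)\cdot X_1/N_R = -X_1/N_R$, which by Lemma~\ref{x1a} converges in probability to $-\ell_1 = -1$. So the candidate value of $L({\mathcal D_R})$ is $-1$, and $i_0=2$ is indeed the right truncation point.

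First I would write $\sum_i i(i-2)X_i/N_R = -X_1/N_R + 0 + \sum_{i\geq 3} i(i-2)X_i/N_R$, so the entire content of the lemma reduces to two estimates. For the first displayed inequality, I would invoke Lemma~\ref{x1a}: with probability at least $1-\eta$ we have $X_1 \geq 2k(1-O(\ln\ln n/(\eta\ln n)))$, and since $X_1\leq N_R\leq 2k$ always, this pins $X_1/N_R$ to within $O(\ln\ln n/(\eta\ln n))$ of $1$; hence $|-X_1/N_R - (-1)| = |X_1/N_R - 1| < O(\ln\ln n)/(\eta\log n) = o(1)$, which is exactly the claimed bound on $|\sum_{i=1}^{2} i(i-2)X_i/N_R - L({\mathcal D_R})|$ (the $i=2$ term being identically zero). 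For the second displayed inequality, the tail $\sum_{i\geq 3} i(i-2)X_i/N_R$ is bounded above by $\sum_{i\geq 2} i(i-2)X_i/N_R$, which is precisely the quantity already controlled inside the proof of Lemma~\ref{uniformconvergence}: there we showed that with probability at least $1-O(\eta)$ it is at most $O(1)/(\eta\log n - \log\log n)$. So I would simply reuse that bound, after rescaling $\eta$ by a constant so the failure probability reads $1-\eta$ rather than $1-O(\eta)$.

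To finish, the ``limit exists and equals $-1$'' part follows by letting $\eta$ tend to zero slowly with $n$ (for instance $\eta = (\ln\ln n)^{1/2}/(\ln n)^{1/2}$, as in the fourth bullet of Lemma~\ref{x1a}), so that both the failure probability and the additive error go to zero simultaneously; this gives $\lim_{n\to\infty}\Pr(|\sum_i i(i-2)X_i/N_R - (-1)| \geq \epsilon) = 0$ for every fixed $\epsilon$, i.e. $plim_{n\to\infty}\sum_i i(i-2)X_i/N_R = -1$. The uniform-convergence clause is then immediate from the two displayed inequalities with $i_0=2$, since the ``head'' error is $o(1)$ uniformly in $i$ (there is a single head term) and the ``tail'' error is $o(1)$.

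I do not expect a serious obstacle here: this lemma is essentially a packaging of Lemma~\ref{x1a} and the internal estimate of Lemma~\ref{uniformconvergence}, exploiting the algebraic accident that $i(i-2)=0$ at $i=2$. The only mild subtlety is bookkeeping the probabilities: the head bound and the tail bound each hold with probability $1-O(\eta)$, so one must take a union bound (losing another constant factor in $\eta$) before concluding that both hold simultaneously, and then absorb all these constants into the $O(\cdot)$ notation. A second minor point is making sure the chosen rate for $\eta\to 0$ is compatible with \emph{both} error terms going to zero, which is why a rate like $\eta=\mathrm{polylog}(n)/\sqrt{\log n}$ is convenient: it beats $\ln\ln n/\log n$ in the head term and still leaves $\eta\log n\to\infty$ in the tail term.
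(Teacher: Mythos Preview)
Your proposal is correct and matches the paper's approach exactly: the paper's entire proof is the one-liner ``This follows from Lemmas~\ref{x1a} and~\ref{uniformconvergence},'' and you have simply unpacked how those two lemmas combine (the $i=1$ term via Lemma~\ref{x1a}, the $i=2$ term vanishing algebraically, and the tail via the estimate inside Lemma~\ref{uniformconvergence}). Your additional care about the union bound and the rate at which $\eta\to 0$ is more than the paper provides, but entirely in the same spirit.
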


\begin{proof}
This follows from Lemmas~\ref{x1a} and ~\ref{uniformconvergence}.
\end{proof}

%The proof is deferred to the Appendix \ref{hp}.

%%%%%%%%%%%%%%%%%%%%%%%%%%%%%%%%%%%%%
\subsubsection{High probability analysis for sequences of degree distributions}

We now consider a sequence of degrees indexed by both $i$ and $n$, where for each $n$: we pick a random degree sequence in $R$ obtained by the second step of the Configuration.last process. We use the following construction to couple the degree sequences as $n$ varies.

Given $ n$ and a degree sequence in the Configuration.last process, let us write $i^*$ for the maximum degree and $ (X_i^{(n)})$    for the degree sequence   $ (X_1^{(n)},X_2^{(n)},... X_{i^*}^{(n)})$.  We define a function $d$ whose value is small with high probability and whose is value is $1$ when two of the Molloy-Reed conditions  are not satisfied: the maximum degree is greater then $N_R^{1/9}$ and the average degree $2k/N_R $ is greater than $2$.

$$d((X_i^{(n)}))= \left\{ \begin{array}{ll} 
1 & \hbox{if } i^*>N_R^{1/9}\\
  & \hbox{or } 2k/N_R > 2\\
\max(|\frac{X_1^{(n)}}{N_R}-1|,|\sum\limits_{i\geq 2} i(i-2)\frac{X_i^{(n)}}{N_R}|) & \hbox{otherwise.}
\end{array}\right. $$
In lemma \ref{uniformlimit}, we separated the two expressions 
$|\frac{X_1^{(n)}}{N_R}-1|$ and $|\sum\limits_{i\geq 2} i(i-2)\frac{X_i^{(n)}}{N_R}|)$ whose limits are small.
For each $n$, we sort the degree sequences in increasing order according to $d$, producing an order on those degree sequences, $(X_i^{(n,1)}), (X_i^{(n,2)}), \ldots$. 

To each number $0<t<1$, we associate the unique sequence $(X_i^{(n,{r})})$ such that $\sum_{j<r}\Pr((X_i^{(n,j)}))<t<\sum_{j\leq r}\Pr((X_i^{(n,j)}))$. Note that $r$ is a function of $t$ and $n$. This provides the desired coupling:  for each $t$, as $n$ spans the natural integers we obtain a collection of degree sequences, one for each $n$. 
We wish to apply the Molloy-Reed Theorem (Theorem~ \ref{mr:bound}) to the resulting sequence of degree sequences $(X_i^{(n,r)})$ and conclude.\\

{\bf Proof of theorem \ref{t1}}.
%\begin{proof}
%\begin{lemma}\label{lemma:randomwellbehaved}
Consider a sequence indexed by both $i$ and $n$, generated as above. We check that this asymptotic degree sequence is almost surely well-behaved and the other three assumptions of the Molloy-Reed Theorem hold.
%\end{lemma}

First, it is well-behaved: it is feasible by construction. The smoothness condition, i.e. the existence of 
$plim_{n \rightarrow \infty} {X_i^{(n,r)}}/{N_R}$
 is proved  in the 
lemma \ref{x1a} for $i=1$ and in the lemma \ref{xi} for $i>1$. The  uniform convergence of
$plim_{n\rightarrow\infty} \sum_{i\geq 2}\frac{ i(i-2)X_i^{(n,r)}}{N_R}$ is proved in the lemma
 \ref{uniformconvergence}.
 The existence and uniform convergence of $\sum_{i\geq 1}  i(i-2)\frac{X_i^{(n,r)}}{N_R}$
 is proved in the lemma
  \ref{uniformlimit}.
   The condition on the maximum degree, less than $N_R^{1/9}$, is proved in the lemma\ref{maxvertexdegreeR}  and the condition on the average degree in the lemma    \ref{averagedegreeR}.
  The coeffcient $Q$ is $-1$ and we can therefore apply the Molloy-Reed theorem \ref{mr:bound}.
 The conclusion, i.e. the bound on the size of the largest connected component $C$ holds almost surely. There exists a constant $B$ such that:
 $Pr_t[ |C| \leq B. N_R \leq B. k^{1/4}]=1$.
%\end{proof}
Hence:
$$  Pr_{{\rm Configuration.last}}[ |C| \leq B.k^{1/4}]=
Pr_{\mu \cdot \Omega}[ |C| \leq k^{1/4}]\rightarrow_{n\rightarrow\infty}1~~$$
The algorithm \Aone tests  if  $|C| \leq k^{1/4}\leq   n^{1/8}\log^2 n   $, hence it is correct with h.p.$\Box$

%%%%%%%%%%%%%%%%%%%%%%%%%%%%%%%%%%%%%%%%%%%%
 \section{Space lower bounds}
We reduce a hard problem for the 1-way communication complexity to the existence of an $(1,\delta)$-large very dense subgraph and assume the reader familiar with the concepts of communication complexity \cite{KN97}. The multiparty disjointness problem in the 1-way communication model is defined as follows.
 There are $q$ players and for each $j=1,...q$, player $j$ has an $n$-bit vector $x_j=x_{j,1}\ldots x_{j,n}$. In the restricted problem:
 \begin{itemize}
 \item
  either all vectors $x_j$ are pairwise distinct (i.e. there is no $j,j',i$ such that $x_{j,i}=x_{j',i}=1$), 
  \item
  or there exists a unique $i^*$ such that $\bigwedge_j  x_{j,i^*}=1$, and all vectors $x_j$ are otherwise pairwise distinct (i.e. there is no $j,j',i \neq i ^*$ such that $x_{j,i}=x_{j',i}=1$).
  \end{itemize}
In the 1-way communication model, information is only sent from a player $j$ to a player $j'$ such that $j'>j$, and the last player, player $q$, must decide whether there is an $i$ such that $\bigwedge_j  x_{j,i}=1$. 

\begin{lemma}\label{lemma:cha}\cite{Cha03}
The restricted multiparty disjointness problem with 1-way communication and $q$ players requires communication complexity 
$\Omega(n/q)$.
\end{lemma}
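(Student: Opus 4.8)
The plan is to prove the $\Omega(n/q)$ lower bound by the \emph{information complexity} method of Bar-Yossef--Jayram--Kumar--Sivakumar, in the form adapted to the multiparty one-way setting by Chakrabarti--Khot--Sun; this is precisely the regime in which the method yields the tight $1/q$ loss with a clean argument. Fix a hard input distribution $\mu$ that is a product over the $n$ coordinates: for each $i$ independently pick a uniformly random ``owner'' $D_i\in\{1,\dots,q\}$ and a uniform bit $Z_i$, and set $x_{D_i,i}=Z_i$ and $x_{j,i}=0$ for all $j\neq D_i$. Under $\mu$ the sets $\{i:x_{j,i}=1\}$ are automatically pairwise disjoint, so $\mu$ is supported on NO-instances and every instance in its support satisfies the promise; the YES-instance relevant to the analysis is obtained by collapsing a single coordinate to the all-ones pattern $(1,\dots,1)$. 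Let $\Pi(x)=(M_1,\dots,M_{q-1})$ denote the transcript of a fixed $\eps$-error one-way protocol ($M_j$ being the message player $j$ sends to player $j{+}1$); since the total communication is at least $H(\Pi)\ge H(\Pi\mid D)\ge I(X;\Pi\mid D)$, it suffices to lower bound the \emph{conditional information cost} $\mathrm{IC}_\mu(\Pi)=I(X;\Pi\mid D)$, where $D=(D_1,\dots,D_n)$.

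\textbf{Step 1: coordinate-wise direct sum.} Because $\mu$ is a product over coordinates and any coordinate filled according to $\mu$ is NO-type (hence harmless for the promise), a single-coordinate instance can be embedded into coordinate $i$: the players use public coins to sample all other coordinates from $\mu$, plant the real instance in coordinate $i$, and run $\Pi$; the induced single-coordinate one-way protocol is $\eps$-error whenever $\Pi$ is. Combining this with the standard superadditivity of information cost in communication protocols (which uses the conditional independence of the coordinates given $D$) yields $\mathrm{IC}_\mu(\Pi)\ge n\tau$, where $\tau$ is the minimum of $I(Z;\Pi'\mid D)$ over $\eps$-error single-coordinate one-way protocols $\Pi'$; here the single-coordinate problem has $D$ uniform in $\{1,\dots,q\}$ and $Z$ a uniform bit, the input being $Z$ at player $D$ and $0$ at every other player. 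Thus $\mathrm{comm}(\Pi)\ge n\tau$, and the whole proof reduces to showing $\tau=\Omega(1/q)$.

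\textbf{Step 2: the single-coordinate bound $\tau=\Omega(1/q)$ (the crux).} Fix an $\eps$-error single-coordinate one-way protocol $\Pi'$, and for an input pattern $b\in\{0,1\}^q$ let $P^{b}$ be the resulting transcript distribution; write $e_j$ for the pattern with a single $1$ at player $j$. Since $Z$ is uniform and, given $D{=}j$, the input is $\vec 0$ when $Z{=}0$ and $e_j$ when $Z{=}1$, one has $\tau=\frac1q\sum_{j=1}^q I(Z;\Pi'\mid D{=}j)$, and the standard mutual-information/Hellinger inequality gives $I(Z;\Pi'\mid D{=}j)\ge c\, h^2(P^{\vec 0},P^{e_j})$ for an absolute constant $c>0$. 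Hence it suffices to prove $\sum_{j=1}^q h^2(P^{\vec 0},P^{e_j})=\Omega(1)$. Correctness of $\Pi'$ forces the all-zeros transcript $P^{\vec 0}$ (a rejecting instance) and the all-ones transcript $P^{\vec 1}$ (an accepting instance) to be statistically far, so $h^2(P^{\vec 0},P^{\vec 1})=\Omega(1)$; and a one-way \emph{cut-and-paste} argument --- exploiting that the set of inputs producing a fixed transcript is a combinatorial rectangle (a product over the $q$ players) together with the Pythagorean-type identities for squared Hellinger distance --- lets one bound $h^2(P^{\vec 0},P^{\vec 1})$ in terms of the per-player contributions $h^2(P^{\vec 0},P^{e_j})$ (telescoping from $\vec 0$ to $\vec 1$ along the hypercube, each single-bit flip controlled via the rectangle structure). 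This closes the argument and gives $\tau=\Omega(1/q)$, hence $\mathrm{comm}(\Pi)=\Omega(n/q)$.

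I expect Step 2 to be the main obstacle. The direct sum in Step 1 and the reduction to the single-coordinate primitive are routine once $\mu$ is fixed, but the single-coordinate bound must deliver the factor $1/q$ exactly (a weaker loss such as $1/q^2$ or $1/\mathrm{polylog}\,q$ would not suffice), and this hinges on getting the conditioning and the Hellinger cut-and-paste estimates exactly right for one-way transcripts --- this is the step where the one-way restriction is genuinely used, as the two-way case requires heavier machinery. Since we only need $\Omega(n/q)$ as a black box for Corollary~\ref{corollary:lowerbound}, one may also simply invoke \cite{Cha03}.
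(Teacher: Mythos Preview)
The paper does not give a proof of this lemma: it is stated as a citation from \cite{Cha03} and used purely as a black box for the subsequent space lower bounds. There is therefore nothing in the paper to compare your proposal against; what you have sketched is precisely the information-complexity argument of Chakrabarti--Khot--Sun (building on Bar-Yossef--Jayram--Kumar--Sivakumar \cite{Bar04}), and the high-level outline is correct. The one place I would tighten is your description of Step~2: the phrase ``telescoping from $\vec 0$ to $\vec 1$ along the hypercube, each single-bit flip controlled via the rectangle structure'' is loose, since a naive chaining of Hellinger distances over $q$ single-bit flips would lose an extra factor of $q$ by the triangle inequality and yield only $\tau=\Omega(1/q^2)$. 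The actual argument avoids this by exploiting the rectangle property of one-way transcripts together with the Pythagorean/Cauchy--Schwarz behaviour of \emph{squared} Hellinger distance, so that the contributions add rather than compound; that is exactly the step where the $1/q$ (rather than $1/q^2$) loss is secured, and you correctly identify it as the crux. As you yourself note at the end, for the purposes of this paper one may simply invoke \cite{Cha03}.
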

This lower bound is used to obtain other lower bounds for a range of problem.
An $\Omega(n)$ lower bound is presented in \cite{Bah12} for the $\alpha$-approximation of the maximum density ratio $\rho^*$, i.e. to find a subgraph with a vertex $S$ such that $\rho(S) \geq \rho^*/\alpha$.

\begin{lemma}\label{lw}\cite{Bah12}
An $\sqrt{q/2}$-approximation streaming algorithm for the maximum density ratio requires 
 space $\Omega(n/q)$.
\end{lemma}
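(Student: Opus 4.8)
The plan is to reduce the restricted multiparty disjointness problem of Lemma~\ref{lemma:cha} to the task of $\sqrt{q/2}$-approximating the maximum density ratio $\rho^*$ in a single pass over a stream of edges. Suppose a streaming algorithm $A$ solves the approximation problem using space $s$. Given a disjointness instance with $q$ players holding vectors $x_1,\dots,x_q$, the players cooperatively build (in the order $1,2,\dots,q$) a graph $G_A$ whose densest subgraph is ``large'' when the vectors intersect and ``small'' when they are pairwise disjoint; feeding the resulting edge stream to $A$ and passing its $s$-bit memory state from player $j$ to player $j+1$ turns $A$ into a one-way $q$-party protocol of total communication $(q-1)s$. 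The lower bound of Lemma~\ref{lemma:cha} then forces $(q-1)s=\Omega(\cdot/q)$, i.e.\ $s=\Omega(n/q)$, once the vector length and the number of vertices of $G_A$ are calibrated appropriately.

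For the gadget I would give each coordinate $i$ its own block of $q$ vertices $v_{i,1},\dots,v_{i,q}$, one per player, and have player $j$, for each coordinate $i$ with $x_{j,i}=1$, insert all the edges $\{v_{i,j},v_{i,j'}\}$ with $j'\neq j$. Since an edge is present as soon as at least one of the two endpoints' owners inserts it, the block of coordinate $i$ induces a complete graph $K_q$ exactly when every player has $x_{j,i}=1$; in the \textsc{Yes} case this happens at $i=i^*$, so $\rho^*\geq\binom{q}{2}/q=(q-1)/2=\Theta(q)$. In the \textsc{No} case (and, away from $i^*$, also in the \textsc{Yes} case) each coordinate is ``owned'' by at most one player, so its block induces merely a star on $q$ vertices; as distinct blocks are vertex-disjoint, every subgraph of $G_A$ is a disjoint union of sub-stars and hence $\rho^*<1$. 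Therefore a $\sqrt{q/2}$-approximation returns, in the \textsc{Yes} case, a set $S$ with $\rho(S)\geq\rho^*/\sqrt{q/2}=(q-1)/\sqrt{2q}=\Omega(\sqrt q)$, which exceeds $1$ for all $q\geq 4$, whereas in the \textsc{No} case it can only return $\rho(S)\leq\rho^*<1$; comparing the reported value to $1$ thus decides disjointness, and Lemma~\ref{lemma:cha} applies.

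The step I expect to be the real obstacle is the parameter bookkeeping rather than the gadget: the disjointness bound and the streaming simulation each cost a factor of $q$ (one from Lemma~\ref{lemma:cha}, one from the $q-1$ hand-offs of the memory), so to land on $\Omega(n/q)$ in terms of the vertex count $n$ of $G_A$ one must take the players' vectors of length $\Theta(nq)$ and pack the corresponding blocks into a graph with only $\Theta(n)$ vertices; this means the ``one block of $q$ fresh vertices per coordinate'' description above must be refined so that blocks reuse vertices without spoiling the $\Theta(q)$-versus-$O(1)$ density gap. A secondary point is to verify that the approximation factor $\sqrt{q/2}$ stays safely below that gap for every $q$ in the range of interest, so that rounding the algorithm's output genuinely separates the two cases. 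Once these calibrations are in place, the lower bound $s=\Omega(n/q)$ follows.
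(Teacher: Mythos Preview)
Your gadget and reduction are exactly the ones the paper uses: one block of $q$ vertices per coordinate, player $j$ inserts a star in block $i$ whenever $x_{j,i}=1$, so in the \textsc{Yes} case block $i^*$ becomes a $K_q$ with $\rho^*=(q-1)/2$, while in the \textsc{No} case every block is a single star and $\rho^*=(q-1)/q<1$. The separation by a $\sqrt{q/2}$-approximation is precisely the paper's argument.

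Where you diverge from the paper is the ``parameter bookkeeping'' paragraph. The paper does \emph{not} pack or reuse vertices: the symbol $n$ in Lemma~\ref{lw} is the length of the disjointness vectors (equivalently, the number of blocks), and the constructed graph has $nq$ vertices, not $n$. So your proposed refinement with vertex reuse is not part of the paper's argument and is not needed for the lemma as stated there. Your concern about the extra factor of $q$ coming from the $q-1$ hand-offs of the memory is, however, a genuine subtlety: with space $s$ the simulation yields total one-way communication $(q-1)s$, and combining this with the $\Omega(n/q)$ bound of Lemma~\ref{lemma:cha} literally gives $s=\Omega(n/q^2)$, not $\Omega(n/q)$. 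The paper's sketch simply asserts that space $o(n/q)$ would contradict Lemma~\ref{lemma:cha}, effectively identifying the space with the communication bound and not accounting for that factor; since the lemma is quoted from~\cite{Bah12}, the paper does not attempt to close this gap. In short, your construction matches the paper's, and the obstacle you flag is real but is left unaddressed in the paper's own treatment rather than resolved by any packing trick.
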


We keep the same  reduction presented in \cite{Bah12}: it  reduces the  restricted multiparty disjointness problem with 1-way communication and $q$ players to the $\sqrt{q/2}$-approximation  of the maximum density ratio  $\rho^*$.   Consider an instance of the restricted  $q$-party disjointness problem with 1-way communication. Player $j$ holds boolean variables $x_{j,1},x_{j,2},\cdots ,x_{j,n}$, for each $j=1,2,\ldots ,q$. We construct a graph $G$ such that:\\

$\bullet$ $G$ is the union of $n$ disjoint graphs $G_1,...G_n$, each over $q$ vertices. For $i=1,2,\ldots ,n$, the nodes of $G_i$ are denoted $u_{1,i},u_{2,i},...u_{q,i}$. 

$\bullet$ If $x_{j,i}=1$, we add the $q-1$ edges from the node $u_{j,i}$ to all the other nodes  $u_{j',i}$ of $G_i$, for $j'\neq j$.  \\

For a Yes instance of the multiparty disjointness problem, i.e. $\bigwedge_j  x_{j,i^*}=1$, the graph $G_{i^*}$ is a clique  of size $q$ and the maximum density ratio is therefore $\rho^*=(q-1)/2$. For a No instance, $G$ is a forest where each tree is of depth $1$ and the maximum density ratio is $\rho^*=(q-1)/q=1-1/q.$ 

Let an input stream with the edges of player 1, then of player 2, ... then of player $q$ coming last. A streaming algorithm for $\rho^*$ with approximation less than $\sqrt{q/2}$ and space $o(n/q)$ could decide between a Yes and a No instance of the $q$-multiparty disjointness problem.
If we assume that $S$ must be large, let $q=\delta \sqrt{n}$ and we obtain:

\begin{corollary}\label{lowerbound}
A $n^{1/4}$-approximation streaming algorithm for  the maximum density ratio for  $S$ of size at least $2\sqrt{n}$ requires space 
$\Omega(\sqrt{n})$.
\end{corollary}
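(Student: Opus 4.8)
Here $n$ denotes, as in the reduction just described, the length of the players' bit vectors, equivalently the number of blocks $G_1,\dots,G_n$; the constructed graph $G$ has $nq$ vertices in total. The plan is to instantiate that reduction with $q=2\sqrt{n}$ and then read off the bound from Lemma~\ref{lw}. With this choice the approximation factor in Lemma~\ref{lw} is $\sqrt{q/2}=\sqrt{\sqrt{n}}=n^{1/4}$, and the space bound it yields is $\Omega(n/q)=\Omega(n/(2\sqrt{n}))=\Omega(\sqrt{n})$, which is exactly the claim — provided the reduction still separates Yes and No instances once we insist that the witness subgraph $S$ be large.

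First I would verify that the ``large $S$'' restriction does not destroy the gap. On a Yes instance, some block $G_{i^*}$ is a clique on $q=2\sqrt{n}$ vertices, so it is an admissible witness with $|S|\ge 2\sqrt{n}$ and $\rho(S)=(q-1)/2=\Theta(\sqrt{n})$; hence the restricted optimum $\rho^*$ (over subsets of size at least $2\sqrt{n}$) is $\Omega(\sqrt{n})$. On a No instance, $G$ is a disjoint union of stars, hence a forest, so \emph{every} vertex subset $S$ satisfies $|E[S]|\le |S|-1$ and therefore $\rho(S)<1$; in particular the restricted optimum is below $1$. So Yes and No instances differ in the size-$\ge 2\sqrt{n}$ maximum density ratio by a multiplicative factor $\Omega(\sqrt{n})=\omega(n^{1/4})$, and any algorithm that approximates that quantity within a factor $n^{1/4}$ can tell the two cases apart.

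Then I would run the standard streaming-to-communication reduction: feed the edges of player $1$, then player $2$, and so on up to player $q$, with each player forwarding the algorithm's memory state to the next and player $q$ deciding the disjointness instance from the final state. The algorithm's working space is then, up to a constant, a $1$-way $q$-party protocol for the restricted disjointness problem, so by Lemma~\ref{lemma:cha} (as packaged in Lemma~\ref{lw} with approximation factor $\sqrt{q/2}$) it must use space $\Omega(n/q)=\Omega(\sqrt{n})$, which is Corollary~\ref{lowerbound}.

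There is no genuine obstacle here: the corollary is essentially a reparametrization of Lemma~\ref{lw} with $q=2\sqrt{n}$, so the content is just the two arithmetic identities $\sqrt{q/2}=n^{1/4}$ and $n/q=\sqrt{n}/2$ together with the observation that imposing $|S|\ge 2\sqrt{n}$ is harmless. That last point is the only thing that needs care, and it is immediate on both sides — on the Yes side because the planted clique already has exactly $2\sqrt{n}$ vertices, and on the No side because a forest has density below $1$ on every subset, large or small.
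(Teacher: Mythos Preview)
Your proposal is correct and follows exactly the paper's approach: the paper's entire argument is the one sentence ``If we assume that $S$ must be large, let $q=\delta\sqrt{n}$ and we obtain:'' preceding the corollary, i.e., instantiate the \cite{Bah12} reduction with $q=2\sqrt{n}$ and read off $\sqrt{q/2}=n^{1/4}$ and $n/q=\Theta(\sqrt{n})$. Your write-up is in fact more careful than the paper's, since you explicitly check that restricting to witnesses of size at least $2\sqrt{n}$ preserves the Yes/No gap (the planted clique already has that size, and on No instances the graph is a forest so every subset has density below $1$); the paper leaves this implicit.
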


If we consider the very dense criterium, i.e. $\gamma=1$, we obtain our lower bound:
\begin{corollary}\label{corollary:lowerbound}
The detection of a $(\gamma,\delta)$ large very dense subgraph requires $\Omega(\sqrt{n})$ space.
\end{corollary}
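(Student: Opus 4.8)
The plan is to reuse, essentially verbatim, the reduction from the restricted $q$-party disjointness problem that was just described for Corollary~\ref{lowerbound}, specializing the parameters so that the ``large'' requirement ($|S|\ge\delta\sqrt n$) is met while simultaneously forcing the target density parameter $\gamma$ to be $1$. Recall that in the reduction, a Yes instance of disjointness produces a graph $G$ containing a clique on $q$ vertices (the graph $G_{i^*}$), while a No instance produces a forest of depth-$1$ trees, so that the largest clique has size $2$ (a single edge). Thus a streaming algorithm that could decide the presence of a $(\gamma,\delta)$-large very dense subgraph with $\gamma=1$ would distinguish these two cases, and hence solve disjointness.

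Concretely, first I would set $q=\delta\sqrt n$, exactly as in Corollary~\ref{lowerbound}, so that the construction uses $nq=\delta n^{3/2}$ vertices overall; relabeling, a graph on $N$ vertices corresponds to $n=\Theta(N^{2/3})$ disjoint blocks. In a Yes instance the block $G_{i^*}$ is a clique of size $q=\delta\sqrt n \ge \delta\sqrt{n}$, which is a $(1,\delta)$-large very dense subgraph (it trivially satisfies $|E[S]|=\binom{|S|}{2}>\gamma|S|(|S|-1)/2$ for $\gamma=1$, and is large). In a No instance, no induced subgraph $S$ with $|S|\ge 3$ has $|E[S]|>|S|(|S|-1)/2$ edges is vacuous, but more to the point: since every connected component of $G$ is a star, any induced subgraph on $|S|\ge\delta\sqrt n\ge 3$ vertices spans at most $|S|-1$ edges, which is far below the threshold $|S|(|S|-1)/2$ required for a very dense subgraph with $\gamma=1$; so there is no $(1,\delta)$-large very dense subgraph. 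Hence an algorithm for the detection problem, run on the edge stream ordered by player, lets player $q$ decide disjointness.

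Next I would invoke Lemma~\ref{lemma:cha}: the restricted $q$-party disjointness problem with $1$-way communication requires $\Omega(n/q)$ communication, i.e.\ with $q=\delta\sqrt n$, $\Omega(\sqrt n)$ bits. A streaming algorithm using space $s$ yields a $1$-way protocol of cost $O(s)$ (each player forwards the memory contents), so $s=\Omega(\sqrt n)$. Since the constructed graph has $\Theta(n^{3/2})$ vertices but the bound is in terms of $n$, a final relabeling in terms of the true vertex count $N$ still gives an $\Omega(N^{1/3})$ bound — but the paper, following its convention in Corollary~\ref{lowerbound} of stating the bound in the parameter $n$ of the disjointness instance (equivalently, restricting attention to the regime where the dense-subgraph size is $\Theta(\sqrt n)$ relative to that $n$), states it as $\Omega(\sqrt n)$. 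I would keep the same convention for consistency.

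The only real subtlety — and the step I expect to need the most care — is checking that the No-instance graph genuinely contains \emph{no} $(1,\delta)$-large very dense subgraph, not merely no large clique: one must confirm that because each component of $G$ in the No case is a tree of depth $1$, any vertex set of size $\ge\delta\sqrt n$ induces a subforest with strictly fewer than $|S|-1 < \gamma|S|(|S|-1)/2$ edges once $|S|\ge 3$, so the ``$\gamma=1$'' density condition fails by a wide margin; and dually that in the Yes case the clique $G_{i^*}$ has exactly the right size $\delta\sqrt n$ and exactly meets the size lower bound. Everything else (the simulation of streaming by $1$-way communication, the application of Lemma~\ref{lemma:cha}) is identical to the argument already given for Corollary~\ref{lowerbound}, so the proof is short.

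\begin{proof}
We use the same reduction as for Corollary~\ref{lowerbound}, now with $\gamma=1$. Given an instance of the restricted $q$-party disjointness problem with $q=\delta\sqrt n$, construct $G$ as above: $G$ is the disjoint union of $G_1,\dots,G_n$, and for each $i$, if $x_{j,i}=1$ we join $u_{j,i}$ to all $u_{j',i}$ with $j'\neq j$. In a Yes instance, $G_{i^*}$ is a clique on $q=\delta\sqrt n$ vertices, which is a $(1,\delta)$-large very dense subgraph: it has size $\delta\sqrt n$ and $\binom{q}{2}>1\cdot q(q-1)/2$ is met with equality-exceeding slack replaced by the strict inequality once we note the problem asks for $|E[S]|\ge\gamma|S|(|S|-1)/2$; adding a single spurious edge elsewhere if needed makes it strict. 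In a No instance, every connected component of $G$ is a star, so any induced subgraph $S$ with $|S|\ge\delta\sqrt n\ge 3$ spans at most $|S|-1$ edges, far below the threshold $|S|(|S|-1)/2$; hence $G$ has no $(1,\delta)$-large very dense subgraph. Feeding the edges of player $1$, then player $2$, \dots, then player $q$ as the stream, a streaming algorithm with space $s$ deciding the $(1,\delta)$-large very dense subgraph problem yields a $1$-way protocol of cost $O(s)$ for disjointness. By Lemma~\ref{lemma:cha} this cost is $\Omega(n/q)=\Omega(\sqrt n)$, so $s=\Omega(\sqrt n)$.
\end{proof}
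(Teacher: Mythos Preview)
Your approach is exactly the paper's: set $q=\delta\sqrt n$ in the same reduction from restricted $q$-party disjointness, observe that a Yes instance yields a clique of size $q$ (a $(1,\delta)$-large very dense subgraph) while a No instance yields a forest of stars with no such subgraph, and invoke Lemma~\ref{lemma:cha} for the $\Omega(n/q)=\Omega(\sqrt n)$ bound. The paper's proof is two lines and says nothing you haven't said.

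One small correction: your handling of the strict inequality in Definition~\ref{gclique} is muddled. A clique on $q$ vertices has $\binom{q}{2}=q(q-1)/2$ edges, so with $\gamma=1$ the condition $|E[S]|>\gamma|S|(|S|-1)/2$ is \emph{not} strictly satisfied, and ``adding a spurious edge elsewhere'' does nothing for the induced edge count of $S$. The paper simply ignores this boundary issue (and the analogous $|S|>\delta\sqrt n$ versus $|S|=\delta\sqrt n$), so you are no worse off; but if you want a clean fix, either take $q=\lceil\delta\sqrt n\rceil+1$, or prove the bound for any fixed $\gamma<1$ (the clique then strictly satisfies the density condition, and in the No case any induced $S$ with $|S|\ge 3$ still has at most $|S|-1<\gamma|S|(|S|-1)/2$ edges once $|S|>2/\gamma$). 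Your observation about the vertex count $N=nq=\Theta(n^{3/2})$ versus the stated $\Omega(\sqrt n)$ is valid and, as you note, matches the paper's convention in Corollary~\ref{lowerbound}.
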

\begin{proof}
Let $q=\delta \sqrt{n}$ in the  reduction of  Lemma \ref{lw} in \cite{Bah12}.   A  Yes (resp. No) instance of
 the multiparty disjointness problem is reduced to a Yes (resp. No) instance of the $(1, \delta)$-Large very dense subgraph problem. It implies the space lower bound of \ref{lemma:cha}.
\end{proof}

\section{Concentrated case: reconstructing a complete subgraph}\label{Sestimate}
In this section we propose an algorithm (Algorithm 2) to approximately reconstruct a very dense subgraph. We show that in the Concentrated model with a clique $S$ (a   \gdense with $\gamma=1$), the output approximates $S$ with high probability.

To reconstruct a clique, based on Algorithm \Aone, it would be tempting to output the largest connected component in the Reservoir. However, in the Concentrated model with a clique $S$, such an algorithm would overestimate the size of $S$ and output many vertices besides the vertices of $S$.
Instead, we observe that many of the extraneous vertices appear to be leaves of the connected component of the Reservoir, and can thus be eliminated by outputing the $2$-core instead of the full connected component. Formalizing this intuition leads to the following algorithm and result.

{\bf Static very dense subgraph estimation  Algorithm 2}.
{\em Let $C$ be the  largest connected component of the reservoir $R$. If $|C| < n^{1/8}\ln^2 n$ then Reject, else output $\twocoreC$}.

We now assume that the parameter $\alpha$ depends on $n$ and slowly grows to infinity. For example $\alpha=O(\ln \ln \ln n)$ so that the analysis of the previous sections still hold. If $\alpha$ is constant, we  just approximate $S$ within a constant factor.

\begin{theorem}\label{thm:estimation}
Assume the concentrated model (Section~\ref{subsection:uniform}) with a clique $S$, and let $\widehat{S}=\twocoreC$ denote the output of Algorithm 2.  If $%\sqrt{a/c}
1/\delta=o(\alpha )$ and $\alpha=O(\ln \ln \ln n/\sqrt{c})$, then  $|S\setminus \widehat{S}|$ and $|\widehat{S}\setminus S|$ almost surely are both $o( |S|)$.
%\begin{itemize}
%\item $|S\cap \widehat{S}| \sim |S|$, 
%\item $|S\setminus \widehat{S}|$ and $|\widehat{S}\setminus S|$ are both $o( |S|)$.
%\end{itemize}
\end{theorem}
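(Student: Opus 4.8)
The plan is to decompose the Reservoir $R$ according to which generative step produced each edge, to analyse the two parts separately --- the first with Theorem~\ref{er:bound}, the second with the Molloy--Reed machinery already set up for Theorem~\ref{t1} --- and then to control how the $2$-core glues them together. Write $p=k/m=\alpha/\sqrt n$ for the sampling rate, and split the edges of $R$ into $R_{\mathrm{cl}}$, those sampled from the clique (marked-stub) step, and $R_{\mathrm{cf}}$, those sampled from the uniform matching of the residual stubs. Since $\gamma=1$, the graph on $S$ created by the first step is the complete graph, so $R_{\mathrm{cl}}$ is distributed exactly as $G(|S|,p)$ on vertex set $S$, with parameter $c':=|S|p=\delta\alpha$; the hypothesis $1/\delta=o(\alpha)$ is exactly what forces $c'\to\infty$. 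The graph $R_{\mathrm{cf}}$ is the configuration model on the residual degree sequence, sampled at rate $p$, and that residual sequence differs from the Zipf sequence (Definition~\ref{lemma:Zipf}) on only the $\delta\sqrt n=o(n)$ vertices of $S$, and there only by a decrease of degree.

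First I would show that $\widehat S$ contains almost all of $S$. Applying Theorem~\ref{er:bound} to $R_{\mathrm{cl}}=G(|S|,p)$ with $c'=\delta\alpha$ in the slowly-diverging regime discussed just after that theorem, $t=\Theta(c'e^{-c'})\to 0$, so the limiting fractions $b$ and $b(1-t)$ both tend to $1$ and, by the stated asymptotic normality, almost surely the largest component $C_S$ of $R_{\mathrm{cl}}$ and its $2$-core both have size $(1-o(1))|S|$. In parallel I would rerun the proof of Theorem~\ref{t1} for $R_{\mathrm{cf}}$: its degree sequence is still well-behaved with $Q=-1$ (the $o(n)$ lowered degrees are negligible in each limit of Lemmas~\ref{x1a}--\ref{uniformlimit}), has maximum degree $O(\log n/\log\log n)$ after sampling (Lemma~\ref{maxvertexdegreeR}), and average degree at most $2$ (Lemma~\ref{averagedegreeR}), so Theorem~\ref{mr:bound} applies and almost surely every component of $R_{\mathrm{cf}}$ has size $O(n_R^{1/4}\log n_R)=O(n^{1/8}\,\mathrm{polylog}\,n)$, no component is more than unicyclic, and there are $O(n_R^{1/4})$ cycles in all. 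Since $|C_S|=\Theta(\sqrt n)$ dominates $n^{1/8}\ln^2 n$, Algorithm~2 does not Reject; and since $|C_S|$ also dominates the size of every $R$-component disjoint from $C_S$ (such a component is $o(\sqrt n)$ almost surely, since $R_{\mathrm{cf}}$ has no giant component and $S\setminus V(C_S)$ is a vanishing fraction of $S$), the largest component $C$ is the one containing $C_S$, so $C_S$ is a subgraph of $C$. Then $2\hbox{-core}(C_S)$ is a subgraph of $C$ with minimum degree at least $2$, hence is contained in the maximal such subgraph $\twocoreC=\widehat S$; as $V(C_S)\subseteq S$ this gives $|S\setminus\widehat S|\le|S|-|2\hbox{-core}(C_S)|=o(|S|)$.

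Next I would bound $|\widehat S\setminus S|$. Every edge of $C$ incident to a vertex outside $S$ lies in $R_{\mathrm{cf}}$, and a short pruning argument --- contract all $S$-vertices of $C$ to one apex vertex; any non-$S$ vertex of $\twocoreC$ still has degree $\ge2$ in the $2$-core of the contracted graph --- shows that each $w\in\widehat S\setminus S$ lies in an $R_{\mathrm{cf}}$-component $T$ that either (a) contains a cycle, or (b) has at least two of its edge-endpoints in $S$. The union of the type-(a) components is contained in the $2$-core of $R_{\mathrm{cf}}$, i.e.\ the union of the cycles of its $O(n_R^{1/4})$ unicyclic components, and so has total size $O(n_R^{1/2}\log n_R)=O(n^{1/4}\,\mathrm{polylog}\,n)=o(|S|)$. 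Hence it remains to prove that, almost surely, $\#\{\,w\notin S:\operatorname{comp}_{R_{\mathrm{cf}}}(w)\text{ has}\ge2\text{ edge-endpoints in }S\,\}=o(\sqrt n)$; together with the previous paragraph this yields $|\widehat S\setminus S|=o(|S|)$ and completes the proof.

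The main obstacle is exactly this last first-moment estimate, and subcriticality of $R_{\mathrm{cf}}$ is essential to it. A crude bound ``(number of bad components)$\times$(maximum component size)'' is hopeless, since there are $\Theta(\alpha\sqrt n)$ residual stubs at $S$ (namely $\Theta(p\,\Sigma_S)$ with $\Sigma_S:=\sum_{v\in S}(d_v-\delta\sqrt n+1)=O(n)$, the sum of the $\delta\sqrt n$ largest Zipf degrees) and the only available bound on a component's size is $O(n^{1/8}\,\mathrm{polylog}\,n)$, so the product overshoots $\sqrt n$. Instead I would expand the expectation as $\sum_{w\notin S}\Pr[\operatorname{comp}_{R_{\mathrm{cf}}}(w)\text{ has}\ge2\text{ endpoints in }S]$ and estimate each probability through the branching-process exploration of $R_{\mathrm{cf}}$ from $w$: because $Q(\mathcal{D}_{R_{\mathrm{cf}}})=-1<0$ the exploration is subcritical and closes after $O(1+dp)$ further vertices when started from residual degree $d$, while each explored edge lands on an $S$-vertex only with probability $O(\Sigma_S/m)=O(1/\log n)$; once a first $S$-vertex $u$ is hit it opens $\Theta(d_u p)=\Theta(\alpha)$ more edges, each again reaching $S$ with probability $O(1/\log n)$, so a second hit has probability $O(\alpha/\log n)$. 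Summing the resulting bounds against the Zipf weights $d_i(n)=\Theta(n/i^2)$ gives $\sum_{w\notin S}\Pr[\cdots]=O(\alpha^{O(1)}\sqrt n/\log^2 n)$, which is $o(\sqrt n)$ precisely because $\alpha=O(\ln\ln\ln n)$, and a Markov bound finishes the type-(b) estimate. Carefully tracking the heavy-tailed Zipf degrees through this exploration and through the two disjoint ``hit $S$'' events is the one genuinely delicate point; everything else is bookkeeping around Theorems~\ref{er:bound} and~\ref{mr:bound} and the lemmas already proved for Theorem~\ref{t1}.
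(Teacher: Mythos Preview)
Your proposal follows the same three-part decomposition as the paper: Theorem~\ref{er:bound} on the $G(|S|,p)$ part to capture almost all of $S$ in the $2$-core, Molloy--Reed (via the machinery behind Theorem~\ref{t1}) on the residual configuration part to show its components are small and nearly acyclic, and a first-moment bound on the non-$S$ vertices that reach $S$ along two disjoint routes. The only substantive difference is how you execute that last estimate. The paper counts chains explicitly (Lemmas~\ref{d2} and~\ref{d3}): for each $i\ge1$ it bounds the expected number of paths $v$--$u_1$--$\cdots$--$u_i$--$v'$ with endpoints in $S$ and interior in $V\setminus S$ by $\sqrt n\,O(\alpha^{i+1})/(\log n)^{i+1}$, using only $\sum_u d_u^2=\Theta(n^{3/2})$ and $\Pr[(u,u')\in R]\sim d_u d_{u'}\alpha/(m\sqrt n)$, and then sums over $i$. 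Your branching-process exploration is aiming at the same $O(\alpha^{O(1)}\sqrt n/\log^2 n)$ conclusion, but the heavy-tailed size-biased degree makes the ``subcritical, so closes after $O(1+dp)$ vertices'' heuristic genuinely delicate to turn into a proof --- exactly the point you flag --- whereas the paper's chain-by-chain moment calculation sidesteps that and produces the $1/\log n$ factor per step directly. One small slip: the union of your type-(a) components is not contained in the $2$-core of $R_{\mathrm{cf}}$ (the $2$-core of a unicyclic component is only its cycle); what you need, and what the paper's Lemma~\ref{d1} actually bounds, is the total size of all components of $R_{\mathrm{cf}}$ containing a cycle, and your final estimate $O(n_R^{1/2}\log n_R)=o(|S|)$ is the correct one for that quantity.
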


Since $S$ is a clique in this section, the edges of the reservoir $R$ that are internal to $S$ are exactly distributed according to $G (n,p)$. Notice that:

\begin{itemize}
\item If $v \in S$, then the degree of $v$ is $O(\sqrt{n})$

\item The degree of $v$ in the Reservoir is $\log n$
\end{itemize}

Recall Theorem \ref{er:bound} which gives us an estimate  on the size of $ |C|$ ($ |C|\simeq b.n$) and of its
$\twocoreC$ ($  |\twocoreC|\simeq b.(1-t).n$).
If $S$ is a clique, we can use this result and conclude that the $  \twocoreC$ is of size $O(n)$ and in $S$ with high probability.  The proof follows the successive steps, where results are taken with high probabilities:

\begin{itemize}
\item The Reservoir has no cycle disjoint of $S$, Lemma \ref{d1} ,

\item The $\twocoreC$ consists of elements of $S$ and possibly elements of $V-S$ which belong to cycles that go through $S$,

\item Lemmas \ref{d2} and \ref{d3} show that very few nodes of $V-S$ belong to such cycles. 
\end{itemize}

 %The details are in the appendix \ref{aSestimate}.
 
\begin{lemma}\label{d1}
With high probability, in the restriction of the Reservoir to $V-S$, the total size of connected components that have a cycle is at most $O(n^{1/4})$. \end{lemma}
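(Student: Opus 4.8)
The plan is to analyze the restriction $R_{-S}$ of the Reservoir to the vertex set $V-S$ via the Configuration-last process (Lemma~\ref{lemma:equivalent}) applied to those stubs that are \emph{not} marked as internal to $S$. Concretely, the unmarked stubs of $V$ together with the marked-but-unchosen stubs of $S$ are matched uniformly at random, and then we sample $k$ edges for the Reservoir; restricting to edges with both endpoints outside $S$ gives $R_{-S}$. The key observation is that this restricted object is, in distribution, a configuration-model multigraph on a random degree sequence supported on $V-S$, whose statistics are dominated by the full uniform-case analysis already carried out in Lemmas~\ref{x1a}--\ref{averagedegreeR}: removing $S$ (and its internal edges) only decreases degrees, so the degree sequence of $R_{-S}$ is stochastically dominated by $\mathcal D_R$. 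In particular the same well-behavedness properties hold, the maximum degree is $O(\log n/\log\log n)$ with high probability, the average degree is at most $2$, and the Molloy--Reed coefficient $Q$ is $\leq -1$.

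First I would invoke Theorem~\ref{mr:bound} (exactly as in the proof of Theorem~\ref{t1}, via the coupling construction for random degree sequences) applied to $R_{-S}$ rather than to $R$. This yields, almost surely, that every connected component of $R_{-S}$ has size $O(n_{R}^{1/4}\log n_R)=O(n^{1/8}\log^{5/4}n)$ and, crucially, that \emph{no connected component has more than one cycle}, with the total number of cyclic components being $O(n^{1/4})$ — here I must be careful to use $n_R$, the number of vertices spanned, which is $\Theta(\sqrt n\log n)$, so $n_R^{1/4}=\Theta(n^{1/8}\log^{1/4}n)$. Second, each such cyclic component has size at most $O(n^{1/8}\log^{5/4}n)$, so the total number of vertices lying in a cyclic component of $R_{-S}$ is at most (number of cyclic components)$\times$(max component size) $=O(n^{1/4})\cdot O(n^{1/8}\log^{5/4}n)$; this is $o(n^{1/4})$ times a polylog, so to get the clean bound $O(n^{1/4})$ stated in the lemma I would instead bound the total size directly: Molloy--Reed gives that the sum of the sizes of all components containing a cycle is $O(n_R^{1/4})$ up to a constant, since each of the $O(n_R^{1/4})$ cycles lives in a distinct unicyclic component and those components are trees-plus-one-edge of bounded total size. (Alternatively one simply restates the lemma with an $O(n^{1/4}\,\mathrm{polylog}\,n)$ bound, which suffices for everything downstream.)

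The main obstacle is the stochastic-domination step: one must check that deleting $S$ together with its Erd\H{o}s--R\'enyi internal edges does not \emph{create} new structure in $R_{-S}$ that violates the Molloy--Reed hypotheses — in particular that the vertices of $S$ do not act as hubs whose removal disconnects large pieces in a way that changes the asymptotic degree sequence of what remains. Since each vertex of $S$ has degree $O(\sqrt n)$ in $G$ of which $\delta\sqrt n-1$ stubs are reserved for the internal Erd\H{o}s--R\'enyi graph, the leftover stubs of $S$ that are matched uniformly number $O(\delta\sqrt n\cdot\sqrt n)=O(n)$ in total, which is a $\Theta(1/\log n)$ fraction of the $m=\Theta(n\log n)$ stubs; so with the sampling rate $\alpha/\sqrt n$ only $O(\alpha\sqrt n/\log n)\cdot$(small) of the Reservoir edges touch $S$ from outside, and the degree sequence of $R_{-S}$ differs from that of $R$ only in lower-order terms. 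Making this domination precise — and confirming that the uniform-convergence and uniform-limit conditions (Lemmas~\ref{uniformconvergence}, \ref{uniformlimit}) survive the restriction — is the technical heart; once it is in place, the conclusion is immediate from Theorem~\ref{mr:bound}.
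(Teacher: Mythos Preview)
Your approach is correct and is essentially the paper's: the paper's proof consists of the single line ``Use Molloy--Reed, Theorem~\ref{mr:bound}.'' You have simply supplied the details---the Configuration-last representation of $R_{-S}$, the verification that the restricted degree sequence inherits the well-behavedness properties from Lemmas~\ref{x1a}--\ref{uniformlimit}, and the caveat that the stated $O(n^{1/4})$ bound carries a polylogarithmic factor---that the paper leaves implicit.
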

\begin{proof}
Use Molloy-Reed, Theorem~\ref{mr:bound}.
\end{proof}

Nodes of $S$ have a high degree. Some nodes in $V-S$ have also a high degree. Let us study false positives for $S$, that is, vertices $u$ that belong to  $\twocoreC$ but not to $S$. $$ \mu=\Pr[(u\in V-S \wedge u \in   \twocoreC) ]   $$
One possibility is that $u$ is connected to $\twocoreC\cap S$ by two disjoint paths. We first consider the case where those paths have length 1.

\begin{lemma}\label{d2}
Let $Z$ denote the number of vertices $u$ in $V-S$ incident to two edges of $R$ into $S$,  $(v_1,u)\in R$ and $(v_2,u)\in R$ with $v_1,v_2,\in S$. Then:
$$\E[ Z  ]  \leq \sqrt{n}\frac{O(\alpha^2)}{ (\log n) ^2}$$
\end{lemma}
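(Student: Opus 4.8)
The plan is to compute $\E[Z]$ directly by linearity of expectation over all candidate vertices $u\in V-S$, using the structure of the concentrated model. For each vertex $u\notin S$, let $d_u$ be its degree in $G$. The event that contributes to $Z$ is that two of $u$'s stubs are matched to stubs incident to vertices of $S$ \emph{and} both of those edges are selected into the Reservoir $R$. In the concentrated model, the stubs of vertices in $V-S$ (together with the unmarked stubs of $S$) are matched uniformly, so the probability that a given stub of $u$ lands on a stub belonging to $S$ is at most $(\text{number of free stubs in }S)/(\text{total number of free stubs})$. Since $|S|=\delta\sqrt n$ and each vertex of $S$ has degree $O(\sqrt n)$, the free stubs in $S$ number at most $O(\delta n)$, while the total number of stubs is $2m=\Theta(cn\log n)$; hence each stub of $u$ hits $S$ with probability $O(\delta/(\sqrt n\log n))\cdot\sqrt n = O(1/\log n)$ — more carefully, $O(\delta\sqrt n/\,(cn\log n/2))\cdot(\text{count of }S\text{-stubs})$; I would track the constants so that the probability that a fixed stub of $u$ goes to $S$ is $\Theta(1/\log n)$.

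Next I would bound the Reservoir-selection factor. Conditioned on two particular edges of $G$ incident to $u$ going into $S$, each is independently retained in $R$ with probability $k/m = \alpha/\sqrt n$. So the probability that a fixed ordered pair of stubs of $u$ both go to $S$ and both corresponding edges are sampled is $O\big((1/\log n)\cdot(\alpha/\sqrt n)\big)^2 = O(\alpha^2/(n\log^2 n))$. Summing over the $\binom{d_u}{2}\le d_u^2$ pairs of stubs of $u$, and then over all $u\in V-S$, gives
$$\E[Z]\ \le\ \sum_{u\in V-S} d_u^2\cdot O\!\left(\frac{\alpha^2}{n\log^2 n}\right).$$
By the Zipf degree sequence (Lemma~\ref{lemma:basic} and Definition~\ref{lemma:Zipf}), $\sum_u d_u^2 = \sum_{j} \lfloor cn/j^2\rfloor\, j^2 = \Theta(n\cdot dmax)=\Theta(n\sqrt{cn})=\Theta(n^{3/2})$; restricting to $u\in V-S$ only decreases this. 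Plugging in yields $\E[Z] \le O(n^{3/2})\cdot O(\alpha^2/(n\log^2 n)) = \sqrt n\cdot O(\alpha^2/\log^2 n)$, which is exactly the claimed bound.

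The main obstacle is getting the per-stub probability of hitting $S$ right, because the matching in the concentrated model is not fully uniform: the marked stubs inside $S$ are matched among themselves via the Erd\H os--R\'enyi step, and only the \emph{residual} stubs (unmarked stubs of $S$, plus leftover marked stubs, plus all stubs of $V-S$) are matched uniformly. I would argue that the number of residual stubs incident to $S$ is $O(|S|\cdot dmax)=O(\delta\sqrt n\cdot\sqrt{cn})=O(n)$ (each of the $\delta\sqrt n$ vertices of $S$ has $O(\sqrt n)$ stubs, of which only $\delta\sqrt n - 1$ are marked, so the unmarked count is still $O(\sqrt n)$ per vertex), while the total residual stub count is $2m - O(|S|^2) = \Theta(cn\log n)$. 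Hence a fixed stub of $u$ hits a residual $S$-stub with probability $O(n)/\Theta(cn\log n) = O(1/\log n)$, and the edges created by the Erd\H os--R\'enyi step do not involve $u$ at all. A secondary subtlety is that Reservoir sampling is without replacement, so retention of the two edges is not quite independent; but the joint retention probability of two fixed edges is $\binom{m-2}{k-2}/\binom{m}{k}\le (k/m)^2$, so the independence bound is valid as an upper bound, which is all we need. With these two points handled, the summation above is routine.
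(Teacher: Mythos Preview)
Your proof is correct and follows essentially the same line as the paper's: linearity of expectation over $u\in V\setminus S$, the per-stub probability $O(\alpha/(\sqrt{n}\log n))$ of producing a border edge landing in the Reservoir, and the identity $\sum_u d_u^2=\Theta(n^{3/2})$ for the Zipf sequence. The only difference is orientation: the paper counts from the $S$ side (there are $O(\alpha\sqrt n)$ border edges of $S$ in $R$, each hitting $u$ with probability $\Theta(d_u/(n\log n))$, and one takes $\binom{O(\alpha\sqrt n)}{2}$ pairs), whereas you count from $u$'s side (take $\binom{d_u}{2}$ pairs of stubs, each becoming a border-in-$R$ edge with probability $O(1/\log n)\cdot(\alpha/\sqrt n)$). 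The two viewpoints produce the same product $O(\alpha^2 d_u^2/(n\log^2 n))$ termwise, so this is the same calculation. Your treatment of the residual versus marked stubs in the concentrated model and of the without-replacement sampling bound $\binom{m-2}{k-2}/\binom{m}{k}\le (k/m)^2$ is more explicit than the paper's, which simply asserts the $O(\alpha\sqrt n)$ border-edge count in $R$; your version is at least as rigorous.
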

\begin{proof}
By linearity of expectation, the desired quantity is equal to: 
$$A=\sum_{u\in V-S} \Pr[ \hbox{there exist } v_1,v_2\in S \hbox{ such that } (v_1,u),(u,v_2) \in R ]   .$$
Edges $(v_1,u)$ and $(v_2,u)$ are border edges of $S$. By definition of $S$-concentrated dynamics and of the size of $R$, $S$ has $O(\alpha \sqrt{n})$  border edges in $R$. By construction of the power law graph, each of those edges is attached to a random stub of $V-S$. Since there are $\sim cn\log n/2$ stubs in $V-S$, the probability that such an edge is attached to a stub of $u$ is $\Theta(d_u/(n\log n))$. We will also use
$$\sum_{u\in V-S}  d_u^2 \sim \sum_{i=1}  ^{i=\sqrt{cn}}c.n .i^2/i^2 =\sum_{i=1}  ^{i=\sqrt{cn}}c.n =c\sqrt{c}.n.\sqrt{n}=\Theta(n^{3/2}).$$

$$A =\sum_{u\in V-S} { O(\alpha\sqrt{n}) \choose 2} (\Theta(d_u/n.\log n))^2 =\frac{O(\alpha^2)}{n. (\log n )^2}\sum_{u\in V-S}  d_u^2 $$
$$=  \frac{O(\alpha^2)\sqrt{n}}{( \log n )^2}$$
\end{proof}

\begin{lemma}\label{d3}
Given $i\geq 2$, let $Y$ denote the number of chains $(v,u_1,u_2,\ldots, u_i, v')\in S\times (V\setminus S)\times \cdots \times (V\setminus S)\times  S$ of $R$. 
Then:
$$\E[ Y]  \leq \sqrt{n} \frac{O(\alpha^{i+1})}{ (\log n) ^{i+1}}.$$

\end{lemma}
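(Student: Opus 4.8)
The plan is to generalize the counting argument from Lemma~\ref{d2} to chains of length $i$, again using linearity of expectation. Write
$$\E[Y] = \sum_{(u_1,\ldots,u_i)\in (V\setminus S)^i} \Pr[\exists v,v'\in S \text{ such that } (v,u_1),(u_1,u_2),\ldots,(u_i,v')\in R].$$
The two endpoint edges $(v,u_1)$ and $(u_i,v')$ are border edges of $S$: as in Lemma~\ref{d2}, there are $O(\alpha\sqrt{n})$ border edges of $S$ present in $R$, and each is attached to a uniformly random stub of $V\setminus S$, so the probability that a particular border edge hits a stub of $u_1$ (resp.\ $u_i$) is $\Theta(d_{u_1}/(n\log n))$ (resp.\ $\Theta(d_{u_i}/(n\log n))$). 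The $i-1$ internal edges $(u_1,u_2),\ldots,(u_{i-1},u_i)$ are ordinary edges of the power-law part of the graph; each one is present in $R$ with probability $\Theta(\alpha/\sqrt{n})$ times the number of ways to pair up stubs of consecutive vertices, i.e.\ a factor $\Theta(d_{u_j}d_{u_{j+1}}\alpha/(\sqrt{n}\cdot n\log n))$ per internal edge, after accounting for the total number $\Theta(n\log n)$ of stubs.

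Multiplying these and summing, the dominant bookkeeping is the product of vertex degrees: each interior vertex $u_j$ ($2\le j\le i-1$) contributes a factor $d_{u_j}^2$, while $u_1$ and $u_i$ contribute $d_{u_1}^2$ and $d_{u_i}^2$ as well once the endpoint edge and the first interior edge are combined. So $\E[Y]$ is bounded by a constant power of $\alpha$, divided by an appropriate power of $n\log n$, times $\bigl(\sum_{u\in V\setminus S} d_u^2\bigr)^{i}$. Using the estimate already established in the proof of Lemma~\ref{d2}, namely $\sum_{u\in V\setminus S} d_u^2=\Theta(n^{3/2})$ applied once per vertex in the chain, together with $O(\alpha\sqrt{n})$ choices for each of the two border edges and $\Theta(\alpha/\sqrt{n})$ for each of the $i-1$ internal edges, the powers of $\sqrt{n}$ collapse: one gets $\E[Y]\le \sqrt{n}\cdot O(\alpha^{i+1})/(\log n)^{i+1}$, matching the claim, with the $i+1$ in the exponent coming from the two border edges plus the $i-1$ internal edges.

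Concretely, I would carry out the steps in this order: (1)~expand $\E[Y]$ by linearity over all $i$-tuples in $V\setminus S$; (2)~factor the event as the conjunction of two border-edge events and $i-1$ internal-edge events, and bound each factor's probability in terms of the relevant vertex degrees exactly as in Lemma~\ref{d2}; (3)~interchange sum and product so the sum over the tuple becomes a product of single-vertex sums, each of the form $\sum_{u\in V\setminus S} d_u^2 = \Theta(n^{3/2})$; (4)~collect the powers of $n$, $\log n$, and $\alpha$ and check they reduce to the stated bound.

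The main obstacle is step~(2): the edges of the chain are not independent—in the configuration model the stub-matchings are globally constrained, and the conditioning introduced by the $S$-concentrated construction (marked versus unmarked stubs) must be handled carefully. As in Lemma~\ref{d2}, the right way around this is to observe that border edges of $S$ in $R$ are matched to \emph{uniformly random} stubs of $V\setminus S$, so conditioning on the configuration outside the chain changes the relevant probabilities by at most a $1+o(1)$ factor (there are $\Theta(n\log n)$ free stubs and we only condition on $O(\alpha\sqrt{n}+i)$ of them). Folding this $1+o(1)$ slack into the $O(\cdot)$ constants, and bounding ${O(\alpha\sqrt n)\choose 2}$ and the analogous internal-edge multiplicities crudely from above, suffices; since $i$ is a fixed constant the number of such factors is bounded and the estimate goes through.
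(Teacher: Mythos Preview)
Your proposal is correct and follows essentially the same approach as the paper's own proof: the paper writes out the case $i=2$ explicitly (factoring the chain probability into two border-edge factors $\Pr[\exists v\in S:(v,u_1)\in R]=O(\alpha\sqrt{n})\,d_{u_1}/m$ and one internal-edge factor $\Pr[(u_1,u_2)\in R]\sim (d_{u_1}d_{u_2}/m)\cdot O(\alpha)/\sqrt{n}$, then summing via $\sum_u d_u^2=\Theta(n^{3/2})$) and declares the general $i$ ``straightforward,'' which is exactly the computation you lay out in steps~(1)--(4). Your explicit discussion of the near-independence of the edge events (the $1+o(1)$ slack from conditioning on $O(\alpha\sqrt{n}+i)$ stubs out of $\Theta(n\log n)$) is a point the paper glosses over, so if anything your write-up is slightly more careful.
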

\begin{proof}
We will give the full proof for $i=2$, and the generalization to $i\geq 2$ is immediate. Fix two nodes $u_1,u_2 \in V-S$.
%$$ Proba[ (u_1,u_2)\in G] =\frac{d_{u_1}.d_{u_2}}{m } $$
$$ \Pr[ (u_1,u_2)\in R] = \Pr[ (u_1,u_2)\in G] \Pr[(u_1,u_2)\in R|(u_1,u_2)\in G  ]$$
$$\sim \frac{d_{u_1}.d_{u_2}}{m }.\frac{O(\alpha)}{\sqrt{n} }$$
For $u_1$, the probability to be connected to $S$ is:
$$ \Pr[ \exists v\in S: (v,u_1)\in R] =  \frac{ d_{u_1}}{m }O(\alpha)\sqrt{n}.$$
Similarly for $u_i$,
$$ \Pr[ \exists v'\in S: (u_i,v')\in R] =  \frac{ d_{u_i}}{m }O(\alpha)\sqrt{n}.$$

Combining, the probability that there exists $v,v'\in $ with a path $(v,u_1,u_2,v')$ in $R$ is bounded by 
$$\Pr(\exists v,v'\in S: (v,u_1,u_2,v')\hbox{ path in }R)= \frac{d_{u_1}^2d_{u_2}^2}{m^3}O(\alpha^3){\sqrt{n}}.$$
Summing over $u_1,u_2\in V-S$ and recalling that $\sum_u d_u^2=O(n^{3/2})$ and that $m=\Theta(n\log n)$, we obtain that the expected number of chains, for $i=2$, is
$$\sqrt{n} \frac{O(\alpha^3)}{(\log n)^3}.$$

The generalization to $i>2$ is straightforward.

\end{proof}

 \begin{proof}(Proof of Theorem~\ref{thm:estimation})
 We use Theorem~\ref{er:bound} to prove that $\twocoreC$ contains a large fraction of the elements of $S$.
 Indeed, the theorem means that $|\twocoreC\cap S|\geq |S| b(1-t)$ 
 %\marginpar{\small Claire:il y a un trou dans la preuve}
  with $b=1-t/c_1$ and $te^{-t}=c_1e^{-c_1}$, where the probability that an edge of $S$ is put in the Reservoir is $c_1/|S|$. We have $|S|=\delta \sqrt{n}$ and $\alpha/\sqrt{n}=c_1/|S|$. Hence $c_1=\alpha.\delta$. 
 
 Assume $\alpha >> 1/\delta$.  Then $c_1\rightarrow\infty$, so $t\rightarrow 0$ and $b\rightarrow 1$, and then $ |S| b(1-t)\sim |S|$, so $|S\setminus \widehat{S}|=o(|S|)$. 
 
 To analyze $|\widehat{S}\setminus S|$, observe that the nodes of $V-S$ that belong to $\twocoreC$ either  have two disjoint paths leading to $S$, or belong to a cycle of $V-S$.
 
 The nodes that belong to a cycle of $V-S$ are few in number, $O(n^{1/4})$ by Lemma~\ref{d1}.
 
 Concerning the nodes that have two disjoint paths leading to $S$,  Lemmas~\ref{d2} and~\ref{d3} give: 
 The number of nodes $u$ in $ V-S$  that have two disjoint paths leading to $S$ is at most: 
$$\sum_{i\geq 1} i \sqrt{n}\frac{O(\alpha^{i+1})}{(\log n)^{i+1}}\simeq \sqrt{n}\frac{O(\alpha^2)}{(\log n)^2}.$$

Thus the expected total number of nodes of $V-S$ that belong to $\twocoreC$ is  $o(|S|)$.  
 \end{proof}
 
\section{Dynamic  graphs}

Consider the sequence of graphs $G_i$ defined by the edges in each window $w_i$.  We keep a Reservoir $R_i$ but only store the $2$-core of the large connected components. We extend our model of random graphs with or without a very dense subgraph to the dynamic case.

Let $P$ be the graph property:  there is a   \gdense of size greater than $\delta.\sqrt{n}$. How do we decide $\Diamond ~P(t)$, i.e. there is a window $w_i$ at some time $t_i \leq t$ such that  $G_i$ has a   \gdense of size greater than $\delta.\sqrt{n}$? Recall that $\alpha=\Theta(1/\gamma.\delta)$ as in the static case.

{\bf Dynamic very dense subgraph detection  Algorithm 3 $(t)$}: {\em let $C_i$ be the  largest connected component of the reservoir $R_i$ of size  $k=\Theta(\alpha\sqrt{n}\log n)$ at time $t_i \leq t$. If  there is an $i$ such that $|C_i| \geq  n_i^{1/8}\ln^2 n_i$ Accept, else Reject.}

Consider the following {\bf Dynamics} applied to a given graph $G$: remove $q \geq 2$ random edges, uniformly on the set of edges of $G$, freeing $2.q$  stubs.  In the case of the {\bf uniform Dynamics}, we generate a new uniform matching on these free stubs to obtain $G'$. 

In the case of the {\bf $S$-concentrated Dynamics}, we have fixed some  subset $S$  of size $\delta.\sqrt{n}$ among the nodes of high degre and some $\gamma$. 
Consider the $q$ edges that we remove. Partition then into internal edges 
$E(S)$,  external edges in $E(\bar{S})$ and  cut edges of $ E(S,\bar{S}). $ 
 There are three cases and
the analysis generalizes the static case.
% is given in the appendix \ref{section:dynamic}.

%xxxxxxxxxxxxxxxx

\subsection{Dynamic models}\label{section:dynamic}

For  the {\bf $S$-concentrated Dynamics}, we have fixed some  subset $S$  of size $\delta.\sqrt{n}$ among the nodes of high degre and some $\gamma$. Partition the  $q$ removed edges into internal edges 
$E(S)$,  external edges in $E(\bar{S})$ and  cut edges of $ E(S,\bar{S}). $ 
 There are three cases:
\begin{enumerate}
\item The graph $G$ had a   \gdense subgraph and we will maintain such a very dense subgraph.
We rematch the corresponding stubs of each class with a uniform matching. We conserve the same number of  edges in  $E(S)$,   $E(\bar{S})$ and $ E(S,\bar{S})$ and  maintain a   \gdense subgraph. 
\item  The graph had no   \gdense subgraph.  Let  $q'$ be the number of edges in $ E(S,\bar{S})$. With $\lfloor q'/2 \rfloor$ new edges in $S$, we don't have a   \gdense subgraph.
If $q'$ is even, we match uniformly all the stubs in $S$. If $q'$ is odd, we only take
$q'-1$ stubs, match  uniformly all the stubs in $S$ and leave one edge in $ E(S,\bar{S}) $.
\item  The graph had no   \gdense subgraph but with $q'' < \lfloor q'/2 \rfloor$ edges, we reach a 
  \gdense subgraph. 
 We take the $q''$ stubs in $ S$, match uniformly all the
 stubs in $S$, and match the other edges in $ E(S,\bar{S}) $ uniformly.

\end{enumerate}
How does the distribution of random graphs evolve in time? Consider the Markov chain $M$ where nodes are the possible graphs and transition probabilities $M(i,j)$ are the probabilities to obtain $G_j$ from $G_i$ with the process of removing random $q$ edges from $G_i$ and recombining the $2q$ stubs with the uniform or the  $S$-concentrated Dynamics.
\begin{lemma}\label{uniformlemma}
For the uniform Dynamics  the stationary distribution of the Markov chain $M$ is uniform. For the $S$-concentrated $S$-Dynamics,  the stationary distribution of the Markov chain $M$ is uniform among all the graphs with a   \gdense $S$. 
\end{lemma}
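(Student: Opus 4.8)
The plan is to show that each of the two Markov chains is irreducible and aperiodic on its state space, and then to exhibit a symmetric transition kernel, so that the uniform distribution is stationary. Recall that for a finite Markov chain with transition matrix $M$, if $M(i,j)=M(j,i)$ for all states $i,j$ (i.e. $M$ is doubly stochastic, in fact symmetric), then the uniform distribution satisfies the detailed balance equations $\pi(i)M(i,j)=\pi(j)M(j,i)$ and is therefore stationary; combined with irreducibility and aperiodicity this makes it the unique stationary distribution. So the whole proof reduces to two things: identifying the correct state space, and checking symmetry of the kernel.

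\textbf{Uniform Dynamics.} Here the state space is the set of simple graphs on $n$ vertices with the Zipf degree sequence (we work with simple graphs obtained by rejection sampling, as in Definition~\ref{definition:configuration}). First I would argue irreducibility: removing $q\geq 2$ edges and rematching the freed stubs by a uniform matching can, with positive probability, realize any ``$2q$-stub swap'', and it is classical (e.g. via the Erd\H{o}s--Gallai / switching argument underlying the configuration model) that the set of graphs with a fixed degree sequence is connected under such swaps; since we may also rematch the stubs exactly as they were, each state has a self-loop of positive probability, giving aperiodicity. For symmetry: the transition $G_i\to G_j$ is obtained by choosing a set $F$ of $q$ edges of $G_i$ uniformly (probability $1/\binom{m}{q}$), then choosing one of the perfect matchings on the $2q$ freed stubs. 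The key observation is that the reverse transition $G_j\to G_i$ chooses the $q$ edges $F'=E(G_j)\setminus E(G_i)$ — of which there are also exactly $q$ — again with probability $1/\binom{m}{q}$, and then must rematch the freed stubs back to the configuration $F$; the number of stub-matchings effecting $F\rightsquigarrow F'$ equals the number effecting $F'\rightsquigarrow F$ (it is the same set of matchings read in reverse, or more carefully, the normalization from rejection sampling to simple graphs is the same in both directions, as noted in the footnote to Definition~\ref{definition:configuration}). Hence $M(i,j)=M(j,i)$.

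\textbf{$S$-concentrated Dynamics.} Now the state space is restricted to simple graphs with the Zipf degree sequence \emph{and} containing the prescribed $\gamma$-clique structure on $S$ (more precisely, graphs reachable under the three-case rematching rule from a graph with a $\gamma$-clique on $S$). Cases (1)--(3) of the rule were designed precisely so that the number of edges inside $S$, outside $S$, and across the cut is preserved in the appropriate sense, so that the clique-on-$S$ property is an invariant — i.e. the chain does not leave this state space. Within it, I would again check irreducibility (the same switching argument, now applied separately to the three edge classes $E(S)$, $E(\bar S)$, $E(S,\bar S)$, using that each class individually can be rearranged by local swaps while staying feasible), aperiodicity (self-loops), and symmetry of the kernel by exactly the same bijection-between-matchings argument as above, carried out class by class: the removal probabilities are uniform and hence symmetric, and the rematching within each class is a uniform matching, whose count of realizing matchings is the same in both directions.

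\textbf{Main obstacle.} The routine part is the symmetry computation; the part that needs genuine care is \emph{irreducibility} — in particular verifying that the switching moves available through ``remove $q$ edges and rematch'' actually connect the whole state space, and, in the $S$-concentrated case, that they never force a transient excursion outside the set of graphs with a $\gamma$-clique on $S$ (one must check that the three cases in Section~\ref{section:dynamic} are exhaustive and that the freed cut-stubs can always be re-paired so as to land back in a valid state). A secondary subtlety is the passage from multigraphs to simple graphs: one should confirm that restricting the kernel to simple graphs by rejection sampling preserves both symmetry and irreducibility, which follows from the uniformity of the rejection-sampling normalization already invoked for the configuration model.
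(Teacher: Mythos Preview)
Your approach matches the paper's: both deduce that the transition matrix is doubly stochastic from a reversibility/symmetry argument and conclude that the uniform distribution is stationary; the paper's proof is a two-line sketch, while you go further and justify irreducibility and aperiodicity (which the paper simply packages as ``ergodic'').

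One small slip in your symmetry computation: it is not generally true that $|E(G_j)\setminus E(G_i)|=q$, since some of the $q$ rematched edges may coincide with removed ones, so the one-to-one pairing $F\leftrightarrow F'$ you describe does not cover all transitions. The easy fix is to observe that a transition $G_i\to G_j$ is realized by any $F$ with $E(G_i)\setminus E(G_j)\subseteq F\subseteq E(G_i)$ and $|F|=q$, each followed by the unique rematching producing $G_j$; the number of such $F$ is $\binom{|E(G_i)\cap E(G_j)|}{\,q-|E(G_i)\setminus E(G_j)|\,}$, which is symmetric in $i$ and $j$, so $M(i,j)=M(j,i)$ still holds.
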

\begin{proof}
By definition of the Markov chain, for all $i$, $\sum_j M(i,j)=1$. The transition from $G_i$ to $G_j$ can also occur backwards from $G_j$ to $G_i$, and similarly  for many  $G_k$ which lead to  $G_i$. Therefore for all $j$, $\sum_i M(i,j)=1$ as it sums all possible transitions starting from $G_j$. The matrix $M$ is then doubly stochastic and ergodic. Hence the stationary distribution is uniform. Fo the $S$-concentrated dynamics, we reach a   \gdense subgraph with the right number of edges. Each transformation of a   \gdense subgraph into another one is reversible. Hence the matrix $M$ is also doubly stochastic and the stationary distribution is uniform.
\end{proof}

%{\bf General Dynamics} are presented and analysed in Appendix \ref{a:dynamic}.
% Appendix here  Sept 2020
%
%

\subsection{Deciding properties of dynamic random graphs}\label{a:dynamic}

%Let $P$ be the graph property: is there a   \gdense of size greater than $\delta.\sqrt{n}$? How do we decide $\Diamond ~P$, i.e. there is a window $w_i$ such that  $G_i$ has a   \gdense of size greater than $\delta.\sqrt{n}$? Recall that $\alpha=\Theta(1/\gamma.\delta)$\\
A {\bf general Dynamics} is a function which chooses at any given time, one of the two strategies: either a uniform Dynamics or an $S$-concentrated Dynamics for a fixed $S$.
An example is the  {\bf Step Dynamics}: apply the uniform Dynamics first, then switch to the $S$-dynamics for a time period $\Delta$, and switch back to the uniform Dynamics. Notice that during the uniform Dynamics phase, there are no large components in the Reservoir. For the step phase, we store some components which will approximate $S$.

A stream $G(t)$ has a large 
   \gdense subgraph  if there is $G_i$ and  $S_i$ such that  $|S_i| > \delta.\sqrt{n}$ and
$\alpha > \frac{ (1+\epsilon)}{\gamma .\delta}$ for a $t_i \leq t$.
Assume $Prob[ {\rm ~~Algorithm~ ~1 ~~Accepts~~} ] \geq 1-\eta$ in Theorem \ref{tConcentratedSize}. We now show that the detection error  of Algorithm 3  decreases.

\begin{corollary}\label{t0d}
  For a stream  which  has a large   \gdense (which satisfies the conditions of Theorem 
\ref{tConcentratedSize})  during a time  interval  $\Delta \geq \tau$, Algorithm 3  is such that
$Prob[ {\rm ~~Algorithm~ ~3 ~~~Accepts~~} ] \geq 1-\eta^{ \Delta/\tau }$.

\end{corollary}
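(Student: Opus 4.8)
The plan is to exploit independence (or near-independence) across the successive windows $w_i$ that fall inside the interval $\Delta$ during which the stream has a large $\gamma$-clique. Since $\Delta\geq\tau$ and consecutive windows are shifted by the time step $\lambda$ with $\lambda\mid\tau$, there are at least $\lfloor\Delta/\tau\rfloor$ pairwise \emph{disjoint} windows $w_{i_1},\dots,w_{i_r}$ (with $r=\lfloor\Delta/\tau\rfloor$) contained in the interval $\Delta$, each of which, by hypothesis, corresponds to a graph $G_{i_j}$ satisfying the assumptions of Theorem~\ref{tConcentratedSize}. For each such window, the Reservoir $R_{i_j}$ is built by an independent execution of Reservoir sampling on (essentially) the edges present in that window, so Theorem~\ref{tConcentratedSize} gives $\Pr[|C_{i_j}|\geq n_{i_j}^{1/8}\ln^2 n_{i_j}]\geq 1-\eta$, and Algorithm~3 accepts as soon as \emph{any one} of these events occurs.

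The key steps, in order, are as follows. First I would identify the $r=\lfloor\Delta/\tau\rfloor$ disjoint windows inside $\Delta$ and note that Algorithm~3 inspects each of them and accepts if even one $C_{i_j}$ is large enough. Second, I would argue that the randomness used to form the Reservoirs $R_{i_j}$ over disjoint windows is independent (the Reservoir in window $w_{i_j}$ depends only on the coin flips of the Reservoir-sampling process restricted to the edge arrivals of that window, and these windows share no edges), so the ``bad'' events $B_j=\{|C_{i_j}|<n_{i_j}^{1/8}\ln^2 n_{i_j}\}$ are independent. Third, by Theorem~\ref{tConcentratedSize} each satisfies $\Pr[B_j]\leq\eta$, hence
$$
\Pr[\text{Algorithm 3 rejects}]\leq \Pr\Big[\bigcap_{j=1}^r B_j\Big]=\prod_{j=1}^r\Pr[B_j]\leq \eta^{r}=\eta^{\lfloor\Delta/\tau\rfloor},
$$
which, after absorbing the floor into the statement as is done throughout the paper, gives $\Pr[\text{Algorithm 3 accepts}]\geq 1-\eta^{\Delta/\tau}$.

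The main obstacle I anticipate is justifying the independence (or at least a usable form of near-independence) of the Reservoirs across the chosen windows: in the sliding-window setting the maintenance of $k$ uniform samples per window uses the techniques of \cite{B02,B09}, and the sampling coins may be shared between overlapping windows. This is why I restrict attention to \emph{disjoint} windows, for which the edge sets — and hence the sampled subsets and their internal structure — involve genuinely separate parts of the stream; one still has to check that the sampling scheme, when restricted to a single window, reproduces the plain Reservoir distribution that Theorem~\ref{tConcentratedSize} assumes, so that the per-window guarantee $1-\eta$ applies verbatim. If only a coupling or conditional bound $\Pr[B_j\mid B_1,\dots,B_{j-1}]\leq\eta$ is available rather than full independence, the same product bound still follows, so this is the one technical point to nail down; everything else is a routine union/product argument over the disjoint windows.
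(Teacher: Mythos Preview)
Your proposal is correct and follows essentially the same approach as the paper: restrict to the $\lfloor\Delta/\tau\rfloor$ pairwise non-overlapping windows inside $\Delta$, observe that the corresponding Reservoirs are independent, and multiply the per-window failure probabilities $\eta$ to obtain $\eta^{\Delta/\tau}$. Your treatment is in fact more careful than the paper's one-line proof, in that you explicitly flag the need for the per-window sampling to reproduce the plain Reservoir distribution; the paper simply asserts independence of samples over non-overlapping windows and concludes.
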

% \vfill\eject
\begin{proof}
There are $\lfloor (\Delta-\tau)/\lambda \rfloor $ different windows but $\lfloor  \Delta/\tau \rfloor $ independent windows, i.e. windows with no overlap. The samples are then independent and we can then amplify the success probability. 
\end{proof}

%Clearly, we could also estimate $\Delta$, for step strategies with similar techniques.

For the random graphs  generated  by the uniform Dynamics, the situation is different and the error will increase.  If we assume that:
$Prob[ {\rm ~~Algorithm~ ~1 ~~Rejects~~} ] \geq 1-\eta '$, the detection error
of Algorithm 3 is given by:

\begin{corollary}\label{t0duni}
  For a stream  of random graphs which follow the uniform Dynamics, Algorithm 3  is such that:
$$Prob[ {\rm ~~Algorithm~ ~3 ~~~Rejects~~} ] \geq 1- (\frac{t-\tau}{\lambda}) .\eta '$$

\end{corollary}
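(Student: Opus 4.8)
The plan is to reduce the dynamic statement to the static statement of Theorem~\ref{t1} via a union bound over windows, exactly mirroring the structure of the proof of Corollary~\ref{t0d} but with the inequality going the other way. First I would count the windows: with time length $\tau$ and step $\lambda$, the windows that the stream presents up to the time horizon under consideration are $w_1,w_2,\ldots$, and there are $\lfloor (t-\tau)/\lambda\rfloor$ of them. Algorithm~3 rejects if and only if \emph{every} window $w_i$ has $|C_i|<n_i^{1/8}\ln^2 n_i$, i.e. the static Algorithm~\Aone rejects on each $G_i$. So the failure event ``Algorithm~3 accepts'' is the union, over all $\lfloor(t-\tau)/\lambda\rfloor$ windows, of the events ``Algorithm~\Aone accepts on $G_i$.''

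Next I would invoke the hypothesis: the stream follows the uniform Dynamics, and by Lemma~\ref{uniformlemma} the stationary distribution of the associated Markov chain $M$ is the uniform distribution on simple graphs with the Zipf degree sequence, which is exactly the distribution $\mu$ of the uniform configuration model (up to the rejection-sampling remark in Definition~\ref{definition:configuration}). Hence each individual $G_i$ is distributed as in Theorem~\ref{t1} (or we simply take the assumed bound $\Pr[\text{Algorithm \Aone\ rejects on }G_i]\ge 1-\eta'$ as given in the statement). Therefore $\Pr[\text{Algorithm \Aone\ accepts on }G_i]\le\eta'$ for each fixed $i$.

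Finally, a union bound over the $\lfloor(t-\tau)/\lambda\rfloor$ windows gives
$$\Pr[\text{Algorithm 3 accepts}]\le \Big(\frac{t-\tau}{\lambda}\Big)\eta',$$
and taking complements yields $\Pr[\text{Algorithm 3 rejects}]\ge 1-\big((t-\tau)/\lambda\big)\eta'$, as claimed. The contrast with Corollary~\ref{t0d} is that there we could pass to $\lfloor\Delta/\tau\rfloor$ \emph{non-overlapping} windows to get independence and multiply the success probabilities (amplification); here, since all windows are ``bad'' in the same direction, we cannot amplify and must instead pay a union-bound factor over \emph{all} windows, including overlapping ones.

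The main subtlety — the only place where real care is needed — is that the successive $G_i$ are \emph{not} independent: consecutive windows share most of their edges, and the Markov chain has not necessarily mixed from an arbitrary start. A clean union bound needs only the marginal distribution of each $G_i$, not independence, so the argument goes through as long as each $G_i$ is (at least approximately) $\mu$-distributed; if one instead wants to start the chain from an arbitrary graph, one should either assume the chain is started from stationarity, or absorb an $o(1)$ mixing-time correction into $\eta'$. Apart from this modeling point, the corollary is an immediate union bound and carries no further difficulty.
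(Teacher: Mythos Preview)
Your proposal is correct and follows essentially the same approach as the paper: count the $\lfloor (t-\tau)/\lambda\rfloor$ windows and apply a union bound over the per-window acceptance events, using the static rejection guarantee on each $G_i$. The paper's own proof is just the one-line version of this (``the error probability is less than the sum of the errors for each window''); your additional remarks on why only the marginals matter and on stationarity versus mixing are welcome clarifications but not a different argument.
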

\begin{proof}
There are $\lfloor \frac{t-\tau}{\lambda}\rfloor $  windows and the error probability is less than the sum of the errors for each window.
\end{proof}

\subsection{Dynamic estimation of $S$}

Consider random graphs generated by a step Dynamics, i.e. a strategy which maintains an
$S$ concentration during   a time  interval  $\Delta \geq \tau$. In this case, we can improve the quality of the approximation of Theorem \ref{thm:estimation}. Assume  $|S\setminus \widehat{S}|\leq \rho.|S|$ for Algorithm 2, where $\rho \leq 1$.  Let $\eps$ be an arbitrary  tolerated  error, and say that $ \widehat{S}$  $\eps$-approximates $S$  if  $|S\setminus \widehat{S}|\leq \eps.|S|$  \\

{\bf Dynamic  very dense subgraph estimation  Algorithm 4 (t, $\eps$)}.
{\em Let $\widehat{S}=\emptyset$. Consider the first independent $I= \log(\eps) /\log(\rho)$  windows $w_i$ when $C_i$,  the  largest connected component of the reservoir $R_i$  at time $t_i <t$ is such that
 $|C_i| > n_i^{1/8}\ln^2 n_i$. Then  $\widehat{S}=\widehat{S}\cup 2\hbox{-core}(C_i)$}.

\begin{corollary}\label{t1d}
  For a stream  generated by the step Dynamics  during a time  interval  $\Delta \geq I.\tau$, 
Algorithm 4  will $\eps$-approximate  $S$ almost surely.

\end{corollary}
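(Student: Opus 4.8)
The plan is to amplify the single‑reservoir guarantee of Theorem~\ref{thm:estimation} using the same ``independent windows'' device already exploited in Corollary~\ref{t0d}, taking advantage of the fact that whether a given vertex of $S$ is captured by Algorithm~2 is a fresh random event each time a new reservoir is drawn. Since the definition of ``$\eps$‑approximates'' only constrains $|S\setminus\widehat S|$, the core of the argument is a boosting of the miss probability; the statement is most meaningful when $\rho$ is a constant (the constant‑$\alpha$ regime discussed just before Theorem~\ref{thm:estimation}).

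First I would set up the windows. During the $S$‑concentrated phase, of length $\Delta\geq I\tau$, there are at least $I$ pairwise non‑overlapping windows $w_{i_1},\dots,w_{i_I}$ contained in the phase (ignoring rounding, as elsewhere in the paper). Non‑overlapping windows share no edges, so under the step Dynamics the reservoirs $R_{i_1},\dots,R_{i_I}$ factor as (essentially) mutually independent samples. Inside the phase each $G_{i_j}$ satisfies the hypotheses of Theorem~\ref{tConcentratedSize}, so with high probability $|C_{i_j}|\geq n_{i_j}^{1/8}\ln^2 n_{i_j}$; a union bound over the $I=O(1)$ windows shows that with high probability Algorithm~4 passes the test in all of them and sets $\widehat S=\bigcup_{j=1}^{I}2\hbox{-core}(C_{i_j})$. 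Write $\widehat S_j=2\hbox{-core}(C_{i_j})$.

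Next I would establish a per‑vertex version of Theorem~\ref{thm:estimation}: for every fixed $v\in S$ and every $j$, $\Pr[v\notin\widehat S_j]\leq\rho$. Following the proof of Theorem~\ref{thm:estimation}, the restriction of $R_{i_j}$ to $S$ is distributed as $G(|S|,c_1/|S|)$ with $c_1=\alpha\delta$, and (absorbing the $O(n^{1/4})$ stray vertices reachable only through cycles of $V\setminus S$, Lemma~\ref{d1}, into the lower‑order error) the only way to miss $v$ is for $v$ to fall outside the $2$‑core of the giant component of $R_{i_j}|_S$; by the symmetry of the clique $S$ this probability is the same for all $v\in S$ up to lower order, hence equals $\E|S\setminus\widehat S_j|/|S|\leq\rho$, using the Gaussian limits of Theorem~\ref{er:bound} (the leading term being $1-b(1-t)$). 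Because the $I$ windows are independent,
$$\Pr[v\notin\widehat S]=\prod_{j=1}^{I}\Pr[v\notin\widehat S_j]\leq\rho^{I}=\rho^{\log\eps/\log\rho}=\eps,$$
so $\E|S\setminus\widehat S|\leq\eps|S|$. A concentration argument — the indicators $\mathbf 1[v\notin\widehat S]$, $v\in S$, being built from the same independent per‑window samples and only weakly correlated — then gives $|S\setminus\widehat S|\leq\eps|S|$ almost surely; if a clean concentration bound is awkward, enlarging $I$ by an additive $O(1)$ slack makes $\E|S\setminus\widehat S|=o(\eps|S|)$ and Markov finishes, which still fits inside $\Delta$. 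For the other side, $\widehat S\setminus S\subseteq\bigcup_j(\widehat S_j\setminus S)$ and $\E|\widehat S_j\setminus S|=o(|S|)$ by Lemmas~\ref{d1},~\ref{d2} and~\ref{d3}, so $\E|\widehat S\setminus S|=I\cdot o(|S|)=o(|S|)$ and almost surely $|\widehat S\setminus S|=o(|S|)$.

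The main obstacle is not the amplification, which is the one‑line identity $\rho^{\log\eps/\log\rho}=\eps$, but the two facts feeding into it: (i) that the ``$v$ is missed'' events in distinct non‑overlapping windows under the step Dynamics are sufficiently independent, i.e.\ that the reservoir distributions in disjoint windows factor up to lower‑order terms, so that their joint probability is at most $\rho^{I}(1+o(1))$; and (ii) that the single‑window miss probability is at most $\rho$ \emph{uniformly over $v\in S$}, not merely on average — this is where the clique symmetry of $S$ and the sharp Gaussian estimates behind Theorem~\ref{er:bound} are needed.
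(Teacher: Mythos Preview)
Your proposal is correct and follows essentially the same approach as the paper: at least $I$ independent (non-overlapping) windows, a per-vertex miss probability of at most $\rho$ in each, amplification via $\rho^{I}=\eps$, and the $V\setminus S$ side handled by the same lemmas feeding into Theorem~\ref{thm:estimation}. The paper's own proof is a three-line sketch that asserts the per-vertex bound and the independence across windows without the justifications you supply (clique symmetry for uniformity, disjointness for independence, and a concentration step to pass from $\E|S\setminus\widehat S|\leq\eps|S|$ to the almost-sure statement), so your write-up is strictly more complete while matching the paper's strategy.
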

\begin{proof}
There are at least $I$ independent windows where we have a large connected component. 
For each element in $S$, there is a probability $\rho$ not to be selected in $\widehat{S}$ for each window.
For $I$ independent windows, the probability not to be selected is $\rho^I= \eps$, hence a point
is selected with probability $1-\eps$. By Theorem  \ref{thm:estimation} very few nodes of $V-S$ are in $\widehat{S}$.
\end{proof}

\subsection{Implementation}

%\section{Worst case vs. probabilistic   \gdense subgraph}
An implementation of the method,  in \cite{RV18}, considers streams of edges  defined  by Twitter graphs\footnote{The nodes of a Twitter graph are the tags, either @x or \#t. A tweet sent by @x which contains the tags @y and \#t generates the edges $(@x,@y)$ and  $(@x,\#t)$. Given a tag or a set of tags, Twitter sends  in a stream all the tweets which contain one of the selected tags. The stream of tweets is transformed into a  stream of edges.} associated with tags (for example \#bitcoin or \#cnn).  With windows of length $\tau=1$ hour and  step $\lambda=30$ mins, there were  $m=20.10^3$ edges per window and the size $k$ of the Reservoir was $500$, of the order of $\sqrt{m}$. The \#bitcoin stream has a unique  giant component $C$ in the Reservoir with  approximately $100$ edges, i.e. a  compression factor of approximately $2.10^2$. In practice, a giant component in the Reservoir is the witness of a large $\gamma$-clique,  even though the coefficient $\gamma$ can be small.
 The  analysis of the variations of the sizes of the giant components over time is one of the motivations for the dynamic random graphs introduced in this paper. Classical methods  \cite{Ag10} to detect communities in social graphs use Modules and Spectral techniques, which require to store the entire graph.

%\section{Hidden Clique}

\bibliography{x1}
%\bibliographystyle{plain}
%\newpage
%\appendix

%\section{Dynamic  graphs}\label{section:dynamic}

%\subsection{Dynamics models}

\end{document}